\def\dOi{10(4:10)2014}
\subjclass{F.3.2 Semantics of Programming Languages}
\mathchardef\gt="313E 
\mathchardef\lt="313C 
\theoremstyle{definition}
\newtheorem{definition}[thm]{Definition}
\newtheorem{example}[thm]{Example}
\theoremstyle{plain}
\newtheorem{lemma}[thm]{Lemma}
\newtheorem{proposition}[thm]{Proposition}
\newtheorem{theorem}[thm]{Theorem}
\newcommand{\red}{\longrightarrow}
\newcommand{\lollipop}{\to}
\newcommand{\lolli}[3]{#1\cdot #2 \to #3}
\newcommand{\kw}[1]{\mathsf{#1}}
\newcommand{\NN}{\mathbb{N}}
\newcommand{\unit}{\mathtt{unit}}
\newcommand{\VN}{\mathtt{nat}}
\newcommand{\length}[1]{\mathit{length}(#1)}
\newcommand{\Capply}{\textit{apply}}
\newcommand{\Mid}{\ \mathrel{\big|} \ }
\newcommand{\I}[2]{#1 \colon #2}
\newcommand{\J}[3]{#1 \colon #2 \cdot #3}
\newcommand{\R}[1]{\textsc{#1}}
\newcommand{\SeqTm}[3]{#1 \vdash #2 \colon #3}
\newcommand{\SeqW}[3]{#1 \,\vdash\, #2\,\, {:}\,\, #3}
\newcommand{\SeqU}[3]{#1 \,\vdash\, #2\,\, {:}\,\, #3}
\newcommand{\LL}{\mathcal{L}}
\newcommand{\LT}{\mathcal{L}_T}
\newcommand{\LeftLabelSc}[1]{\LeftLabel{\textsc{#1}}}
\newcommand{\sleq}{\triangleleft}
\newcommand{\xto}{\xrightarrow}
\newcommand{\inl}{\kw{inl}}
\newcommand{\inr}{\kw{inr}}
\newcommand{\tlam}[2]{\lambda {#1}.\, {#2}}
\newcommand{\tlaml}[4]{\lambda^{#3} {#1}{:}{#2}.\, {#4}}
\newcommand{\tlamlu}[4]{\lambda^{#3} {#1}.\, {#4}}
\newcommand{\tlami}[3]{\lambda {#1}{:}{#2}.\, {#3}}
\newcommand{\tappl}[3]{{#1} {\,@_{#2}\,} {#3}}
\newcommand{\tlet}[3]{\kw{let}\ #2=#1\ \kw{in}\ #3}
\newcommand{\tif}[3]{\kw{if0}\ #1\ \kw{then}\ #2\ \kw{else}\ #3}
\newcommand{\tcase}[5]{\kw{case}\ #1\ \kw{of}\ \kw{inl}(#2) \Rightarrow #3;\, \kw{inr}(#4) \Rightarrow #5}
\newcommand{\tmcase}[5]{%
  \!\begin{aligned}[t]
  \kw{case}\ #1\ \kw{of}\ 
     \kw{inl}(#2) &\Rightarrow #3 \\[-.27em]
    ;\, \kw{inr}(#4) &\Rightarrow #5    
  \end{aligned}
  }
\newcommand{\fcase}[5]{\kw{case}\ #1\ \kw{of}\ #2 \Rightarrow #3 ;\, #4 \Rightarrow #5}
\newcommand{\tlist}[1]{#1\ \kw{list}}
\newcommand{\cps}[1]{\underline{#1}}
\newcommand{\TT}{\mathbb{T}}
\newcommand{\Int}[1]{\textrm{Int}(#1)}
\newcommand{\sem}[1]{\llbracket #1 \rrbracket}
\newcommand{\semc}[1]{\kw{Int}(#1)}
\newcommand{\iso}{\simeq}
\newcommand{\linexp}{\textsc{lin}$_\textsc{exp}$\xspace}
\newcommand{\stlexp}{\textsc{stl}$_\textsc{exp}$\xspace}
\begin{document}

\title[Interaction Semantics, Continuations and
  Defunctionalization] {On the Relation of Interaction Semantics to
  Continuations and Defunctionalization}

\author[U.~Schöpp]{Ulrich Schöpp}
\address{Ludwig-Maximilians-Universität München, Germany}
\email{Ulrich.Schoepp@ifi.lmu.de}

\keywords{CPS-translation, Defunctionalization,
Int-construction, Game Semantics, Geometry of Interaction}

\begin{abstract}
  \noindent
  In game semantics and related approaches to programming 
  language semantics, programs are modelled by interaction
  dialogues.  Such models have recently been used in the design of new compilation methods,  
  e.g.~for hardware synthesis or for programming with sublinear space.
  This paper relates such semantically motivated non-standard
  compilation methods to more standard techniques in the compilation
  of functional programming languages, namely continuation passing and
  defunctionalization. We first show for the linear $\lambda$-calculus that
  interpretation in a model of computation by interaction can be
  described as a call-by-name CPS-translation followed by a
  defunctionalization procedure that takes into account control-flow
  information. 
  We then establish a 
  relation between these two compilation methods for the 
  simply-typed $\lambda$-calculus and end by considering recursion.
\end{abstract}

\maketitle

\section{Introduction}

A successful approach in the semantics of programming languages is to model
programs using interaction dialogues.
It is fundamental to Game Semantics~\cite{HylandOng,AbramskyJM00},
the Geometry of Interaction~\cite{Girard89} and
related lines of research.
The idea goes back to the study of dialogical models of 
constructive logic~\cite{lorenzen1961dk},
which explain the meaning of a logical sentence 
by how one can attack and defend it in a 
debate~\cite{DBLP:journals/apal/Blass92}. A proof of a
sentence is a strategy for defending it against any possible attack.  In
programming language semantics, types take the place of sentences and attacks
can be seen as requests for information. The meaning of a program is
a strategy that explains how to answer any possible request.
Programs are interpreted compositionally, so that the answer
to a request depends only on how the parts of the programs answer
to suitable requests. Computation is thus modelled as an interaction
dialogue.

While interaction dialogues are typically considered as abstract mathematical
objects, it has also been argued that they are useful for \emph{implementing}
actual computation.  To compute the result of a program it is enough to have an
implementation of the strategy that interprets it, i.e.~a implementation that takes
requests as input and that computes the strategies' answer as output.  The
compositional definition of the interactive interpretation guides the
construction of such an implementation.  For example, one may implement the
strategy for each program part by a separate module.  The compositional
translation of programs explain how to assemble such modules to obtain the
implementation of a whole program.  The modules interact with each other by a suitable
form of message passing and implement the computation by playing out actual
interaction dialogues.
Implementations of this kind have been proposed 
for example in~\cite{DBLP:conf/tgc/FredrikssonG12,DBLP:conf/lics/FredrikssonG13,intml}.

One main motivation for studying the implementation of interaction models is 
to guide the design of compilation methods for programming languages. Interaction models are typically quite
concrete and suitable for implementation in simple low-level languages,
but, at the same time, they have rich structure and provide accurate models for
sophisticated programming languages, see e.g.~\cite{AbramskyJM00,HylandOng,DBLP:journals/apal/MurawskiT13}. 

The approach of using interactive semantics as an implementation technique for
programming languages has been  proposed in a variety of contexts. 
Mackie~\cite{DBLP:conf/popl/Mackie95} uses ideas from
the Geometry of Interaction for the implementation of functional
languages.
In later work it was noticed that such ideas are useful especially for
the implementation of functional languages with strong resource constraints.
Ghica et al. have developed methods for hardware synthesis
based on Game Semantics~\cite{Ghica07,DBLP:conf/icfp/GhicaSS11}. A 
related semantic approach has been
used to design a functional programming language for sublinear
space computation~\cite{intml}.
Other work has been motivated by the idea that strategies are implemented 
by communicating modules. 
This has inspired work on fully abstract translations from PCF
to the $\pi$-calculus, such as~\cite{DBLP:conf/fpca/HylandO95,BergerHY01}.
It has also been used to apply ideas from Game Semantics and the Geometry
of Interaction for distributed computing~\cite{DBLP:conf/tgc/FredrikssonG12}.
In another direction, the Geometry of Interaction is being used 
as a basis for structuring quantum computation~\cite{DBLP:conf/lics/HasuoH11,DBLP:conf/esop/YoshimizuHFL14}.
This list of examples is certainly not exhaustive; it illustrates the 
wide range of applications of the implementation of interactive dialogues.

The aim of this paper is to relate compilation methods based on 
interaction semantics to standard techniques in the
efficient compilation of functional programming languages.
It has been observed before, for example by Melli\`es~\cite{mellies12} and
Levy~\cite{cbpv}, that interaction models are related to
continuation passing, an important standard technique in the compilation
of functional programming languages~\cite{DBLP:books/cu/Appel1992}.
In this paper we make a further connection to
defunctionalization~\cite{reynolds}.

We consider the compilation of higher-order languages, such as PCF. 
A compiler would transform such a 
language to machine code by way of a number of intermediate languages. 
Typically, the higher-order source code would first be translated to 
first-order intermediate code,
from which the machine code is then generated. 
This paper is concerned with the first step,
the translation from higher-order to first-order code.
We show that the composition of two well-known transformations,
namely CPS-translation~\cite{plotkin} and defunctionalization~\cite{reynolds}, 
is closely related to an interpretation 
of the source language in a model that implements interaction dialogues.

The interactive model we study in this paper is an instance of 
the \emph{Int construction}~\cite{Joyal96}.
This model is very basic and captures only what is needed for its
 intended application as an implementation technique. 
We believe that it is a good choice, as the Int construction has been 
identified as the core of a number of interactive semantics, 
so that our results apply to a number of interactive models.
Indeed, in~\cite{DBLP:journals/mscs/AbramskyHS02} it was shown that 
the (particle style) Geometry of
Interaction can be seen as an instance of the Int construction with further
structure.  Abramsky-Jagadeesan-Malacaria (AJM) games~\cite{AbramskyJM00}
are also closely related to the Int construction.
AJM games refine the Int construction by removing 
unwanted interaction dialogues and by integrating a quotient to capture a good
notion of program equality, see the construction in~\cite{AbramskyJM00}.  
If one is interested only in implementing
strategies, then one may restrict ones attention to the core given only by the
Int construction.

In order to define an interpretation of a higher-order source language in an
interactive model given by the Int construction, we build on work reported in~\cite{intml}.
As the resulting interpretation implements call-by-name, we
relate it to a call-by-name CPS-translation -- a variant of the
one by Hofmann and Streicher~\cite{DBLP:conf/lics/HofmannS97}.

Let us outline concretely how CPS-translation, defunctionalization
and the interpretation in an interactive model are related
by looking at the very simple example of a function that increments a natural number:
$\tlami x \NN {1 + x}$.
We next outline how this function is translated by the two
approaches and how the results compare.

\subsection{CPS-Translation and Defunctionalization}

A compiler for PCF might first transform $\tlami x \NN {1 + x}$
into continuation passing style, perhaps apply some optimisations, and then use
defunctionalization to obtain a first-order intermediate program,
ready for compilation to machine language.

Hofmann and Streicher's call-by-name CPS-translation~\cite{DBLP:conf/lics/HofmannS97} 
translates the source term
$\tlami x \NN {1 + x}$ to 
$\lambda <x,k>.\, (\lambda k. k\ 1)\ (\lambda u.\, x\ (\lambda n.\, k\ (u+n)))
\colon \neg(\neg \neg \NN \times \neg \NN)$,
where we write $\neg A$ for $A\to \bot$.
This term defines a function, which takes as argument a pair $<x,k>$ of a continuation
$k\colon \neg \NN$ that accepts the result and
a variable $x\colon \neg\neg\NN$ that supplies the 
function argument. To obtain the actual function argument, one applies~$x$ to
a continuation (here $\lambda n.\, k\ (u+n)$) to 
ask for the actual argument to be thrown into the supplied 
continuation.

Defunctionalization~\cite{reynolds} translates this higher-order term into a
first-order program. The basic idea is to give each function a name 
and to pass around not the function itself, but only its name 
and the values of its free variables. 
To this end, each $\lambda$-abstraction is named with a unique label:
$\lambda^{l_1} <x,k>.\, (\lambda^{l_2} k. k\ 1)\ (\lambda^{l_3} u.\, x\ (\lambda^{l_4} n.\, k\ (u+n)))$.
The whole term defines the function named with label~$l_1$. 
It can be represented simply by the label~$l_1$. The 
function with label~$l_3$ has free variables~$x$ and~$k$ and 
is represented by the label 
together with the values of~$x$ and~$k$, which we write as~$l_3(x,k)$.

Each  application $s\ t$ is replaced by a 
procedure call $\Capply(s,t)$, as~$s$ is now only the name of a function and not a
function itself. The procedure~$\Capply$ is defined by case distinction on
the function name and behaves like the body of the respective
$\lambda$-abstraction in the original term. In the example,
we have the following definition of $\Capply$:
\[
\Capply(f, a) =
\begin{aligned}[t]
  \kw{case}\ f\ \kw{of}\
   &l_1\Rightarrow \tlet a {<x,k>} {\Capply(l_2, l_3(x,k))}\\
   \mid\ &l_2 \Rightarrow \Capply(a, 1)\\
   \mid\ &l_3(x,k) \Rightarrow \Capply(x, l_4(k,a))\\
   \mid\ &l_4(k,u) \Rightarrow \Capply(k, u + a)
\end{aligned}
\]
This definition should be understood as the recursive definition of 
a function $\Capply$ with two arguments. The definition is untyped, as in Reynold's original
definition of defunctionalization~\cite{reynolds}.

To understand concretely how this definition represents the original term, 
it is perhaps useful to see what happens when
a concrete argument and a continuation are supplied:
$ (\lambda^{l_1} <x,k>.\, (\lambda^{l_2} k. k\ 1)\ (\lambda^{l_3} u.\, x\ (\lambda^{l_4} n.\, k\ (u+n))))
\ <\lambda^{l_5} k.\, k\ 42,\, \lambda^{l_6} n.\, \texttt{print\_int}(n)>.
$
The definition of $\Capply$ then has two cases for $l_5$ and $l_6$
in addition to the cases above:
\[
\Capply(l, a) =
\begin{aligned}[t]
\kw{case}\ l\ \kw{of}\ 
   &\dots \\
   \mid\ &l_5 \Rightarrow \Capply(a, 42)\\
   \mid\ &l_6 \Rightarrow \texttt{print\_int}(n)
\end{aligned}
\]
The fully applied term defunctionalizes to $\Capply(l_1,<l_5,l_6>)$.
Executing it results in~$43$ being printed. 
When we evaluate $\Capply(l_1,<l_5,l_6>)$, the first case in the definition
of $\Capply$ applies and results in the call $\Capply(l_2,l_3(l_5,l_6))$.
For this call, the second case applies, so that the call 
$\Capply(l_3(l_5,l_6),1)$ is made. The computation continues in
this way with calls to
$\Capply(l_5, l_4(l_6,1))$,
$\Capply(l_4(l_6,1), 42)$,
$\Capply(l_6, 43)$, and finally
$\texttt{print\_int}(43)$.

This outlines a naive defunctionalization method for translating a
higher-order language into a first-order language with (tail) recursion. 
This method can be improved in various ways. The above
$\Capply$-function performs a case distinction on the function name
each time it is invoked.  However, in the example
it is possible to determine the label in the first argument of each
appearance of $\Capply$ statically, so that the 
case distinction is not necessary.
Instead, we may define one function $\Capply_l$ for each label~$l$
and replace $\Capply(l(x),a)$ by $\Capply_l(x,a)$.
The label~$l$ thus does not need to be passed as an argument anymore. A 
defunctionalization procedure that takes into account control flow
information in this way was introduced by Banerjee et
al.~\cite{banerjee}.
If we apply it to this example and moreover simplify the result by removing
unneeded function arguments, then we get four mutually recursive functions:
\begin{equation}
\begin{aligned}
  \label{eq:defun}
  \Capply_{l_1}() &= \Capply_{l_2}() &
  \Capply_{l_2}() &= \Capply_{l_3}(1) \\ 
  \Capply_{l_3}(u) &= \Capply_{l_5}(u) &
  \hspace{1.9cm}\Capply_{l_4}(u, n) &= \Capply_{l_6}(u+n) \\
\end{aligned}
\end{equation}
The term itself simplifies to $\Capply_{l_1}()$.
The interface where these equations interact with the environment 
consists of the labels $l_1$ (the entry label), $l_6$ (the return label), 
$l_5$ (the entry label for argument function~$x$) and $l_4$ (the return label for
the argument function~$x$). 
Applying the term to concrete arguments as above 
amounts to extending the environment with the following equations:
\begin{align*}
  \Capply_{l_5}(u) &= \Capply_{l_4}(u,42) 
  & \Capply_{l_6}(n) &= \texttt{print\_int}(n)
\end{align*}

The point of this paper is that the program (\ref{eq:defun}) is just  
what we get from interpreting the source term in a model of computation by
interaction.

\subsection{Interpretation in an Interactive Computation Model}
\label{sect:particle}

In computation by interaction the general idea is to study
models of computation that interpret programs by interaction dialogues 
and to consider actual
implementations of such dialogue interaction. 
For example, a function of type $\NN\to \NN$ may be implemented
in interactive style by a program that, for a suitable type~$S$, takes
as input a value of type $\unit + (S\times \VN)$ and gives as output a value
of type $\VN + (S\times \unit)$. The input $\inl(<>)$ to this program is
interpreted as a request for the return value of the function.  An
output of the form~$\inl(n)$ means that~$n$ is the requested value.  If
the output is of the form~$\inr(s, <>)$, however, then this means that
the program would like to know the argument of the function. 
It also requests that the value~$s$ be returned along with the answer.
Programs here do not have state and 
have no persistent memory to store any data until 
a request is answered.
The program can however encode any data that it needs later in the value~$s$
and ask for this value to be returned unchanged with the answer to its
request. 
For the outside, the value~$s$ is opaque. We do not know anything about 
what is encoded in the value~$s$, only that we have to give it back with the 
answer to the request. 
To answer the program's request, we pass
a value of the form $\inr(s,m)$, where~$m$ is our answer.

The particular function $\tlami x \NN {1 + x}$ 
is implemented by the program specified in the following diagram,
where $S$ is $\VN$.
This diagram is to be understood so that one may pass a message along
any of its input wires. The message must be a value of the type labelling
the wire. When a message arrives at an input of box, the box will react 
by sending a message on one of its outputs. Thus, at any time there is
one message in the network. Computation ends when a message is passed
along an output wire. 
\begin{center}
  \includegraphics{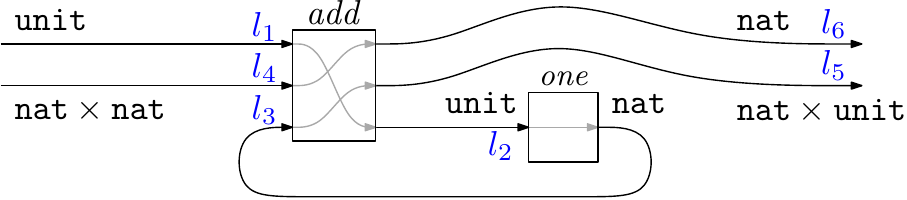}
\end{center}
In this diagram, 
$\textit{add}$ has three input ports of type $\unit$, $\VN\times \VN$ and $\VN$
respectively from top to bottom.
It may output a message on one of its three output ports, which
have type $\VN$, $\VN \times \unit$ and $\unit$ from top to bottom.
Its behaviour is given as follows:
if it receives message $<>$ on the topmost input port (a request for the sum), 
then it outputs $<>$ on the bottom output port (a request to provide the first
summand); 
if it receives $n$ on the bottom input port (the first summand),
then it outputs $<n,<>>$ on the middle output port (a request to provide the second
summand and to hold on to the first summand until the request is answered); and if
it receives $<n,m>$ on the middle input port (both summands), it outputs $n+m$ 
on the topmost output port.
The box labelled~$\textit{one}$ 
maps the 
request~$<>$ to the number~$1$.

This interactive implementation of $\tlami x \NN {1 + x}$ may be
described as the interpretation of the term in a semantic
model $\Int\TT$ built by applying the general categorical Int construction 
to a category~$\TT$ that is constructed from the target language,
see Section~\ref{sect:linear}.

Compare the above interaction diagram to the definitions in~(\ref{eq:defun}) obtained 
by defunctionalization.
The labels $l_1$, $l_4$ and $l_3$ there correspond to the
three input ports of the $\textit{add}$-box (from top to bottom), $l_2$ is
the input of box~$\textit{one}$, and $l_5$ and $l_6$ are the
destination labels of the two outgoing wires. 
One may consider the $\Capply$-definitions in~(\ref{eq:defun}) a particular 
implementation of the diagram, where a call to $\Capply_l(m)$ means 
that message~$m$ is sent to point~$l$ in the diagram.
A naive implementation would introduce a label for the end of each
arrow in the diagram and implement the message passing accordingly.

\subsection{Overview}

The subject of this paper is the relation of the two translations that
we have just outlined.
The paper studies the following situation of two
translations from a source language that is a variant of 
PCF into a simple first-order target language with tail recursion.
\medskip
\begin{center}
\includegraphics{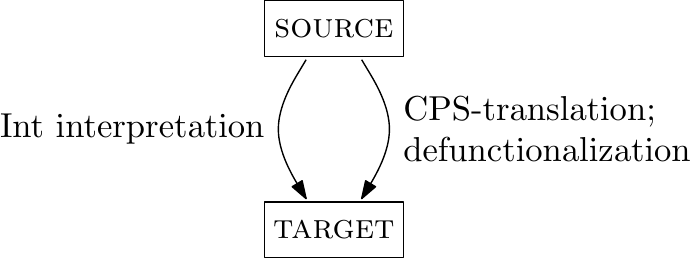}
\end{center}
\medskip

After giving definitions of the source and target language in the following
two sections, we find it useful to present and analyse the above situation step by step.
We define the two translations and study their relationship for a number of 
fragments of the source language of increasing strength.
\[
\textsc{core}
\ \subseteq\
\textsc{lin}
\ \subseteq\
\textsc{stl}
\ \subseteq\
\textsc{source}
\]

In Section~\ref{sect:linear} we start by studying the translation for the
source fragment \textsc{core}.
This fragment contains just a bare minimum of 
linear $\lambda$-abstraction and application. 
It is nevertheless instructive to consider this fragment as
a setting in which to develop the infrastructure for the translation of
higher-order functions. 

In Section~\ref{sect:base} we consider the fragment \textsc{lin}, which
extends \textsc{core} with a base type of natural numbers. 
Rather than studying the above situation directly with \textsc{lin} in
place of \textsc{source}, we argue that it is useful to take a detour
over a calculus \linexp, which is a version of \textsc{lin} with additional 
type annotations. 
These type annotations are useful for understanding the Int-interpretation.

In Section~\ref{sect:stl} we then add contraction and come to
the simply typed fragment \textsc{stl} of the source language.
We first continue to use additional type annotations and extend 
\linexp to \stlexp. We then come back to the unannotated source
fragment \textsc{stl} by showing how \textsc{stl} can be translated
into \stlexp (Prop.~\ref{prop:stl}). 

In Section~\ref{sect:fix} we finally then extend the translation to the
full source language by adding recursion.

\section{Target Language}
\label{sect:target}

Programs in the target language consist of mutually tail-recursive definitions
of first-order functions, such as the $\Capply$-equations above. 
One should think of the target language as a simple variant
of SSA-form compiler intermediate 
languages, e.g.~\cite{DBLP:conf/esop/CejtinJW00}, in which
function definitions are often presented as labelled blocks that
end with a jump to a label.

The target language does not model function calls or a calling
convention; 
it models only what one would use for the compilation of a
single unit.
Certain function labels are designated as entry or exit
points. In the following example target program
the labels $\textit{const}$ and $\textit{pow}$ are
intended as entry points.
\[
\begin{aligned}
  {\textit{const}}(x) &= {\textit{const\_ret}}(23) \\
  {\textit{pow}}(<x,y>) &= {\textit{pow\_loop}}(<x,y>) \\
  {\textit{pow\_loop}}(<x,y>) &= \tmcase {\kw{iszero}(x)} {z} {{\textit{pow\_ret}}(y)} {z} {{\textit{pow\_loop}}(<x-1,y*y>)}
\end{aligned}
\]
The function labels
$\textit{const\_ret}$ and $\textit{pow\_ret}$ are exit points that
are assumed to be defined externally and that are used to return the 
results of computations. 

A target program will be a set of equations
together with lists of entry and exit labels that specify the
interface of the program. Target programs are defined in detail in the
rest of this section. Upon first reading, the reader may wish to skim
this section only.

Target programs are typed.
The set of \emph{target types} is defined by the grammar below.
Recursive types will be needed at the end of Section~\ref{sect:stl} only.
\emph{Target expressions} are standard terms for these types, see
e.g.~\cite{pierce}:
\begin{align*}
  \text{Types: } && 
   A, B &::= \alpha \Mid \unit \Mid \VN \Mid A \times B \Mid A + B \Mid \mu \alpha.\, A\\
  \text{Expressions: } && e,e_1,e_2 &::= 
    \begin{aligned}[t]
      x &\Mid <>
      \Mid n \Mid e_1 + e_2 \Mid \kw{iszero}(e) \\
      &
      \Mid <e_1, e_2> \Mid \tlet {e_1} {<x, y>} {e_2}\\
      &
      \Mid \kw{inl}(e) \Mid \kw{inr}(e) 
      \Mid \tcase {e} x {e_1} y {e_2}\\
      &
      \Mid \kw{fold}_A(e) \Mid \kw{unfold}_A(e)
    \end{aligned}
  \end{align*}
In the syntax, $\alpha$ ranges over type variables, $x$ over expression variables, 
and~$n$ over natural numbers as constants. 
We identify terms up to renaming of bound variables. The term 
$\tlet {e_1} {<x, y>} {e_2}$ binds the variables~$x$ and~$y$ in~$e_2$ and
$\tcase {e} x {e_1} y {e_2}$ binds variable~$x$ in~$e_1$ and variable~$y$ in~$e_2$.
The term
$\kw{iszero}(e)$ is intended to have type $\unit+\unit$, with $\kw{inl}(<>)$
representing true.

We remark that the type~$\VN$ is used solely to encode values of the source
type of natural numbers~$\NN$. 
For applications to compilation, one may be
interested in restricting the natural numbers to, say, 64-bit integers.
Such a restriction can be made without affecting the results in this paper. 

Target expressions are typed with a standard type system, see Figure~\ref{fig:targettyping}.
A judgement $\SeqTm{\Gamma}{e}{A}$ therein expresses that~$e$ has type~$A$ in
context~$\Gamma$, where~$\Gamma$ is a finite mapping from variables to target
types.

For convenience, we allow ourselves ML-like data type notation
for working with recursive types. For example, for a type of 
lists we may write 
\[
  \beta\ \kw{list} = \kw{datatype}\ \kw{nil}\ \kw{of}\ \kw{unit} \mid
  \kw{cons}\ \kw{of}\ \beta \times (\beta\ \kw{list})
\]
instead of $\mu \alpha. \, \kw{unit} + \beta \times \alpha$, as 
$\kw{cons}(x,l)$ is
more readable than
$\kw{fold}_{\mu \alpha. \, \kw{unit} + \beta \times \alpha}(\kw{inr}(<x, l>))$.

For the operational semantics of target expressions we define
a standard call-by-value small-step reduction relation. We use 
the concepts of \emph{target values} and \emph{evaluation contexts}:
\begin{align*}
   \text{Values:}&&
   v, w &::= <> 
   \Mid n
   \Mid <v, w>
   \Mid \kw{inl}(v) \Mid \kw{inr}(v) 
   \Mid \kw{fold}_A(v)
   \\
   \text{Evaluation Contexts:}&&
   C &\mathrel{::=} [] \Mid C + e
   \Mid v + C
   \Mid \kw{iszero}(C)
   \\
 &&&\mathrel{\phantom{::=}} \phantom{[]} \Mid <C, e> \Mid <v, C>
   \Mid \tlet {C} {<x, y>} {e}
   \\
   &&&\mathrel{\phantom{::=}} \phantom{[]} 
   \Mid \kw{inl}(C) \Mid \kw{inr}(C)
   \Mid \tcase {C} x {e_1} y {e_2}
\end{align*}
The small-step reduction relation is then defined to be the
smallest relation $\red$ satisfying the following clauses:
\begin{align*}
  n_1 + n_2 &\red n_3 \text{ if~$n_3$ is the sum of~$n_1$ and~$n_2$}\\
  \kw{iszero}(0) &\red \kw{inl}(<>)
  \\
  \kw{iszero}(n) &\red \kw{inr}(<>) \text{ if~$n$ is non-zero}
  \\
  \tlet{<v_1,v_2>}{<x,y>}{e} &\red e[v_1/x, v_2/y]
  \\
  \tcase{\kw{inl}(v)} x {e_1} y {e_2} &\red e_1[v/x] 
  \\
  \tcase{\kw{inr}(v)} x {e_1} y {e_2} &\red e_2[v/x] 
  \\
  \kw{unfold}(\kw{fold}(v)) & \red v
  \\
  C[e_1] &\red C[e_2] \text{ if $e_1 \red e_2$}
\end{align*}
\begin{proposition}
  For each\/ $\SeqTm{\ }{e}{A}$ there exists a unique value~$v$
  satisfying $e \red^* v$.
\end{proposition}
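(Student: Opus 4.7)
The statement combines three standard properties: progress, preservation, and termination, together with determinism of reduction. My plan is to prove them in that order and then combine.

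First I would establish the usual canonical forms lemma: a closed value of type $\unit$ must be $<>$, of type $\VN$ must be some $n$, of type $A\times B$ must be a pair $<v,w>$, of type $A+B$ must be $\kw{inl}(v)$ or $\kw{inr}(v)$, and of type $\mu\alpha.A$ must be $\kw{fold}_A(v)$. Each case is immediate from inspecting the typing rules in Figure~\ref{fig:targettyping}. Using this, progress follows by a routine induction on the derivation of $\vdash e : A$: either $e$ is already a value, or $e = C[e_0]$ where $e_0$ is a redex, by analysing the position of the leftmost non-value subterm and appealing to canonical forms to identify the matching reduction rule. Preservation is equally routine, by cases on the reduction rule and the typing of the redex, using a standard substitution lemma for $\tlet{-}{-}{-}$ and $\kw{case}$.

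Next I would argue termination. The crucial observation is that the target expression language has no function abstraction or application at the expression level; all recursion lives in the top-level definitions of the target program, which are not part of expressions. Consequently, reduction on closed expressions is strongly normalising by a simple size argument: I would define a weight $|e|\in\NN$ which counts constructors in such a way that every reduction rule strictly decreases $|e|$. The only non-obvious cases are the substitution-based reductions $\tlet{<v_1,v_2>}{<x,y>}{e}\red e[v_1/x,v_2/y]$ and the $\kw{case}$ rules, and for these it suffices to weight the binder slightly more than the cost of the substituted values (for instance by assigning $\tlet{e_1}{<x,y>}{e_2}$ weight $1+|e_1|+|e_2|+k$ for a constant $k$ exceeding the size of any value substitutable for the bound variables in the well-typed setting). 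Since termination is only needed for well-typed closed expressions, and since preservation keeps us in that setting, this measure is well-defined along any reduction sequence. Combined with progress, this yields a value $v$ with $e \red^* v$.

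Finally I would prove uniqueness by showing $\red$ is deterministic on closed terms. The evaluation contexts fix a left-to-right call-by-value strategy, so for any non-value $e$ there is a unique decomposition $e = C[e_0]$ with $e_0$ a redex, and each redex matches exactly one reduction clause. Hence normal forms are unique when they exist, and by the previous step they do, giving the proposition.

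The only mild obstacle is choosing the size measure for termination so that the substitution-based reductions strictly decrease it; this is routine because no substituted value contains a binder that can trigger further growth (there are no $\lambda$-like constructs in expressions), so a lexicographic or sufficiently weighted constructor-counting measure suffices.
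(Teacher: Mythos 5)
The paper states this proposition without proof, so the only question is whether your outline actually goes through. Your overall decomposition --- canonical forms, progress, preservation, determinism via unique evaluation-context decomposition, and a termination argument exploiting the absence of abstraction in target expressions --- is the expected one. The step that does not survive scrutiny is the termination measure. You weight each binder with an additive constant $k$ ``exceeding the size of any value substitutable for the bound variables in the well-typed setting''. First, no such constant exists in general: the target language has recursive types $\mu\alpha.\,A$ (and they are genuinely used later in the paper), so closed values of a fixed type, e.g.\ of $\mu\alpha.\,\unit+\alpha$ or $\tlist\VN$, can be arbitrarily large; moreover, duplication by $\kw{let}$ and $\kw{case}$ can make the values actually substituted along a reduction sequence grow beyond anything occurring in the initial expression, so no constant fixed in advance dominates them. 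Second, even when a per-type bound does exist, an additive constant does not compensate for multiple occurrences of the bound variables: for $\tlet{<v_1,v_2>}{<x,y>}{e} \red e[v_1/x, v_2/y]$ with $x$ occurring several times in $e$, the right-hand side can exceed $1+|<v_1,v_2>|+|e|+k$. So the measure as proposed need not decrease.

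The repair is cheap and is really just the sharpened form of your own observation that ``no substituted value contains a binder that can trigger further growth'': count only elimination forms ($+$, $\kw{iszero}$, $\kw{let}$, $\kw{case}$, $\kw{unfold}$) rather than all constructors. Values contain no elimination forms, so substituting values --- however large and however often duplicated --- contributes nothing to the count, while every base reduction removes exactly one elimination form (and discarding a $\kw{case}$ branch only helps); closure under evaluation contexts preserves the strict decrease. With that measure, termination holds for all closed expressions without any appeal to typing or to bounds on value sizes, and combined with your progress, preservation and unique-decomposition arguments (values being normal forms gives uniqueness of the result) the proposition follows.
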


\begin{figure}[t]
  \begin{prooftree}
    \AxiomC{$\I x A$ in $\Gamma$}
    \UnaryInfC{$ \SeqTm{\Gamma}{x}{A} $}
    \AxiomC{}
    \UnaryInfC{$ \SeqTm{\Gamma}{<>}{\unit} $}
    \alwaysNoLine
    \BinaryInfC{}
  \end{prooftree}
  \begin{prooftree}
    \AxiomC{}
    \UnaryInfC{$ \SeqTm{\Gamma}{n}{\VN} $}
    \AxiomC{$ \SeqTm{\Gamma}{e_1}{\VN} $}
    \AxiomC{$ \SeqTm{\Gamma}{e_2}{\VN} $}
    \BinaryInfC{$ \SeqTm{\Gamma}{e_1 + e_2}{\VN} $}
    \AxiomC{$ \SeqTm{\Gamma}{e}{\VN} $}
    \UnaryInfC{$ \SeqTm{\Gamma}{\kw{iszero}(e)}{\unit + \unit} $}
    \alwaysNoLine
    \TrinaryInfC{}
  \end{prooftree}
  \begin{prooftree}
    \AxiomC{$ \SeqTm{\Gamma}{e_1}{A} $}
    \AxiomC{$ \SeqTm{\Gamma}{e_2}{B} $}
    \BinaryInfC{$ \SeqTm{\Gamma}{<e_1, e_2>}{A\times B} $}
    \AxiomC{$ \SeqTm{\Gamma}{e_1}{A\times B} $}
    \AxiomC{$ \SeqTm{\Gamma,\, \I x A,\,\I y B}{e_2}{C} $}
    \BinaryInfC{$ \SeqTm{\Gamma}{\tlet{e_1}{<x,y>}{e_2}}{C} $}
    \alwaysNoLine
    \BinaryInfC{}
  \end{prooftree}
  \begin{prooftree}
    \AxiomC{$ \SeqTm{\Gamma}{e}{A} $}
    \UnaryInfC{$ \SeqTm{\Gamma}{\kw{inl}(e)}{A + B} $}
    \AxiomC{$ \SeqTm{\Gamma}{e}{B} $}
    \UnaryInfC{$ \SeqTm{\Gamma}{\kw{inr}(e)}{A + B} $}
    \alwaysNoLine
    \BinaryInfC{}
  \end{prooftree}
  \begin{prooftree}
    \AxiomC{$ \SeqTm{\Gamma}{e}{A + B} $}
    \AxiomC{$ \SeqTm{\Gamma,\, \I x A}{e_1}{C} $}
    \AxiomC{$ \SeqTm{\Gamma,\, \I y B}{e_2}{C} $}
    \TrinaryInfC{$ \SeqTm{\Gamma}{\tcase {e} x {e_1} y {e_2}}{C} $}
  \end{prooftree}
  \begin{prooftree}
    \AxiomC{$ \SeqTm{\Gamma}{e}{A[\mu \alpha.\,A/\alpha]} $}
    \UnaryInfC{$ \SeqTm{\Gamma}{\kw{fold}_{\mu \alpha.A}(e)}{\mu \alpha. A} $}
    \AxiomC{$ \SeqTm{\Gamma}{e}{\mu \alpha. A} $}
    \UnaryInfC{$ \SeqTm{\Gamma}{\kw{unfold}_{\mu \alpha.A}(e)}{A[\mu \alpha.\,A/\alpha]} $}
    \alwaysNoLine
    \BinaryInfC{}
  \end{prooftree}
  \caption{Typing of Target Expressions}
  \label{fig:targettyping}
\end{figure}

Having defined target expressions, we are now ready to define target
\emph{programs}. These consist of a set of first-order function
definitions.
Fix an infinite set~$\LL$ of function labels.

\begin{definition}  
  \label{def:def}
  A \emph{function definition} for label~$f\in \LL$ is given by an
  equation of one of the two forms
  \begin{align*}
    f(x) &= g(e),
    &
    f(x) &= \tcase e y {g(e_1)} z {h(e_2)},
\end{align*}
  wherein~$g,h\in\LL$ and $e$, $e_1$ and $e_2$ range over target expressions. 
\end{definition}
We allow ourselves to use syntactic sugar, writing
$f()$ for $f(<>)$
and $f(x,y) = t$ for $f(z) = t[\tlet z {<x,y>} {x}/x, \tlet z {<x,y>}
{y}/y]$, for example.

\begin{definition}
  A~\emph{target program} $P=(i,D,o)$ 
  consists of a set~$D$ of function definitions
  together with a list~$i\in\LL^*$ of entry labels
  and a list~$o\in\LL^*$ of exit labels. Both~$i$
  and~$o$ must be lists of 
  pairwise distinct labels.
  The set~$D$ of definitions must contain at most
  one definition for any label and must not contain
  any definition for the labels in~$o$.
\end{definition}  
The list~$i$ assigns an order to
the function labels that may be used as entry points for the program
and~$o$ identifies external labels as return points.

We use an informal graphical notation for target programs, depicting 
for example the program
$(\textit{const}\,\, \textit{pow}\,\, \textit{f}, D,
\textit{const\_ret}\,\, \textit{pow\_ret})$
as shown below.\\
\begin{center}
\includegraphics{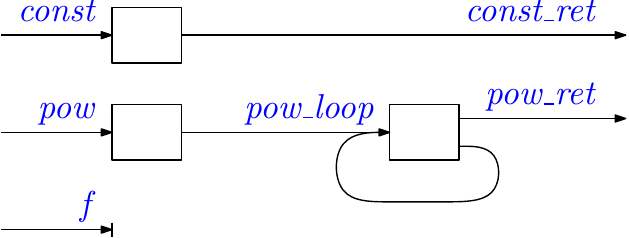}
\end{center}
The boxes correspond to the function definitions.
The arrows indicate for example that one may send a value~$v$ to
label~$\mathit{pow}$, which amounts to the function call $\mathit{pow}(v)$.
As a result a value will be sent to~$\mathit{pow\_loop}$. 

Target programs are well-typed if 
each function symbol~$f$ can be assigned an argument type~$A(f)$
such that each definition is well-typed:
A definition of the form $f(x) = g(e)$ is well-typed if
$\SeqTm{\I x {A(f)}}{e}{A(g)}$ is derivable; and 
a definition of the form
$f(x)= \tcase e y {g(e_1)} z {h(e_2)}$
is well-typed if 
$\SeqTm{\I x {A(f)}}{e}{C_1+C_2}$ and
$\SeqTm{\I y {C_1}}{e_1}{A(g)}$ and
$\SeqTm{\I z {C_2}}{e_2}{A(h)}$
are derivable for some~$C_1$ and~$C_2$.
If~$P$ is the program $(f_1\dots f_n, D, g_1\dots g_m)$, 
then we write $P\colon (A_1 \dots A_n) \to (B_1\dots B_m)$
if the argument types of $f_1,\dots, f_n,g_1,\dots, g_m$ are 
$A_1,\dots, A_n,B_1,\dots,B_m$
respectively.

We define a simple evaluation semantics for target programs.
A \emph{function call} is an expression of the form $f(v)$, 
where~$f$ is a function label and~$v$ is a value.
A relation $\red_P$ formalises the function calls
as they happen during the execution of a program~$P$.
It is the smallest relation satisfying the following 
conditions: if~$P$ contains
a definition $f(x)=g(e)$ 
then $f(v) \red_P g(w)$ for all values $v$ and~$w$ with $e[v/x] \red^* w$;
and if~$P$
contains a definition $f(x)=\tcase e y {g(e_1)} z {h(e_2)}$
then $f(v) \red_P g(w)$ 
for all values~$v$ and~$w$ with $\exists u.\, e[v/x] \red^* \kw{inl}(u) \land e_1[u/y]\red^* w$,
and $f(v) \red_P h(w)$
for all values $v$ and~$w$ with $\exists u.\, e[v/x] \red^* \kw{inr}(u) \land e_2[u/z]\red^* w$.

A \emph{call-trace} of program~$P$ is a sequence
$f_1(v_1)f_2(v_2)\dots f_n(v_n)$, such that 
$f_i(v_i) \red_P f_{i+1}(v_{i+1})$ holds for all $i\in \{1,\dots, n-1\}$.

\begin{definition}[Program Equality]
\label{def:eq}
Two programs 
$P, Q\colon (A_1\dots A_n) \to (B_1\dots B_m)$ 
are \emph{equal} if, for any input, they give the same output, that is, 
suppose the entry labels of~$P$ and~$Q$ are
$f_1,\dots,f_n$ and 
$g_1,\dots,g_n$ respectively and
the exit labels are 
$h_1,\dots,h_m$ and 
$k_1,\dots,k_m$ respectively, then,
for any $v$, $w$, $i$ and $j$, 
$P$ has a call-trace of the form $f_i(v)\dots h_j(w)$
if and only if $Q$ has a call-trace 
of the form $g_i(v)\dots k_j(w)$.
\end{definition}
Programs are thus equal, if the same input value on the same input
port leads to the same output value (if any) on the same output port in both
programs.

The following notation is used in Section~\ref{sect:base}.
For any list of target types $X=B_1\dots B_n$ and any target type~$A$,
we write $A\cdot X$ for the list $(A\times B_1)\dots (A\times B_n)$.
Given a program $P\colon X \to Y$, we write 
$A\cdot P \colon A\cdot X \to A\cdot Y$ for the program 
that passes on the value of type~$A$ unchanged and otherwise behaves like~$P$.
It may be defined by replacing each definition of the form $f(x)=g(e)$ in~$P$ 
with $f(u,x)=g(u,e)$ for a fresh variable~$u$, 
and each definition of the form $f(x) = \tcase e y {g(e_1)} z {h(e_2)}$
with $f(u,x) = \tcase e y {g(u,e_1)} z {h(u,e_2)}$, again for fresh~$u$.

We observe that target programs can be organised into a category~$\TT$ 
that has enough structure so that we can apply
the Int construction~\cite{Joyal96,Hasegawa09} (with respect to coproducts) to it and
obtain a category $\Int\TT$ that models interactive computation.

Target programs can be organised into a category\/~$\TT$. Its 
objects are finite lists of target types. A morphism 
from~$X$ to~$Y$ is given by a program $P \colon X\to Y$.
Two programs $P \colon X\to Y$ and $Q \colon X\to Y$ represent the same
morphism if and only if they 
are equal in the sense of Definition~\ref{def:eq}.
Thus, the morphisms from~$X$ to~$Y$ of~$\TT$ are the equivalence classes
of programs of type $X\to Y$ with respect to program equality.
\begin{lemma}
  $\TT$ is a category.
\end{lemma}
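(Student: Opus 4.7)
The plan is to exhibit identities and a composition operation on equivalence classes of programs, then to verify the category axioms. Well-definedness of composition with respect to program equality will be the main thing to check.

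For the identity on $X = A_1 \dots A_n$, I take the program $\id_X$ with $2n$ pairwise distinct labels: entry labels $f_1,\dots,f_n$, exit labels $g_1,\dots,g_n$, and definitions $f_i(x) = g_i(x)$ for each $i$. This is a legal target program since no definition is given for any exit label, and the typing is immediate with $A(f_i)=A(g_i)=A_i$. For composition of $P \colon X \to Y$ and $Q \colon Y \to Z$, I pick representatives and first $\alpha$-rename all labels in $P$ and $Q$ that are not the designated entry labels of $P$ or the designated exit labels of $Q$, using names fresh on both sides; in particular the $m$ exit labels $h_1,\dots,h_m$ of $P$ and the $m$ entry labels $h'_1,\dots,h'_m$ of $Q$ get renamed so that they coincide, say, at new fresh names $h_1,\dots,h_m$. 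The composite $Q \circ P$ is then $(i_P,\, D_P \cup D_Q,\, o_Q)$ with the renamed definition sets. The side conditions in the definition of target program hold because the renaming makes the union of definition sets contain at most one definition per label, and the exit labels $o_Q$ of $Q$ were not defined in $D_Q$ and, being fresh, are not defined in $D_P$ either; the types match by construction since the interface type lists agree on both sides.

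Associativity and the unit laws are straightforward at the level of call-traces. For the units, a call-trace of $\id_Y \circ P$ starting at $f_i(v)$ and ending at an exit label of $\id_Y$ is exactly a call-trace of $P$ with one extra forwarding step appended, and similarly on the left; matching exit values therefore coincide, so Definition~\ref{def:eq} yields $\id_Y \circ P = P$ and $P \circ \id_X = P$. For associativity, one observes that $(R \circ Q) \circ P$ and $R \circ (Q \circ P)$ have, up to relabelling of interior labels, identical sets of definitions and identical entry/exit lists; since program equality is insensitive to renaming of non-interface labels (call-traces are preserved by bijective substitution on internal labels), the two composites represent the same morphism.

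The main step is to check that composition is well-defined on equivalence classes, i.e.\ that if $P \equiv P'$ and $Q \equiv Q'$ then $Q \circ P \equiv Q' \circ P'$. The key is a trace-decomposition lemma: because the renamed definition sets $D_P$ and $D_Q$ are disjoint and every definition in $D_P$ has a right-hand side label in $D_P \cup \{h_1,\dots,h_m\}$ while every definition in $D_Q$ refers only to labels in $D_Q \cup o_Q$, once control enters $D_Q$ it cannot return to $D_P$. Hence any call-trace of $Q \circ P$ from an entry label $f_i(v)$ to an exit label $k_j(w)$ factors uniquely as a call-trace of $P$ from $f_i(v)$ to some interface call $h_l(u)$ followed by a call-trace of $Q$ from $h'_l(u)$ to $k_j(w)$; conversely, any such pair of traces glues to a trace of $Q \circ P$. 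Definition~\ref{def:eq} says exactly that $P \equiv P'$ and $Q \equiv Q'$ preserve such input/output behaviour at the interfaces, so the sets of entry-to-exit traces of $Q \circ P$ and $Q' \circ P'$ coincide. This gives well-definedness, and together with the identities and the axiom checks above concludes that $\TT$ is a category.
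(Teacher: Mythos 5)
Your proof is correct and follows essentially the same route as the paper's proof outline: the identity is given by a trivial program and composition is formed by renaming labels so that the exit interface of $P$ coincides with the entry interface of $Q$ and then taking the union of the definition sets. The only cosmetic difference is that you realise the identity as a forwarding program $f_i(x)=g_i(x)$ rather than the paper's empty-definition program $(f_1\dots f_n,\emptyset,f_1\dots f_n)$ (the two are equal as morphisms of $\TT$), and you additionally spell out the trace-decomposition argument for well-definedness, units and associativity, which the paper leaves implicit.
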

\begin{proof}[Proof outline]
  The identity on $X = A_1\dots A_n$ is the program 
  $(f_1\dots f_n,\emptyset, f_1\dots f_n)$.
  For the composition of $P\colon X \to Y$ and $Q\colon Y \to Z$,
  we first note that we can rename the labels~$P$ and~$Q$ such
  that we have $P=(i,D_P, m)$ and
  $Q=(m,D_Q,o)$.
  The composition $Q\circ P \colon X\to Z$ is then given simply by
  the program $(i, D_P\cup D_Q, o)$.
\end{proof}

\begin{lemma}
  The category\/~$\TT$ has finite coproducts, such that the initial object
  $0$ is given by the empty list and the object $X+Y$ is given by the
  concatenation of the lists~$X$ and~$Y$. Moreover, $\TT$ has 
  a uniform trace~\cite{Hasegawa09} with respect to these coproducts.
\end{lemma}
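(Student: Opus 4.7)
The empty list is initial: any program $P\colon 0\to X$ has an empty entry list, so no call-trace starts at an entry, and by Definition~\ref{def:eq} $P$ is equivalent to the trivial program with no definitions.

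For the coproduct structure, concatenation $X+Y$ is the sum. The injection $\iota_1\colon X\to X+Y$ is realised by a program with no definitions whose entry labels are the first $|X|$ labels of its exit list (and symmetrically for $\iota_2$); these are straight-through wires in the same spirit as the identity program. Given $P\colon X\to Z$ and $Q\colon Y\to Z$, the copairing $[P,Q]\colon X+Y\to Z$ is obtained by first $\alpha$-renaming the internal labels of $P$ and $Q$ to be disjoint while retaining the shared exit labels for $Z$, then forming the program whose entry list is the concatenation of entries of $P$ and $Q$, whose exit list is $Z$, and whose definitions are the union. Commutativity of the coproduct diagram is immediate; uniqueness follows because any factorising program must, by Definition~\ref{def:eq}, agree with $P$ on call-traces starting in the $X$-part of the entry list and with $Q$ on those starting in the $Y$-part.

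For the trace, represent $f\colon X+U\to Y+U$ as a program with entries $e_X\, e_U$ and exits $x_Y\, x_U$, the $U$-labels chosen fresh. The trace $\mathrm{Tr}^U(f)\colon X\to Y$ is obtained by closing the $U$-feedback: to the definitions of $f$ we add forwarding equations $x_U^{(i)}(z)=e_U^{(i)}(z)$ for each $i\le|U|$, then drop the $U$-labels from entry and exit lists. Call-traces of $\mathrm{Tr}^U(f)$ from $X$ to $Y$ are then precisely the call-traces of $f$ that start in $e_X$ and end in $x_Y$, with every intermediate $U$-exit routed immediately back to the corresponding $U$-entry. The standard trace axioms (vanishing, superposing, yanking, exchange) reduce to equalities of disjoint unions of definitions after renaming; uniformity, which states that for $h\colon U\to V$ with $(\mathrm{id}_Y+h)\circ f=g\circ(\mathrm{id}_X+h)$ one has $\mathrm{Tr}^U(f)=\mathrm{Tr}^V(g)$, holds because closing the feedback past the equal programs $(\mathrm{id}_Y+h)\circ f$ and $g\circ(\mathrm{id}_X+h)$ produces the same call-trace behaviour on the $X,Y$-part.

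The main obstacle is the systematic verification of the trace axioms and of uniformity. Although each identity is essentially syntactic, one must track $\alpha$-renamings carefully so that unions of definitions remain well-formed programs and so that the relation $\red_P$ on call-traces is preserved by the rewirings that implement composition and trace. The rest of the argument is a routine, if bureaucratic, check that two programs obtained by different constructions have equal call-trace behaviour on entries and exits.
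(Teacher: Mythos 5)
Your plan is correct in substance, but it takes a genuinely different route from the paper. You build the structure syntactically and verify it directly: injections and copairing as relabelled unions of definition sets, and the trace as explicit feedback, adding forwarding equations $x_U^{(i)}(z)=e_U^{(i)}(z)$ for the $U$-exits (this is well-formed, since exit labels carry no definitions in the original program, and it is exactly the feedback the Int-interpretation later uses), and then check the coproduct laws, the trace axioms and uniformity by comparing call-trace behaviour. The paper instead argues indirectly: it observes that there is a faithful embedding of $\TT$ into the category of sets and partial functions (a program is sent to its input/output call-trace behaviour), and lets the required equations follow from the known coproduct and uniform-trace (iteration) structure there. The trade-off is clear. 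Your construction makes the witnesses explicit and self-contained, which is what the rest of the paper actually computes with; but it leaves you with the ``bureaucratic'' verification you flag, and the one place where your sketch is genuinely thin is uniformity: the hypothesis is an equation between the composites $(\mathrm{id}_Y+h)\circ f$ and $g\circ(\mathrm{id}_X+h)$, not between $f$ and $g$, so concluding $\mathrm{Tr}^U(f)=\mathrm{Tr}^V(g)$ requires an induction on the number of feedback traversals, relating the $U$-values of a run of $f$ to their $h$-images in a run of $g$ (and a converse argument), together with a statement of which class of morphisms $h$ uniformity is asserted for. The paper's embedding buys exactly this point for free, since iteration in partial functions is a standard uniform trace, while faithfulness reflects the equations back to $\TT$; conversely, your direct route avoids having to set up the embedding and spells out the programs that realise the structure.
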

\begin{proof}[Proof outline]
  A simple proof can be given by observing that there is a faithful
  embedding from $\TT$ to the category of sets and partial functions. 
  The equations that are required to show then follow from the fact
  that the category of sets and partial functions has the desired
  structure.
\end{proof}
While we would like to emphasise the mathematical structure of 
target programs given by the Int construction, in the rest of the 
paper we shall spell it out concretely rather than referring to
categorical notions in order to make the paper easier to read.

\section{Source Language}
\label{sect:source}

Our source language is a variant of PCF,
a simply-typed $\lambda$-calculus with a basic type~$\NN$ 
of natural numbers and associated constants, as well as a 
fixed-point combinator for recursion. The intended evaluation
strategy is call-by-name.

The source language has the following types and terms.
\begin{align*}
  \text{Types: } && 
  X, Y &\ ::=\  1  \Mid X \to Y \Mid \NN  \\
  \text{Terms: } && s, t  &\ ::=\  
    * \Mid \tlami x X t \Mid s\ t 
     \Mid n \Mid s+t 
     \Mid \tif s {t_1} {t_2}
     \Mid \kw{fix}_X
\end{align*}
We write $\neg X$ as an abbreviation for the type $X\to \bot$.
Again, we identify terms up to renaming of bound variables. 

The typing judgement has the form $\SeqTm{\Gamma}{t}{X}$,
where $\Gamma$ is a finite list of variable declarations
$\I {x_1} {X_1},\dots, \I {x_n} {X_n}$.
We formulate the typing rules so that 
it is easy to consider fragments of the source language
of varying expressiveness.
The core rules are those of a linear $\lambda$-calculus 
and are given in Figure~\ref{fig:sourcecore}.
The rules for natural numbers appear in Figure~\ref{fig:sourcenat}.
We allow an addition operation $s+t$ instead of
the standard successor operation $\kw{succ}(s)$, as this gives a
simple example to explain the issues with the compilation of
multinary operation. 
Rules for contraction and for the fixed point combinator are 
given in Figures~\ref{fig:sourcecontr} and~\ref{fig:sourcefix}.

\begin{figure}
  \begin{prooftree}
    \AxiomC{}
    \LeftLabelSc{ax}
    \UnaryInfC{$ \SeqTm{\I x X}{x}{X} $}
    \AxiomC{}
    \LeftLabelSc{1i}
    \UnaryInfC{$ \SeqTm{}{*}{1} $}
    \alwaysNoLine
    \BinaryInfC{}
  \end{prooftree}
  \begin{prooftree}
    \AxiomC{$ \SeqTm{\Gamma}{t}{Y} $}
    \LeftLabelSc{weak}
    \UnaryInfC{$ \SeqTm{\Gamma,\, \I x X}{t}{Y} $}
    \AxiomC{$ \SeqTm{\Gamma,\, \I y Y,\, \I x X,\, \Delta}{t}{Z} $}
    \LeftLabelSc{exch}
    \UnaryInfC{$ \SeqTm{\Gamma,\, \I x X,\, \I y Y,\, \Delta}{t}{Z} $}
    \alwaysNoLine
    \BinaryInfC{}
  \end{prooftree}
  \begin{prooftree}
    \AxiomC{$ \SeqTm{\Gamma,\, \I x X}{t}{Y} $}
    \LeftLabelSc{$\to$i}
    \UnaryInfC{$ \SeqTm{\Gamma}{\tlami x X t}{X \to Y} $}
    \AxiomC{$ \SeqTm{\Gamma}{s}{X\to Y} $}
    \AxiomC{$ \SeqTm{\Delta}{t}{X} $}
    \LeftLabelSc{$\to$e}
    \BinaryInfC{$ \SeqTm{\Gamma,\, \Delta}{s\ t}{Y} $}
    \alwaysNoLine
    \BinaryInfC{}
  \end{prooftree}
  \caption{Source Language (\textsc{core}) – Linear Core}
  \label{fig:sourcecore}
  \vspace{1em}
  \begin{prooftree}
    \AxiomC{\phantom{X}}
    \LeftLabelSc{num}
    \UnaryInfC{$ \SeqTm{}{n}{\NN} $}
    \AxiomC{$ \SeqTm{\Gamma}{s}{\NN} $}
    \AxiomC{$ \!\SeqTm{\Delta}{t}{\NN} $}
    \LeftLabelSc{add}
    \BinaryInfC{$ \SeqTm{\Gamma,\, \Delta}{s + t}{\NN} $}
    \AxiomC{$ \SeqTm{\Gamma}{s}{\NN} $}
    \AxiomC{$ \!\SeqTm{\Delta_1}{t_1}{\NN} $}
    \AxiomC{$ \!\SeqTm{\Delta_2}{t_2}{\NN} $}
    \LeftLabelSc{if}
    \TrinaryInfC{$ \SeqTm{\Gamma,\, \Delta_1,\, \Delta_2}{\tif s {t_1} {t_2}}{\NN} $}
    \alwaysNoLine
    \TrinaryInfC{}
  \end{prooftree}
  \caption{Source Language (\textsc{lin}) – Natural Numbers}
  \label{fig:sourcenat}
  \vspace{2em}
  \centering
    \begin{prooftree}
      \AxiomC{$ \SeqTm{\Gamma,\,\I y X,\, \I z X}{t}{Y} $}
      \LeftLabelSc{contr}
      \UnaryInfC{$ \SeqTm{\Gamma,\,\I x X}{t[x/y, x/z]}{Y} $}
    \end{prooftree}
    \caption{Source Language (\textsc{stl}) – Contraction}
    \label{fig:sourcecontr}
  \vspace{2em}
  \begin{prooftree}
    \AxiomC{\phantom{$X$}}
    \LeftLabelSc{fix}
    \UnaryInfC{$ \SeqTm{}{\kw{fix}_X}{(X\to X)\to X} $}
  \end{prooftree}
  \caption{Source Language – Recursion}
  \label{fig:sourcefix}
\end{figure}

\section{CPS-Translation}
\label{sect:cps}

We use a variant of Hofmann and Streicher's 
\emph{call-by-name CPS-translation}~\cite{DBLP:conf/lics/HofmannS97},
which translates 
the source language extended with the following rules for product types as well as 
a type~$\bot$ without any rules.
\begin{prooftree}
  \AxiomC{$ \SeqTm{\Gamma}{s}{X} $}
  \AxiomC{$ \SeqTm{\Delta}{t}{Y} $}
  \LeftLabelSc{$\times$i}
  \BinaryInfC{$ \SeqTm{\Gamma,\, \Delta}{<s, t>}{X\times Y} $}
  \AxiomC{$ \SeqTm{\Gamma}{s}{X\times Y} $}
  \AxiomC{$ \SeqTm{\Delta,\, \I x X,\,\I y Y}{t}{Z} $}
  \LeftLabelSc{$\times$e}
  \BinaryInfC{$ \SeqTm{\Gamma,\, \Delta}{\tlet{s}{<x,y>}{t}}{Z} $}
  \alwaysNoLine
  \BinaryInfC{}
\end{prooftree}

For each source type~$X$, the type $\cps X$ of its continuations is defined
by:
\begin{align*}
  \cps{1} &= \neg 1 
  & \cps{\NN} &= \neg \NN
  & \cps{X \to Y} &= \neg \cps X \times \cps Y
\end{align*}  
A continuation for type $X\to Y$ is thus a pair
of a continuation of type $\cps Y$, using which the result can be
returned, and a function $\neg \cps X$ to access the argument.
A function can request its argument by applying this function to a
continuation of type~$\cps X$. The argument will then be provided to
this continuation.

For computation in continuation passing style, we often use the type
$\neg \cps X$, which we denote by $\overline X$.

The CPS-translation translates the source language into itself,
translating any typing derivation of
$\SeqTm{\I {x_1} {X_1},\dots,\I {x_n} {X_n}}{t}{Y}$
into a derivation of
$\SeqTm{\I {x_1} {\overline{X_1}},\dots,\I {x_n} {\overline{X_n}}}{\cps t}{\overline Y}$.
It is defined by induction on the given typing derivation.
Figure~\ref{fig:cps} shows how each typing rule on
the left is translated to a derived rule
on the right. 
\begin{figure}[t]
\begin{small}
\begin{center}
\begin{tabular}{ccc}
   \AxiomC{}
  \UnaryInfC{$ \SeqTm{\I x X}{x}{X} $}
  \DisplayProof
  &\ $\Longrightarrow$ &
   \AxiomC{}
  \UnaryInfC{$ \SeqTm{\I x {\overline X}}{\eta(x,\overline X)}{\overline X} $}
  \DisplayProof
\\[1em]
  \AxiomC{\phantom{X}}
  \UnaryInfC{$ \SeqTm{}{*}{1} $}
  \DisplayProof
  &\ $\Longrightarrow$ &
  \AxiomC{\phantom{X}}
  \UnaryInfC{$ \SeqTm{}
      {\lambda k.\, k\ *}
      {\overline 1} $}
  \DisplayProof
\\[1em]
    \AxiomC{$ \SeqTm{\Gamma}{t}{Y} $}
    \UnaryInfC{$ \SeqTm{\Gamma,\, \I x X}{t}{Y} $}
  \DisplayProof
  &\ $\Longrightarrow$ &
    \AxiomC{$ \SeqTm{\overline \Gamma}{\cps t}{\overline Y} $}
    \UnaryInfC{$ \SeqTm{\overline \Gamma,\, \I x {\overline X}}{\cps t}{\overline Y} $}
  \DisplayProof
\\[1em]
    \AxiomC{$ \SeqTm{\Gamma,\, \I y Y,\, \I x X,\, \Delta}{t}{Z} $}
    \UnaryInfC{$ \SeqTm{\Gamma,\, \I x X,\, \I y Y,\, \Delta}{t}{Z} $}
  \DisplayProof
  &\ $\Longrightarrow$ &
    \AxiomC{$ \SeqTm{\overline \Gamma,\, \I y {\overline Y},\, \I x {\overline
          X},\, \overline \Delta}{\cps t}{\overline Z} $}
    \UnaryInfC{$ \SeqTm{\overline \Gamma,\, \I x {\overline X},\, \I y
        {\overline Y},\, \overline \Delta}{\cps t}{\overline Z} $}
  \DisplayProof
\\[1em]
  \AxiomC{$ \SeqTm{\Gamma,\, \I x X}{t}{Y} $}
  \UnaryInfC{$ \SeqTm{\Gamma}{\tlami x X t}{X \to Y} $}
  \DisplayProof
  &\ $\Longrightarrow$ &
  \AxiomC{$ \SeqTm{\overline\Gamma,\, \I x {\overline X}}{\cps t}{\overline Y} $}
  \UnaryInfC{$ \SeqTm{\overline\Gamma}
     {\tlam {<x,k>} {{\cps t}\ {k}}}
   {\overline{X \to Y}} $}
  \DisplayProof
\\[1.5em]
  \AxiomC{$ \SeqTm{\Gamma}{s}{X\to Y} $}
  \AxiomC{$ \SeqTm{\Delta}{t}{X} $}
  \BinaryInfC{$ \SeqTm{\Gamma,\, \Delta}{s\ t}{Y} $}
  \DisplayProof
  &\ $\Longrightarrow$ &
  \AxiomC{$ \SeqTm{\overline\Gamma}{\cps s}{\overline{X \to Y}} $}
  \AxiomC{$ \SeqTm{\overline\Delta}{\cps t}{\overline X} $}
  \BinaryInfC{$ \SeqTm{\overline\Gamma,\, \overline\Delta}
      {\tlam k {{\cps s}\ {<\cps t, k>}}}
      {\overline Y} $}
  \DisplayProof
\\[1em]
  \AxiomC{\phantom{X}}
  \UnaryInfC{$ \SeqTm{}{n}{\NN} $}
  \DisplayProof
  &\ $\Longrightarrow$ &
  \AxiomC{\phantom{X}}
  \UnaryInfC{$ \SeqTm{}
      {\lambda k.\, k\ n}
      {\overline \NN} $}
  \DisplayProof
\\[1em]
  \AxiomC{$ \SeqTm{\Gamma}{s}{\NN} $}
  \AxiomC{$ \SeqTm{\Delta}{t}{\NN} $}
  \BinaryInfC{$ \SeqTm{\Gamma,\, \Delta}{s + t}{\NN} $}
  \DisplayProof
  &\ $\Longrightarrow$\ &
  \AxiomC{$ \SeqTm{\overline\Gamma}{\cps s}{\overline\NN} $}
  \AxiomC{$ \SeqTm{\overline\Delta}{\cps t}{\overline\NN} $}
  \BinaryInfC{$ \SeqTm{\overline\Gamma,\, \overline\Delta}
      {\lambda k.\, \cps{s}\ (\lambda x.\, \cps{t}\ (\lambda y.\, k\ (x+y)))}
      {\overline \NN} $}
  \DisplayProof
\\[1.5em]
  \AxiomC{$ \SeqTm{\Gamma}{s}{\NN} $}
  \AxiomC{\hspace{-1em}$ \SeqTm{\Delta_1}{t_1}{\NN} $}
  \AxiomC{\hspace{-1em}$\SeqTm{\Delta_2}{t_2}{\NN} $}
  \TrinaryInfC{$ \SeqTm{\Gamma,\, \Delta_1,\, \Delta_2}{\tif s {t_1} {t_2}}{\NN} $}
  \DisplayProof
  &\ $\Longrightarrow$\  &
  \AxiomC{$ \SeqTm{\overline\Gamma}{\cps s}{\overline\NN} $}
  \AxiomC{\hspace{-.2em}$ \SeqTm{\overline{\Delta_1}}{\cps{t_1}}{\overline\NN} $}
  \AxiomC{\hspace{-.2em}$ \SeqTm{\overline{\Delta_2}}{\cps{t_2}}{\overline\NN} $}
  \TrinaryInfC{$ 
        \begin{aligned}
          \overline\Gamma,\, \overline{\Delta_1},\,\overline{\Delta_2} \vdash
          \lambda k.\, \cps{s} \ (\lambda x.\, 
          \kw{if}\ x\ &\kw{then}\ {\cps{t_1}\ {(\lambda y.\, k \ y)}}
          \colon \overline \NN\\
          &\kw{else}\ {\cps{t_2} \ {(\lambda y.\, k \ y)}})
      \end{aligned}
      $}
  \DisplayProof
  \\[2.5em]
  \AxiomC{$ \SeqTm{\Gamma,\,\I y X,\, \I z X}{t}{Y} $}
  \UnaryInfC{$ \SeqTm{\Gamma,\,\I x X}{t[x/y, x/z]}{Y} $}
  \DisplayProof
  &\ $\Longrightarrow$ &
  \AxiomC{$ \SeqTm{\overline\Gamma,\,\I y {\overline X},\, \I z {\overline X}}{\underline t}{\overline Y} $}
  \UnaryInfC{$ \SeqTm{\overline\Gamma,\,\I x {\overline X}}{
  {\underline t[\eta(x,\overline{X})/y, \eta(x, \overline{X})/z]}}{\overline Y} $}
  \DisplayProof
  \\[2.5em]
  \AxiomC{\phantom{X}}
  \UnaryInfC{$ \SeqTm{}{\kw{fix}_X}{(X\to X) \to X} $}
  \DisplayProof
  &\ $\Longrightarrow$\ &
  \AxiomC{\phantom{X}}
  \UnaryInfC{$ \vdash
    \begin{aligned}[t]
       &\lambda <f,k>.\, \kw{fix}_{\overline X}\ (\lambda g.\lambda k_1.\,f\ <\lambda k_2.\,g\ k, \lambda x.\,k_1\ x>)\ k\\
       &\hspace{4.1cm}\colon {\overline{(X\to X) \to X}}
  \end{aligned}$}
  \DisplayProof
\end{tabular}

\end{center}
\end{small}
\caption{CPS-translation}
\label{fig:cps}
\end{figure}

This CPS-translation differs from the standard call-by-name 
CPS-translation of~\cite{DBLP:conf/lics/HofmannS97} in the use
of $\eta$-expansion in the rules for variables and contraction.
These expansions will allow us 
to use compositional reasoning in Sections~\ref{sect:linear}-\ref{sect:stl}. 
The term $\eta(t,X)$ is defined by induction on the type~$X$:
\begin{align*}
  \eta(t, X) &= t \text{ if $X$ is a base type ($1$, $\NN$ or $\bot$)}\\
  \eta(t, X\times Y) &= \tlet t {<x,y>} <\eta(x,X), \eta(y, Y)>\\
  \eta(t, X\to Y) &= \tlam x {\eta(t\ \eta(x, X), Y)} \text{ where~$x$ is fresh.}
\end{align*}
The last equation is more general than what we need in this paper.
We use only the special case $\eta(t, X\to \bot) = \tlam x {t\ \eta(x, X)}$,
as we apply $\eta$-expansion only to terms with types of the form $\overline Z$.
For example, we have
$\eta(x,\overline \NN) = \eta(x, \neg \neg \NN) 
= \lambda x_1.\,x\ (\lambda x_2.\,x_1\ x_2)$.

In the examples in the Introduction, we
have not applied this $\eta$-expansion for better readability.

\section{Defunctionalization}
\label{sect:defun}

In the translation from source to target, we first apply
CPS-translation and then use defunctionalization.
In this paper we use the flow-based defunctionalization 
procedure introduced by
Banerjee, Heintze and Riecke~\cite{banerjee}.
This procedure uses control flow information, so the CPS-translated
term is first annotated with control flow information and 
then defunctionalized using this information.

In this section we define a particularly simple special case
of the flow-based defunctionalization procedure.
It is too weak to handle the whole source language, but it
allows for a simple explanation of the relation to the Int construction.
We will extend the defunctionalization procedure
in Section~\ref{sect:stl} to cover the CPS-translation of the whole source language.

Control flow information is added to the terms in the form of
labelling annotations.
In the simple variant of the defunctionalization procedure that we
describe here, each function abstraction and application 
is annotated with a single label from $\LL$.
Thus, the terms $\tlami x X t$
and $s\ t$ are replaced by $\tlaml x X l t$ and $\tappl s l t$
respectively, where $l$ ranges over $\LL$. 
The function type $X\to Y$ is replaced by $X\xto{l} Y$,
again for any~$l \in \LL$.
We write $\neg_l X$ for $X \xto{l} \bot$.

We require that each abstraction be uniquely identified by its label, that is,
we allow only 
terms in which no two abstractions have the same label.
In the application $\tappl s l t$ the label~$l$ expresses that the
function~$s$ applied here is defined by an abstraction with
label~$l$.
The typing rules for abstraction and application are modified as
follows to enforce that terms are annotated
with correct control flow information. 
 \begin{prooftree}
   \AxiomC{$ \SeqTm{\Gamma,\, \I x X}{t}{Y} $}
   \centerAlignProof
   \UnaryInfC{$ \SeqTm{\Gamma}{\tlaml x X l t}{X \xto{l} Y} $}
   \AxiomC{$ \SeqTm{\Gamma}{s}{X\xto{l} Y} $}
   \AxiomC{$ \SeqTm{\Delta}{t}{X} $}
   \centerAlignProof
   \BinaryInfC{\raise.5em\hbox{$ \SeqTm{\Gamma,\, \Delta}{\tappl s l t}{Y} $}}
   \alwaysNoLine
   \BinaryInfC{}
\end{prooftree}

Allowing function types and applications to be annotated with a 
single label only is a real restriction.
For example, it is not possible to label and type terms such as
$\tlam x {<x\ (\tlam y 0), x\ (\tlam z 1)>}$.
The two abstractions $\tlam y 0$ and $\tlam z 1$  would each 
have to be given a unique label, say $l_1$ and $l_2$ respectively.
But then in the two uses of the variables~$x$, its types would have
to be ${X\xto{l_1} Y}$ and ${X\xto{l_2} Y}$ respectively.
With the above rules, this is not possible, as $l_1$ and $l_2$ are
different labels.
In general, one needs to 
allow types such as $X \xto{\{l_1,l_2\}} Y$ with 
more than a single label for more than one possible definition site, 
as in e.g.~\cite{banerjee}. We come back to this in Section~\ref{sect:stl}, but
up until then the variant with a single label suffices and simplifies the 
exposition.

In the rest of this section we explain how terms of
the labelled source language (with product types)
can be defunctionalized into programs in the target language. 
We defer the question of how to annotate the terms obtained by 
CPS-translation with labels to later sections (Lemmas~\ref{lem:cps} and~\ref{lem:cpsstl}).

The defunctionalization of a term~$t$ in the labelled source
language consists of a target expression $t^*$, which 
denotes the defunctionalized term itself, 
and a set of target equations~$D(t)$, which contains
the $\Capply$-definitions for defunctionalized function
application.
\begin{align*}
  x^* &= x\\
  n^* &= n\\
  (\tif s {t} {u})^* &= \tcase {\kw{iszero}(s^*)} {\_} {t^*} {\_} {u^*} \\
  <s,t>^* &= <s^*, t^*>\\
  (\tlet t {<x,y>} s)^* &= \tlet{t^*}{<x,y>}{s^*}\\
  (\tappl s l t)^* &= \textit{apply}_l(s^*, t^*) \\  
  (\tlaml x A l t)^* &= {<x_1,\dots, x_n>} \text{ where $\mathrm{FV}(\tlami x A t)=\{x_1,\dots, x_n\}$}
\end{align*}
In the last case for abstraction we assume some fixed global ordering on all
variables, so that the order of the tuple is well-defined.
\begin{align*}
  D(x) &= \emptyset\\
  D(n) &= \emptyset\\
  D(\tif s {t} {u}) &= D(s)\cup D(t)\cup D(u) \\
  D(<s,t>) &= D(s) \cup D(u) \\
  D(\tlet t {<x,y>} s) &= D(t) \cup D(s) \\
  D(\tappl s l t) &= D(s) \cup D(t) \\  
  D(\tlaml x A l t) &= D(t) \cup \{\textit{apply}_l(<x_1,\dots, x_n>, x)=t^*\} 
\end{align*}
In general, the set
$D(t)$ need not consist of function definitions in the strict
sense of Definition~\ref{def:def};
it may contain nested case distinctions, for example.
This technical issue could easily be solved in general at the expense
of making the technical development a little more complicated.
However, we shall use defunctionalization only for terms~$t$ for which $D(t)$ 
does in fact only consist of function definitions, so we
stick with the above simple definitions. 

Note that for closed terms of function type 
the target expression~$t^*$ is just~$<>$. Since 
all closed terms~$\cps t$ obtained by CPS-translation are of function
type, we therefore consider the definition set~$D(\cps t)$ as the
main result of defunctionalization.  

\begin{example}
\label{ex:defun}
With label annotations the example from the Introduction becomes the term
$\cps t$ given by
\[
  \lambda^{l_1} z.\, \tlet {z} {<x,k>} {
\tappl {(\lambda^{l_2}k'.\, \tappl {k'} {l_3} 1)} {l_2} {(\lambda^{l_3} u.\, \tappl x {l_5} {(\lambda^{l_4} n.\, \tappl k {l_6} {(u+n)}))}}}.
\]
Its type is ${\neg_{l_1}(\neg_{l_5} \neg_{l_4} \NN \times \neg_{l_6} \NN)}$. 
The set $D(\cps t)$
consists of the definitions
\begin{align*}
  \Capply_{l_1}(<>, <x,k>) &=\  \Capply_{l_2}(<>,<x,k>) ,
  &
  \Capply_{l_2}(<>, k') &=\  \Capply_{l_3}(k', 1) ,
  \\
  \Capply_{l_3}(<x,k>, u) &=\  \Capply_{l_5}(x, <k,u>) ,
  &
  \Capply_{l_4}(<k,u>, n) &=\ \Capply_{l_6}(k, u+n) .
\end{align*}
Compared to the definitions given in~(\ref{eq:defun}) in the Introduction, it appears that more
data is being passed around in these $\Capply$-equations.
However, consider once again the application of $\cps t$ to the
concrete arguments from the Introduction. Then one 
gets the additional
equations 
\begin{align*}
  \Capply_{l_5}(<>, k) &=\  \Capply_{l_4}(k, 42) ,
  &
  \Capply_{l_6}(<>, n) &=\ \texttt{print\_int}(n),
\end{align*}
and the fully applied term defunctionalizes to $\Capply_{l_1}(<>, <<>, <>>)$.
Thus, all the variables in the $\Capply$-equations only
ever store the value~$<>$ or tuples thereof, and these arguments 
may just as well be omitted.
\end{example}

An important point to note is that the defunctionalization procedure yields a set of
definition equations, but that it does not specify an interface of
entry and exit labels.
When one applies defunctionalization to 
a whole closed source programs of ground type, as is usually done in compilation,
choosing an interface is not important. One would
typically just choose a single entry label \texttt{main}
and a single exit label \texttt{exit}.
If one is interested in compositionality, however,
then open terms and 
terms of higher types must be also considered. 
Then one needs to fix an interface that explains how the 
free variables are accessed and how higher types are to be used.
In the above example term $\underline t$, a suitable choice of entry and
exit labels would be $l_1 l_4$ and $l_5 l_4$ respectively.
We shall explain how to define an interface from the
image of the CPS-translation in the next section.

Of course, the defunctionalization procedure described above is quite 
simple. In actual applications one would certainly want to apply optimisations, 
not least to remove unnecessary function arguments. An
example of such an optimisation is \emph{lightweight
defunctionalization} of Banerjee et al.~\cite{banerjee}. 
We shall argue that the Int construction captures one such optimisation
of the defunctionalization procedure. 

\section{The Core Linear Fragment}
\label{sect:linear}

To explain the basic idea of
how CPS-translation and defunctionalization
relate to a model of interactive computation
(namely $\Int\TT$),
we first consider the simplest non-trivial case.
We consider the core fragment of the source language, whose
syntax is 
\begin{align*}
 \text{Types: } && 
 X, Y &\ ::=\ 1  \Mid X \to Y  \\
 \text{Terms: } && s, t &\ ::=\  * 
  \Mid \tlami x X t \Mid s\ t 
\end{align*}
and whose rules are just those of Figure~\ref{fig:sourcecore}.
We call this source fragment \textsc{core}.

\subsection{Interactive Interpretation}

First we describe directly the interpretation of this fragment of 
\textsc{core} in $\Int\TT$.
A type~$X$ is interpreted by an interface $(X^-,X^+)$, which consists of
two finite lists $X^-$ and $X^+$ of target types.
Closed terms of type~$X$ will be interpreted as programs
of type $P\colon X^- \to X^+$.
The interfaces are defined by induction on the type:
\begin{align*}
  1^- &= \unit
  &
  (X \lollipop Y)^- &= Y^- X^+
  \\
  1^+ &= \unit
  &
  (X \lollipop Y)^+ &= Y^+ X^-
\end{align*}
Here, $X^- Y^-$ denotes the concatenation of the lists~$X^-$ and $Y^-$ 
(and likewise for the other cases). 

For a context $\Gamma = \I {x_1} {X_1},\dots, \I {x_n} {X_n}$,
we write $\Gamma^-$ and $\Gamma^+$ for the concatenations
$X^-_n\dots X^-_1$ and
$X^+_n\dots X^+_1$.

The interpretation of \textsc{core} is defined by induction on typing derivations.
A typing derivation of $\SeqTm{\Gamma}{t}{X}$ is interpreted by  
a morphism 
\[
  \sem{\SeqTm{\Gamma}{t}{X}}\colon X^-\Gamma^+ \to X^+\Gamma^-
\]
in $\TT$ (which amounts to a morphism from $(\Gamma^-,\Gamma^+)$ to 
$(X^-,X^+)$ in the category $\Int\TT$).
This interpretation is given in Figure~\ref{fig:intcore}.
\begin{figure}
\begin{center}
  \includegraphics{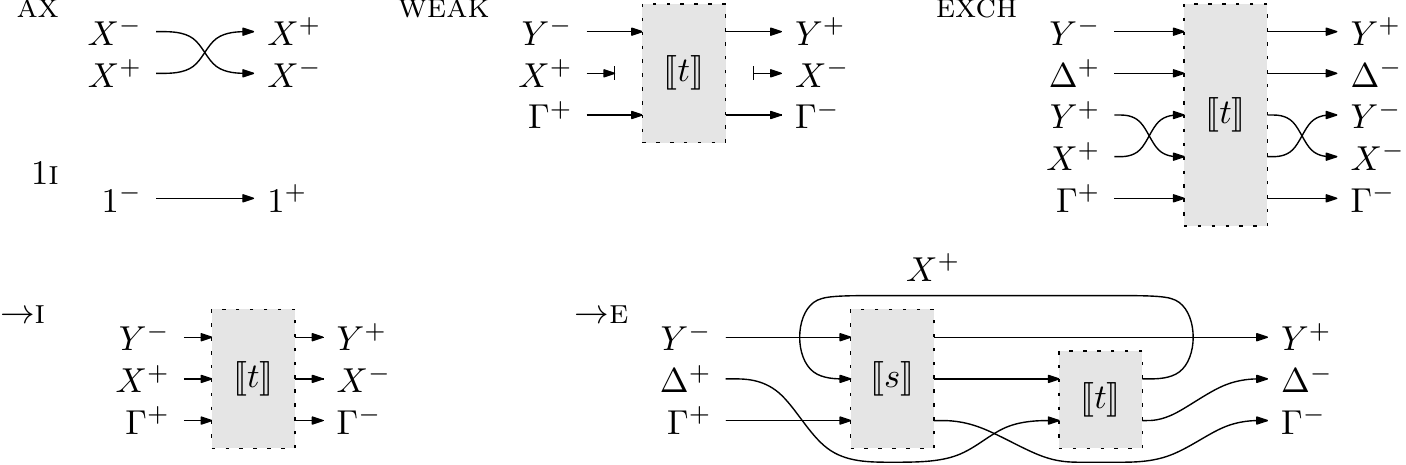}
\end{center}
\caption{Int-interpretation of $\textsc{core}$}
\label{fig:intcore}
\end{figure}
The boxes in this figure represent the inductive interpretation of the direct
sub-derivations of the individual rules.

It is a slight abuse of notation to write 
$\sem{\SeqTm{\Gamma}{t}{X}}$, even though the interpretation 
is defined not just from the sequent, but from its derivation.
We believe that it is possible to justify this notation by
proving that any two derivations
of the same sequent the same interpretation,
but in this paper we concentrate on the relation of
the interpretation to CPS-translation and defunctionalization
and always work with derivations.

\subsection{CPS-translation and Defunctionalization}

The aim is now to demonstrate that this interpretation in $\Int\TT$
is closely related to 
CPS-translation followed by defunctionalization.

To apply flow-based defunctionalization, we must find suitable labellings
of terms and types. We introduce special notation for labellings of types
of the form~$\overline X$.

\begin{definition}
For any type~$X$ and any $x^-,x^+\in \LL^*$ with
$\length{x^-} = \length{X^-}$
and
$\length{x^+} = \length{X^+}$,
we define a type $\overline X[x^-,x^+]$ in the 
labelled variant of \textsc{core} inductively as follows:
\begin{enumerate}
  \item Define $\overline{1}[q, a]$ to be $\neg_q \neg_a 1$.
  \item 
    If $\overline{X}[x^-,x^+]$ is defined and
    $\overline Y[y^-, y^+]$ is defined and of the form $\neg_{q} Y'$,
    then define $\overline{X\lollipop Y}[y^- x^+, y^+ x^-]$
    to be $\neg_{q}(\overline X[x^-, x^+] \times Y')$.
\end{enumerate}
\end{definition}
\noindent
For example, $\overline{1\lollipop 1}[qa',aq']$ denotes
$\neg_q(\neg_{q'}\neg_{a'}1 \times \neg_a 1)$.

Although $\overline X[x^-, x^+]$  is defined to be
abbreviation for a labelled type, 
one may alternatively think of it
as the type~$X$ together with a labelling of the ports of the interface 
$(X^-, X^+)$.

Readers familiar with game semantics may also want to compare the
syntax trees of the types $\overline{X}[x^-,x^+]$ with game semantic
arenas. 
The syntax tree induces a natural partial ordering on the labels appearing in
it:  $l_1 \lt l_2$ if there is a path from a node labelled $\xto{l_1}$ to one
labelled $\xto{l_2}$ in the syntax tree.
The Hasse diagrams of this ordering may be defined inductively as
follows:\\
\begin{center}
  \includegraphics{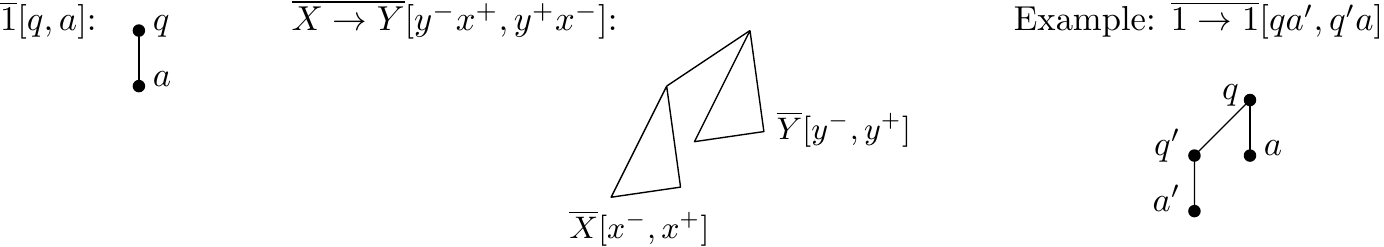}
\end{center}
These diagrams correspond to the game semantic arenas for the corresponding
types~\cite{HylandOng}.
More information about the relation of game arenas and continuations can be
found particularly in work of Levy~\cite{cbpv} and Melli\`es~\cite{mellies12}.

If $\Gamma$ is $\I {x_1} {X_1},\dots,\I {x_n} {X_n}$, then we write
short $\overline \Gamma[x_n^-\dots x_1^-, x_n^+\dots x_1^+]$ for
the context
$\I {x_1} {\overline{X_1}[x^-_1, x^+_1]}, 
\dots, 
\I {x_n} {\overline{X_n}[x^-_n,x^+_n]}$.
We say that a sequent 
$\SeqTm{\overline{\Gamma}[\gamma^-, \gamma^+]}{t}{\overline X[x^-,x^+]}$ 
is \emph{well-labelled} if the labels in $\gamma^-, \gamma^+, x^-,
x^+$ are pairwise distinct.

\begin{lemma}
  \label{lem:cps}
  If\/ $\SeqTm{\Gamma}{t}{X}$ is derivable in \textsc{core},
  then 
  the derivation of $\SeqTm{\overline \Gamma}{\cps t}{\overline X}$ obtained by CPS-translation 
  can be annotated
  with labels such that it derives the 
  well-labelled sequent\/
  $\SeqTm{\overline \Gamma [\gamma^-, \gamma^+]}{\cps t}{\overline X [x^-, x^+]}$
  for some $\gamma^-, \gamma^+, x^-, x^+ \in \LL^*$.
\end{lemma}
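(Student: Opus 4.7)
The plan is to proceed by structural induction on the derivation of $\SeqTm{\Gamma}{t}{X}$ in \textsc{core}, annotating the CPS-translated derivation rule-by-rule. The invariant to maintain is two-fold: the labels in the conclusion's types (namely those in $\gamma^-, \gamma^+, x^-, x^+$) must be pairwise distinct for well-labelledness, and every $\lambda$-abstraction in the annotated term must carry a globally unique label (as required of terms in the labelled source language).

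The central technical ingredient will be an auxiliary labelling lemma for $\eta$-expansions: for every source type $X$ and any two labellings $x^\pm, y^\pm \in \LL^*$ whose combined labels are pairwise distinct, one can annotate $\eta(x, \overline X)$ so as to derive the well-labelled sequent $\SeqTm{\I x {\overline X[x^-, x^+]}}{\eta(x, \overline X)}{\overline X[y^-, y^+]}$. I would prove this by induction on $X$: every newly introduced $\lambda$ inside the expansion takes its label from the leading entry of the appropriate $y$-list, while every application takes its label from the leading entry of the appropriate $x$-list, with the roles of negative and positive labels swapping each time the recursion crosses an arrow, mirroring the definition of $\overline{X\to Y}$.

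With this in hand, the base cases of the main induction follow easily: \textsc{ax} is the auxiliary lemma applied with $y^\pm$ chosen fresh from $x^\pm$; \textsc{1i} picks fresh $q, a$ and labels $\lambda k.\,k\,*$ so that the conclusion has type $\overline 1[q, a]$; \textsc{weak} extends the IH-labelling with fresh labels for the newly weakened-in variable's type; \textsc{exch} inherits the labelling unchanged. For \textsc{$\to$i} and \textsc{$\to$e}, I would invoke the IH on each premise, relabelling where necessary so that the sub-derivations use disjoint label alphabets, and then label the newly introduced wrapping $\lambda$ and application using the labels forced by the output type, which is built from the sub-types via the recursive construction $\overline{X\to Y}[y^- x^+, y^+ x^-] = \neg_q(\overline X[x^-, x^+] \times Y')$.

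The main obstacle is in preserving global uniqueness of $\lambda$-labels in the \textsc{$\to$i} and \textsc{$\to$e} cases. By that recursive construction, the leading label of the conclusion's outer arrow coincides with the leading label $q$ of $\cps t$'s type $\overline Y[y^-, y^+]$, so the wrapping $\lambda$ produced by the CPS rule is forced to carry label $q$; yet $\cps t$ itself, being a term of type $\overline Y$, invariably begins with a $\lambda$-abstraction that also carries label $q$. The remedy, matching the simplified form already exhibited in the introduction's example, is to fold an administrative $\beta$-reduction into the CPS-translation at this step: the redex $\cps t\,k$ (respectively $\cps s\,\langle\cps t, k\rangle$ when $\cps s$ is $\lambda$-headed) is contracted, removing the redundant outer $\lambda$ of the sub-term and leaving the wrapping $\lambda$ as the sole bearer of $q$. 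After this contraction, every label in the final annotated derivation is either inherited from the IH or chosen freshly in the current step, and both halves of the invariant survive.
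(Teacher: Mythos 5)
Your overall strategy (induction on the \textsc{core} derivation plus an auxiliary labelling lemma for $\eta$-expansions) is exactly the paper's route: the paper's proof is a straightforward induction, deferring the labelling of $\eta$-expanded variables to the more general property proved later for Lemma~\ref{lem:eta}. For \textsc{core}, where all annotations are single labels and no coercions are needed, your auxiliary lemma in the form ``any two disjoint labellings of $\overline X$ can be realised by an annotation of $\eta(x,\overline X)$'' is correct and suffices for the \R{ax} case.

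However, the ``main obstacle'' you identify in the \R{$\to$i} and \R{$\to$e} cases is not real, and your remedy is a genuine misstep. You assume that the conclusion type must be $\overline{X\to Y}[y^-x^+,\,y^+x^-]$ with the \emph{same} list $y^-$ as in the premise, so that the wrapping $\lambda$ is forced to carry the head label $q_t$ of $y^-$ and clash with the outer $\lambda$ of $\cps t$. But the lemma only asks for \emph{some} labelling of the conclusion, and the head of the negative list of $\overline{X\to Y}[\cdot,\cdot]$ only determines the outermost arrow label; nothing ties it to the premise. One labels the derivation as $\tlamlu{<x,k>}{}{q}{\tappl{\cps t}{q_t}{k}} \colon \overline{X\to Y}[q\,z\,x^+,\,y^+x^-]$ where $q$ is \emph{fresh} and $y^-=q_t\,z$: the old head label $q_t$ is consumed internally (it annotates the application $\tappl{\cps t}{q_t}{k}$, matching the label of the outer $\lambda$ of $\cps t$), while the fresh $q$ labels the new abstraction and replaces $q_t$ in the interface. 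Well-labelledness and uniqueness of abstraction labels follow immediately from the induction hypothesis plus freshness of $q$; the \R{$\to$e} case is analogous with $\tlamlu{k}{}{q}{\tappl{\cps s}{q_s}{<\cps t,k>}}$. Your proposed fix---folding an administrative $\beta$-reduction into the translation---actually changes the CPS-translation, so what you would prove is a statement about a different (reduced) term, not about ``the derivation of $\SeqTm{\overline\Gamma}{\cps t}{\overline X}$ obtained by CPS-translation'' as the lemma asserts; it would also destroy the indirection equations $\Capply_q()=\Capply_{q_t}()$ that the proof of Proposition~\ref{prop:simple} and the definition of $\kw{CpsDefun}$ rely on (the paper deliberately keeps such indirections and defers their removal, which is non-compositional, to a separate step). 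Drop that paragraph and label the two rules with a fresh outer label as above, and your proof coincides with the paper's.
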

The proof is a straightforward induction on derivations. 
We note that the $\eta$-expansion 
in the CPS-translation of variables is essential for this lemma
to be true.  For example, with the $\eta$-expansion a well-labelled
$\SeqTm{\I x {\overline 1[q,a]}}{\cps x}{\overline 1[q',a']}$
is derivable;
without it this would only be possible if $q=q'$ and $a=a'$.
That $\eta$-expansions of variables can be labelled as needed follows
from the more general property established in the proof of Lemma~\ref{lem:eta} below.
The defunctionalization of $\cps x$ consists of definitions of
$\Capply_{q'}$ and $\Capply_{a}$, which just forward their arguments
to $\Capply_{q}$ and $\Capply_{a'}$ respectively.
We believe that it is simpler to consider the case with these
indirections first and study their removal (which is non-compositional,
due to renaming) in a possible second step.

We now define a function $\kw{CpsDefun}$ that combines CPS-translation
and defunctionalization. Given any \textsc{core}-derivation of a judgement 
$\SeqTm{\Gamma}{t}{X}$, let 
$\SeqTm{\overline \Gamma [\gamma^-, \gamma^+]}{\cps t}{\overline X [x^-, x^+]}$
be the 
judgement from the above lemma for a suitable choice of labels.
The function $\kw{CpsDefun}$ maps the source derivation of $\SeqTm{\Gamma}{t}{X}$ 
to the target program $(x^- \gamma^+, D(\cps t), x^+ \gamma^-)$,
where~$D(\cps t)$ is the set of equations obtained by
the defunctionalization of~$\cps t$.
It is not hard to see that the set~$D(\cps t)$ is indeed a 
target program whose definition does not depend on the choice of
labels. 

We define a single function $\kw{CpsDefun}$ rather than a composition of two
general functions $\kw{Cps}$ and $\kw{Defun}$, as in general there is no canonical
choice of entry and exit labels for defunctionalization. Thus,
the composition $\kw{Defun}\circ \kw{Cps}$ would only return a set of equations 
and not yet a target program. 
With a combined function, it suffices to choose entry and exit labels for 
terms that are in the image of the CPS-translation.

Define a further function $\kw{Erase}$ on target programs that
erases all function arguments.
\[
  \kw{Erase}(i, E, o) := 
   (i, \{f() = g() \mid f(x) = g(e)  \in E\}, o)
 \]
In fact, $\kw{Erase}$ also removes all equations
defined by case distinction, but these do not 
appear in $D({\cps t})$ for this source language.

The composition $\kw{Erase}\circ \kw{CpsDefun}$ of these two functions 
takes (a typing derivation of) a source program, applies the CPS-translation, defunctionalizes
and then `optimises' the result by erasing all function arguments.
The resulting program is in fact correct and it is what one obtains using the 
interpretation in $\Int\TT$:

\begin{proposition}
  \label{prop:simple}
  Suppose 
  $\SeqTm{\Gamma}{t}{X}$ is derivable in~\textsc{core}.
  Then the target program
  $\kw{Erase}(\kw{CpsDefun}(\SeqTm{\Gamma}{t}{X}))$ 
  has type $X^-\Gamma^+ \to X^+\Gamma^-$ and 
  defines the same morphism in~$\TT$ as the Int-interpretation 
  of\/ $\SeqTm{\Gamma}{t}{X}$.
\end{proposition}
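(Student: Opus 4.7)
The plan is to prove the statement by induction on the given derivation of $\SeqTm{\Gamma}{t}{X}$. The type claim, namely that $\kw{Erase}(\kw{CpsDefun}(\SeqTm{\Gamma}{t}{X}))$ has type $X^-\Gamma^+ \to X^+\Gamma^-$, is essentially syntactic: by Lemma~\ref{lem:cps} the CPS-derivation can be labelled so that the inner type $\overline X[x^-,x^+]$ has $\length{x^-}=\length{X^-}$ and $\length{x^+}=\length{X^+}$, and $\kw{CpsDefun}$ is defined to return the program with entry labels $x^-\gamma^+$ and exit labels $x^+\gamma^-$; since all types in \textsc{core} are built from $1$ and $\to$, every interface entry is $\unit$, and erasure preserves this. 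The same observation drastically simplifies the semantic content: after erasure, every equation has the shape $f()=g()$, so both the Int-interpretation and $\kw{Erase}\circ\kw{CpsDefun}$ produce programs that are just renamings wiring entry labels to exit labels. Program equality (Definition~\ref{def:eq}) then reduces to checking that the same entry port is connected to the same exit port in both programs.

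The induction proceeds by cases on the last rule. The key structural fact is that both constructions are \emph{compositional}: the Int-interpretation in Figure~\ref{fig:intcore} wraps the sub-diagrams with a fixed bit of local wiring, and $D(\cps t)$ is the disjoint union of the sub-term definition sets together with at most one new $\Capply_l$-equation per $\lambda$-abstraction in $\cps t$. For rules \textsc{1i}, \textsc{weak}, and \textsc{exch}, CPS leaves the underlying term unchanged (up to context manipulation) and both sides yield the obvious wiring. For \textsc{→i}, the single new equation $\Capply_l(\vec y,<x,k>) = (\cps t\ k)^*$ erases to a forwarding that corresponds exactly to the wire added by the $\to$-introduction diagram; for \textsc{→e}, the outer $\lambda k.\,\cps s\ <\cps t,k>$ similarly contributes one forwarding equation that matches the connecting wires in the $\to$-elimination diagram. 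In each case, once the inductive hypothesis is applied to the sub-derivations, the remaining verification is just a pattern match between a handful of equations and a handful of wires, routine when spelled out.

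The delicate case is the axiom \textsc{ax}, because $\cps x = \eta(x,\overline X)$ is not a trivial term: η-expansion introduces a nested cascade of $\lambda$-abstractions, one per structural level of $\overline X$, and defunctionalization produces one $\Capply$-equation per such abstraction. The Int-interpretation, by contrast, assigns to the axiom the straight identity wiring on $X^- X^+ \to X^+ X^-$. To reconcile these, I will prove a sub-lemma by induction on $X$ stating that for any fresh choice of labels $x^-,x^+,x^{-\prime},x^{+\prime}$ the set of equations $\kw{Erase}(D(\eta(x,\overline X)))$ is program-equal to the identity wiring between $x^- x^+$ and $x^{+\prime} x^{-\prime}$ (with the sub-induction mirroring the inductive definition of $\overline X[x^-,x^+]$). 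This is where the paper's choice to include η-expansion in the CPS-translation pays off: without it the axiom would not even be labellable as required. This identity-wiring lemma is the main obstacle: it forces one to track precisely how the recursive pattern of η-expansion unfolds into pairs of mutually forwarding $\Capply$-equations and to check that each such pair collapses, under Definition~\ref{def:eq}, to a single direct wire. Once it is in place, the remaining cases of the induction reduce to mechanical bookkeeping of labels.
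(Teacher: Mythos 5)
Your proposal is correct and follows essentially the same route as the paper: induction on the typing derivation, using the labelling from Lemma~\ref{lem:cps}, with \textsc{$\to$i} and \textsc{$\to$e} each contributing a single erased forwarding equation matching the added wire, and the axiom case reduced to the defunctionalized $\eta$-expansion. Your explicit sub-lemma on $\kw{Erase}(D(\eta(x,\overline X)))$ merely spells out, by an induction on $X$, what the paper asserts directly when it writes down the forwarding equations for the \textsc{ax} case and declares them program-equal to the identity wiring.
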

\noindent
Since morphisms of type $X^-\Gamma^+ \to X^+\Gamma^-$ in~$\TT$ are 
defined to be equivalence classes of programs up to program equality 
(Definition~\ref{def:eq}),
the Int-interpretation $\sem{\SeqTm{\Gamma}{t}{X}}$ is an equivalence 
class of programs. The assertion of the proposition is therefore that the program
$\kw{Erase}(\kw{CpsDefun}(\SeqTm{\Gamma}{t}{X}))$ is an element of
the equivalence class $\sem{\SeqTm{\Gamma}{t}{X}}$.
\begin{proof}  
  The proof goes by induction on the derivation of 
  $\SeqTm{\Gamma}{t}{X}$.
  We continue by case distinction on the last rule in the derivation
  and show just the representative cases for variables and functions.
  \begin{itemize}
    \item Case \R{ax}.
      \begin{prooftree}
        \AxiomC{}
        \UnaryInfC{$ \SeqTm{\I x X}{x}{X} $}
      \end{prooftree}
  In this case $\kw{Erase}(\kw{CpsDefun}(\SeqTm{\Gamma,\,\I x X}{x}{X}))$ has the form 
  $(x_1^-  x_2^+,D, x_1^+ x_2^-)$ for
  \[
    D=\{\Capply_{x_1^-(i)}() = \Capply_{x_2^-(i)}(),
    \Capply_{x_2^+(i)}() = \Capply_{x_1^+(i)}() \mid i=1,\dots,n \},
  \]
  where we denote by~$w(i)$ the $i$-th element in the
  sequence~$w$ and where~$n$ is the common length of~$x_1^-$,
  $x_1^+$, $x_2^-$ and $x_2^+$.
  This is clearly in the equivalence class of the
  Int-interpretation.
\item Case \R{$\to$i}.
    \begin{prooftree}
      \AxiomC{$\vdots$}
      \noLine
      \UnaryInfC{$ \SeqTm{\Gamma,\, \I x X}{t}{Y} $}
      \UnaryInfC{$ \SeqTm{\Gamma}{\tlami x X t}{X\to Y} $}
    \end{prooftree}
  A CPS-translation of the derivation must have the following form
  in which 
  $Y[y^-,y^+]= \neg_{q_t} Y'$ and
  $y^- = q_t z$ 
  for
  $q_t \in \LL$
  and $z\in \LL^*$.
  \begin{prooftree}
    \AxiomC{\vdots}
    \noLine
    \UnaryInfC{$ \SeqTm{\overline\Gamma[\gamma^-,\gamma^+],\, \I x {\overline X[x^-,x^+]}}{\cps t}{\overline Y[y^-,y^+]} $}
    \AxiomC{}
    \UnaryInfC{$ \SeqTm{\I k {Y'}}{k}{Y'} $}
    \BinaryInfC{$ \SeqTm{\overline\Gamma[\gamma^-,\gamma^+],\, \I x {\overline X[x^-,x^+]}
      ,\, \I k {Y'}}
      {{\tappl{\cps t}{q_t}{k}}}
    {\bot} $}
    \UnaryInfC{$ \SeqTm{\overline\Gamma[\gamma^-,\gamma^+]}
      {\tlamlu {<x,k>} {\neg\cps X  \times \cps Y} q {\tappl{\cps t}{q_t}{k}}}
    {\overline {X\to Y}[q z x^+, y^+x^-]} $}
  \end{prooftree}
  By induction hypothesis,
  we know that 
  the
  program  $(y^-x^+\gamma^+, \kw{Erase}(D(\cps t)), y^+x^-\gamma^-)$
  is in the
  equivalence class of programs obtained by 
  Int-interpretation of the given derivation of 
  $\SeqTm{\Gamma,\,\I x X}{t}{Y}$. 

  We have to show that
  $(q zx^+\gamma^+,  
  \kw{Erase}(D(\tlamlu {<x,k>} {\neg\cps X  \times \cps Y} q {\tappl{\cps t}{q_t}{k}})),
    y^+x^-\gamma^-)$
  is in the
  equivalence class of programs obtained by 
  Int-interpretation of $\SeqTm{\Gamma}{\tlami x X t}{X\to Y}$. 
  But we have
  \[
   \kw{Erase}(D(\tlamlu {<x,k>} {\neg\cps X  \times \cps Y} q {\tappl{\cps t}{q_t}{k}}))
   =
   \{\Capply_q() = \Capply_{q_t}()\} \cup \kw{Erase}(D(\cps t))
  \]
  by definition. The definition of the Int-interpretation
  is such that the required assertion thus clearly holds.

\item Case \R{$\to$e}.
    \begin{prooftree}
      \AxiomC{\vdots}
      \noLine
      \UnaryInfC{$ \SeqTm{\Gamma}{s}{X\to Y} $}
      \AxiomC{\vdots}
      \noLine
      \UnaryInfC{$ \SeqTm{\Delta}{t}{X} $}
      \BinaryInfC{$ \SeqTm{\Gamma, \Delta}{s\ t}{Y} $}
    \end{prooftree}
  A CPS-translation of this derivation has the form 
  \begin{prooftree}
    \AxiomC{\vdots}
    \noLine
    \UnaryInfC{$ \SeqTm{\overline\Gamma[\gamma^-, \gamma^+]}{\cps s}{\overline{ X\lollipop Y}[y^-x^+, y^+x^-]} $}
    \AxiomC{\vdots}
    \noLine
    \UnaryInfC{$ \SeqTm{\overline\Delta[\delta^-, \delta^+]}{\cps t}{\overline X[x^-,x^+]} $}
    \AxiomC{ }
    \UnaryInfC{$ \SeqTm{\I k {Y'}}{k}{Y'} $}
    \BinaryInfC{$ \SeqTm{\overline\Delta[\delta^-, \delta^+],\, \I k {Y'}}{<\cps t, k>}{\overline X[x^-,x^+]} \times Y' $}
    \BinaryInfC{$ \SeqTm{\overline\Gamma[\gamma^-,\gamma^+],\, \overline\Delta[\delta^-,\delta^+]
      ,\, \I k {Y'}}
      {{\tappl{\cps s}{q_s}{<\cps t, k>}}}
    {\bot} $}
    \UnaryInfC{$ \SeqTm{\overline\Gamma[\gamma^-,\gamma^+],\, \overline\Delta[\delta^-,\delta^+]}
      {\tlamlu k {\cps Y} q {\tappl{\cps s}{q_s}{<\cps t, k>}}}
    {\overline Y[q z, y^+]} $}
  \end{prooftree}
  where $y^- = q_s z$ and $\overline Y[q z, y^+] = Y'$ and $q_s \in \LL$.

  Applying the induction hypothesis to the left and right
  sub-derivations of the given derivation shows that
  $(y^-x^+\gamma^+, \kw{Erase}(D(\cps s)), y^+x^-\gamma^-)$
  implements the Int-interpretation of 
  $\SeqTm{\Gamma}{s}{X\to Y}$
  and 
  $(x^-\delta^+, \kw{Erase}(D(\cps t)), x^+\delta^-)$
  implements the Int-interpretation of 
  $\SeqTm{\Delta}{t}{X}$.

  The program obtained by CPS-translation and
  defunctionalization is 
  \[
    (qz\delta^+\gamma^+, 
    \{\Capply_q() = \Capply_{q_s}()\} \cup D(\cps s) \cup D(\cps t),
    y^+\delta^-\gamma^-).
  \]
  By definition, $\kw{Erase}(D(\tlamlu k {\cps Y} q {\tappl{\cps s}{q_s}{<\cps t, k>}}))$
  has the form 
  $\{\Capply_q() = \Capply_{q_s}()\} \cup \kw{Erase}(D(\cps s)) \cup  \kw{Erase}(D(\cps t))$.
  This corresponds to the Int-interpretation of the sequent
  $\SeqTm{\Gamma,\Delta}{s\ t}{Y}$.\qedhere
  \end{itemize}
\end{proof}

\noindent While only for the very small source fragment~\textsc{core},
we have now seen how one can associate interfaces with higher-order types 
and show that the Int-interpretation implements these interfaces
in the same way as CPS-translation and defunctionalization. In 
Lemma~\ref{lem:cps} we have seen how $\eta$-expansion helps with
compositional reasoning.

\section{Base Types}
\label{sect:base}

We now work towards extending the result to a more expressive 
source language, starting with a fragment that extends \textsc{core} 
with non-trivial base types.
Define~\textsc{lin} to be the source fragment with
the syntax shown below and the 
typing rules from Figures~\ref{fig:sourcecore}
and~\ref{fig:sourcenat}.
\begin{align*}
  \text{Types: } && 
  X, Y &\ ::=\  1  \Mid X \to Y \Mid \NN  \\
  \text{Terms: } && s, t  &\ ::=\  * 
  \Mid \tlami x X t \Mid s\ t 
  \Mid n \Mid s+t 
  \Mid \tif s {t_1} {t_2}
\end{align*}
That is, we add the type of natural numbers~$\NN$ with
constant numbers, addition and case distinction,
but still consider only a linear source language.

The example in the Introduction shows that for \textsc{lin}
it is not possible to remove all arguments 
from the $\Capply$-functions, as we have done for \textsc{core}. 
At least certain natural numbers must be
passed as arguments.

\subsection{Interactive Interpretation}
Let us first consider the interpretation of \textsc{lin} in $\Int\TT$.
To this end we extend the definition of the interface $(X^-, X^+)$ as follows:
\begin{align*}
  1^- &= \unit
  &
  \NN^- &= \unit
  &
  (X \lollipop Y)^- &= Y^- X^+
  \\
  1^+ &= \unit
  &
  \NN^+ &= \VN
  &
  (X \lollipop Y)^+ &= Y^+ X^-
\end{align*}
The single value of type $\NN^-$ encodes the request 
to compute a particular number. 
The values of type $\NN^+$ are the possible answers.

It is not completely straightforward to extend the 
Int-interpretation described in the previous section.
Consider for example 
the case of an addition $s+t$ of two closed terms $\SeqTm{}{s}{\NN}$ and $\SeqTm{}{t}{\NN}$.
Suppose we already have programs $(q_s, D_s, a_s)$ and
$(q_t, D_t, a_t)$ for~$s$ and~$t$.
It is not possible to construct a program for~$s+t$ from these 
programs without modifying at least one of them. The problem is that after
evaluating the first summand, we have no way of storing the result while we invoke the
second program to compute the second summand.
A natural way of constructing a program for $s+t$ would be to take
the program $(q,D,a)$ with equations
$\Capply_q() = \Capply_{q_s}()$,
$\Capply_{a_s}(x) = \Capply_{q_t}(x,<>)$,
$\Capply_{a_t}(x,y) = \Capply_a(x+y)$,
the equations from~$D_s$, and the equations from~$\VN\cdot D_t$
(recall the notation $\VN\cdot -$ from Section~\ref{sect:target}).
Here we use~$\VN\cdot D_t$ instead of~$D_t$ 
in order to keep the value~$x$ of the first summand available until the 
second summand is computed, so that we can compute the sum.

One solution to this issue was proposed by Dal Lago and the author
in the form of IntML~\cite{intml}. We consider here a simple special case of this system. 
The basic idea is to annotate the domain of each function type $X\to Y$ 
with a \emph{subexponential}~$A$, which is a target type,
so that function types have the form $\lolli A X Y$.

We define \linexp, a variant of \textsc{lin} with subexponential
annotations. It has the same terms as \textsc{lin}, but the
grammar of types is modified as follows.
\[
  X, Y \ ::=\  1  \Mid \lolli A X Y \Mid \NN  
\]
In this grammar, $A$ ranges over target types.

The subexponential annotations may be explained such that a term~$s$ 
of type $\lolli A X Y$ 
is a function that uses its argument within an environment that
contains an additional value of type~$A$.
The function~$s$ may be applied to any argument~$t$ of type~$X$.
In the interactive interpretation, the application $s\ t$ is such that  
whenever~$s$ sends a query to~$t$, it needs to preserve a value of
type~$A$. It does so by sending the value along with the
query, expecting it to be returned unmodified along with a reply.
For example, addition naturally gets the type  
$\lolli \unit \NN {\lolli \VN \NN \NN}$, as it needs to remember the
already queried value  of the first argument (having type $\VN$) when it queries the
second argument.

It is interesting to note that 
Appel and Shao~\cite[\S 3.2]{DBLP:journals/toplas/ShaoA00} use a 
similar approach of preserving values by passing them as arguments 
for the optimisation of programs in CPS style.

Conceptually, subexponentials may be understood as a generalisation of the 
exponentials of Linear Logic. 
The special case $\lolli \omega X Y$, where the subexponential is the type
$\omega = \mu \alpha.\,\unit + \alpha$ of unbounded natural numbers, 
may be understood as ${!}X\to Y$.
This view corresponds to the construction of the exponential ${!}X$ in 
Game Semantics~\cite{AbramskyJM00} or in 
Geometry of Interaction situations~\cite{DBLP:journals/mscs/AbramskyHS02}.
We make the generalisation to subexponentials because it allows us to make only the assumptions
that are really needed, e.g.~with respect to assuming recursive types in 
the target language. It also allows us to avoid unnecessary encoding operations.
In the above outline of the translation of $s+t$, we could have used 
$\omega \cdot D_t$ instead of $\VN \cdot D_t$, but then
in the definition of $\Capply_{a_s}$ we would need to encode $x$ of
type~$\VN$ into a value of type $\omega$ and in the definition of
$\Capply_{a_t}$ we would need to decode again. 

The typing rules of \linexp are annotated version of the rules of
\textsc{lin}, formulated to keep track of subexponential annotations.
In \linexp contexts are finite lists of variable declarations of the form
$\J x A X$. The typing rules 
with subexponential annotations are are shown in
Figure~\ref{fig:linsubexp}. 
\begin{figure}
\begin{center}
  \begin{tabular}{cc}
    \AxiomC{$ $}
    \LeftLabelSc{ax}
    \UnaryInfC{$ \SeqU{\J x \unit X}{x}{X} $}
    \bottomAlignProof
    \DisplayProof
    &\quad
    \AxiomC{$ $}
    \LeftLabelSc{num}
    \UnaryInfC{$ \SeqU{}{n}{\NN} $}
    \bottomAlignProof
    \DisplayProof
  \end{tabular}
\end{center}
  \begin{prooftree}
    \AxiomC{$ \SeqTm{\Gamma}{t}{Y} $}
    \LeftLabelSc{weak}
    \UnaryInfC{$ \SeqTm{\Gamma,\, \J x A X}{t}{Y} $}
    \AxiomC{$ \SeqTm{\Gamma,\, \J y B Y,\, \J x A X,\, \Delta}{t}{Z} $}
    \LeftLabelSc{exch}
    \UnaryInfC{$ \SeqTm{\Gamma,\, \J x A X,\, \J y B Y,\, \Delta}{t}{Z} $}
    \alwaysNoLine
    \BinaryInfC{}
  \end{prooftree}
\begin{center}
  \begin{tabular}{cc}
    \AxiomC{$ \SeqU{\Gamma,\,\J x A X }{t}{Y} $}
    \LeftLabelSc{$\to$i}
    \UnaryInfC{$ \SeqU{\Gamma}{\lambda x: X.\, t}{\lolli A X Y} $}
    \bottomAlignProof
    \DisplayProof
    &\quad
    \AxiomC{$ 
    \SeqU{\Gamma}{s}{\lolli A X Y}
    $}
    \AxiomC{$ 
    \SeqU{\Delta}{t}{X}
    $}
    \LeftLabelSc{$\to$e}
    \BinaryInfC{$ \SeqU{\Gamma,\, A \cdot \Delta}{s\ t}{Y} $}
    \bottomAlignProof
    \DisplayProof
  \end{tabular}
\end{center}
\begin{center}
  \begin{tabular}{cc}
    \AxiomC{$ \SeqU{\Gamma }{s}{\NN} $}
    \AxiomC{$ \SeqU{\Delta}{t}{\NN} $}
    \LeftLabelSc{add}
    \BinaryInfC{$ \SeqU{\Gamma,\, \VN\cdot \Delta}{s+t}{\NN} $}
    \bottomAlignProof
    \DisplayProof
&\quad
    \AxiomC{$ \SeqW{\Gamma}{s}{\NN} $}
    \AxiomC{$ \SeqU{\Delta_1}{t_1}{\NN} $}
    \AxiomC{$ \SeqU{\Delta_2}{t_2}{\NN} $}
    \LeftLabelSc{if}
    \TrinaryInfC{$ \SeqU{\Gamma,\,\Delta_1,\,\Delta_2}
        {\tif s {t_1} {t_2}}{\NN} $}
    \bottomAlignProof
    \DisplayProof
  \end{tabular}
\end{center}
\caption{\linexp – \textsc{lin} with subexponential annotations}
\label{fig:linsubexp}
\end{figure}
In these rules, we write $A\cdot \Gamma$ for the context obtained by
replacing each declaration $\J x B X$ with $\J x {(A\times B)} X$.

With subexponential annotations, it is straightforward to define
the Int-interpretation.
Extend the definition of $(-)^-$ and $(-)^+$ to \linexp
by 
\begin{align*}
  (\lolli A X Y)^- &= Y^- (A\times X^+) 
  &
  \Gamma^- &= A_n\times X^-_n\dots A_1\times X^-_1
  \\
  (\lolli A X Y)^+ &= Y^+ (A\times X^-)
  &
  \Gamma^+ &= A_n\times X^+_n\dots A_1\times X^+_1,
\end{align*}
where $\Gamma$ is
$\J {x_1} {A_1} {X_1},\dots,\J {x_n} {A_n} {X_n}$.

The interpretation of the rules is shown graphically in
Figure~\ref{fig:intlin}.
The interpretation of rule~\R{ax} remains essentially the same,
but now uses the isomorphism $\unit \times A \iso A$ to treat the
subexponential.
The cases for \R{$\to$i} and \R{$\to$e} must also be modified 
to take subexponentials into account. In the case for \R{$\to$e} 
the box labelled with~$A$ represents the
program obtained by applying the operation $A\cdot (-)$
to the content of the box.  
In this case we moreover make the isomorphisms 
$(A\cdot \Delta)^+ \iso A\cdot \Delta^+$ 
and
$(A\cdot \Delta)^- \iso A\cdot \Delta^-$ 
implicit.
In the cases for \R{add} and \R{if} we omit the contexts
$\Gamma$, $\Delta$, $\Delta_1$ and $\Delta_2$ for better readability.
They are handled as in the case for \R{$\to$e}.
We omit the rules for pairs, which are also modified like the ones for 
functions~\cite{intml}.
In the case for \R{if}, we write ``$0?$'' for the program given by
$\Capply_{a_s}(x) = \tcase{\kw{iszero}(x)}{y}{\Capply_{q_1}(y)}{z}{\Capply_{q_2}(z)}$.
\begin{figure}
\begin{center}
  \includegraphics{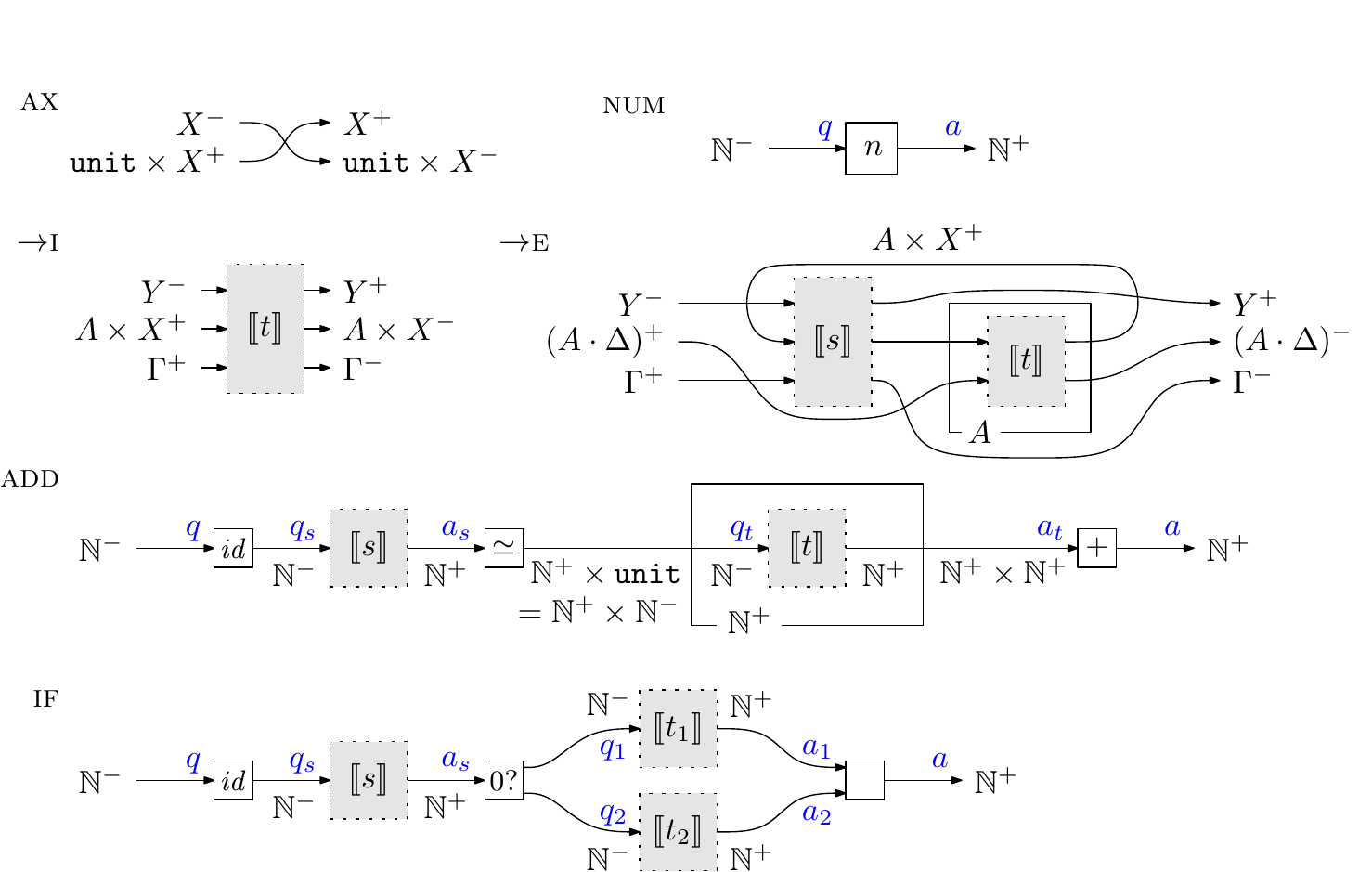}
\end{center}
\caption{Int-interpretation of \linexp}
\label{fig:intlin}
\end{figure}
A concrete definition of the Int-interpretation in terms of 
target equations can also be found in the proof of
Proposition~\ref{prop:skeleton} below.

\subsection{CPS-translation and Defunctionalization}

Let us now outline how this interpretation using the Int-construction relates to the
translation given by CPS-translation and defunctionalization, 
wherein the subexponential annotations are ignored.

A constant number~$n$ has the CPS-translation $\lambda^q k.\, \tappl k a n \colon
\overline \NN[q,a]$, where $\overline\NN[q,a]=\neg_q\neg_a \NN$. 
This defunctionalizes to $\Capply_q(<>, k)=\Capply_a(k,n)$. 
The Int-interpretation yields the definition
$\Capply_q() = \Capply_a(n)$, which differs 
only in that arguments have been removed.

For addition~$s+t$ a CPS-translation is 
$\lambda^q k.\, \tappl {\cps{s}} {q_s} {(\lambda^{a_s} x.\, \tappl{\cps{t}} {q_t} {(\lambda^{a_t} y.\, \tappl k a {(x+y)})})}$.
Defunctionalization leads to the following set of equations.
For the sake of illustration we assume that~$s$ and~$t$ are closed.
\[
D(\cps{s}) \cup D(\cps{t}) \cup
\{
  \begin{aligned}[t]
    &\Capply_q(<>, k) =\Capply_{q_s}(\cps s^*, <k>), \\
    &\Capply_{a_s}(<k>, x) =\Capply_{q_t}(\cps t^*, <k ,x>),\\
    &\Capply_{a_t}(<k,x>, y) =\Capply_{a}(k, x+y) \} .
\end{aligned}
\]
The program obtained in this way has the same shape as 
the program obtained by Int-interpretation. 
The program in Figure~\ref{fig:intlin} is annotated with labels to 
show the correspondence to the equations. 
The programs are not exactly equal. For example, $\Capply_q$ takes
a pair as an argument, 
while the program obtained by Int-interpretation expects a single
value of type~$\NN^-$. We study the relation of the two programs in the
rest of this section.

In a similar manner, the term 
$\tif s {t_1} {t_2}$ 
is CPS-translated to the labelled term
$
\lambda^q k.\, 
   \tappl {\cps{s}} {q_s} 
      {(\lambda^{a_s} x.\, 
         \tif x 
            {\tappl {\cps{t_1}} {q_1} {(\lambda^{a_1} y.\, \tappl k a y)}} 
            {\tappl {\cps{t_2}} {q_2} {(\lambda^{a_2} y.\, \tappl k a y)}})}$.
Defunctionalization gives us the equations 
\[
  D(\cps{s}) \cup D(\cps{t_1}) \cup D(\cps{t_2}) \cup 
\{
  \begin{aligned}[t]
    &\Capply_q(<>, k) =\Capply_{q_s}(\cps s^*, <k>), \\
    &\Capply_{a_s}(<k>, x) = 
      \tmcase {\kw{ifzero}(x)}
      {\_} {\Capply_{q_1}(\cps{t_1}^*, <k>)}
      {\_} {\Capply_{q_2}(\cps{t_2}^*, <k>)}
      \\
      &\Capply_{a_1}(<k>, y) =\Capply_{a}(k, y) ,\\
      &\Capply_{a_2}(<k>, y) =\Capply_{a}(k, y) \},
\end{aligned}
\]
and it can be observed that they correspond to 
the Int-interpretation given in Figure~\ref{fig:intlin}.

The observation that the programs obtained by Int-interpretation and
CPS-translation followed by defunctionalization have the same shape can be
made precise as follows.
\begin{definition}
  We say that two target programs have the same \emph{skeleton} whenever they
  have the same interface and the
  following holds:
  if one of the programs contains the definition $f(x)=g(e)$, then the other contains $f(x)=g(e')$ 
  for some~$e'$; and if one of the programs contains 
  $f(x)=\tcase e x {g(e_1)} y {h(e_2)}$, then 
  the other contains 
  $f(x)=\tcase {e'} x {g(e'_1)} y {h(e_2')}$
  for some~$e'$, $e'_1$ and $e'_2$. 
\end{definition}

We note that for \linexp Lemma~\ref{lem:cps} 
continues to hold and that $\kw{CpsDefun}$ can be defined 
exactly as for \textsc{core} above.

\begin{proposition}
  \label{prop:skeleton}
  For any derivation of~$\SeqTm{\Gamma}{t}{X}$ in \linexp
  there exists a program $\semc{\SeqTm{\Gamma}{t}{X}}$ that is a representative 
  of the Int-interpretation\/ $\sem{\SeqTm{\Gamma}{t}{X}}$ (which
  is a morphism in~$\TT$ and as such
  an equivalence class of programs up to program equality)
  and that has the same skeleton as $\kw{CpsDefun}(\SeqTm{\Gamma}{t}{X})$.
\end{proposition}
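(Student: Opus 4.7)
The plan is to proceed by induction on the derivation of $\SeqTm{\Gamma}{t}{X}$ in \linexp, at each step constructing $\semc{\SeqTm{\Gamma}{t}{X}}$ explicitly as a concrete target program representing the equivalence class $\sem{\SeqTm{\Gamma}{t}{X}}$. A key preparatory move is to choose labels consistently: I will use Lemma~\ref{lem:cps} (which continues to hold for \linexp, as noted) to label $\cps t$, and then reuse exactly those same labels as the interface labels and internal control-flow labels of $\semc{\SeqTm{\Gamma}{t}{X}}$. Because both programs then share the same interface $X^-\Gamma^+ \to X^+\Gamma^-$, the skeleton comparison reduces to comparing equation shapes rule-by-rule.

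For the base cases (\R{ax}, \R{num}) both translations produce a single forwarding equation from the query label to the answer label, carrying at most a constant, so the skeletons coincide trivially. The structural rules \R{weak} and \R{exch} are immediate since they only permute or widen the interface without changing equations. For \R{$\to$i}, the CPS-translation adds an outer $\lambda^q$ that defunctionalizes to a single indirection $\Capply_q(\ldots)=\Capply_{q_t}(\ldots)$ from the new query label to the subterm's former query label, and Figure~\ref{fig:intlin} adds the same forwarding box. For \R{$\to$e}, the compositional application node in Figure~\ref{fig:intlin} plugs the subprogram for $\Delta$ (wrapped by $A\cdot(-)$) into the subprogram for $\Gamma$, producing exactly the three indirection equations at the application's query/answer labels that are generated by defunctionalizing $\lambda^q k.\,\tappl{\cps s}{q_s}{\langle\cps t,k\rangle}$; a routine verification shows the induction hypotheses suffice.

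The \R{add} and \R{if} cases are the ones already worked out in the discussion preceding the proposition: both translations generate the same three forwarding equations for \R{add} and the same set of four equations, with a case distinction on $\kw{iszero}$ at $\Capply_{a_s}$, for \R{if}. Skeleton equality is then direct from the definition, since in \R{if} both programs produce an equation of the form $f(x)=\tcase{e}{\_}{g(\ldots)}{\_}{h(\ldots)}$ with matching target labels $g,h$.

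The main obstacle, and where care is required, is the interaction of the subexponential operator $A\cdot(-)$ with skeletons in the \R{$\to$e} and \R{add} cases. The key observation is that $A\cdot(-)$ was defined on target programs purely by passing an extra value of type~$A$ through each equation: each definition $f(x)=g(e)$ becomes $f(u,x)=g(u,e)$ and each case equation likewise gains a forwarded~$u$. This transformation leaves the set of labels and the shape of each equation (forwarding vs.~case distinction, with the same target labels) unchanged, and hence preserves skeletons. Consequently, applying the induction hypothesis to the subderivation and then wrapping with $A\cdot(-)$ yields a program whose skeleton still matches the corresponding fragment of $\kw{CpsDefun}$, where the extra data~$u$ appears instead as an extra component threaded through the tuples in the target expressions. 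Once this is established, gluing the equations added by the current rule gives the desired skeleton-level equality.
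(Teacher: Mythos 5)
Your proposal follows essentially the same route as the paper's proof: an induction on the \linexp derivation that builds a concrete representative of the Int-interpretation by reusing the labels of the labelled CPS-translation (Lemma~\ref{lem:cps}), checking rule by rule that the added equations have the same shape, and noting that the operation $A\cdot(-)$ merely threads an extra value through each equation and hence preserves skeletons. Only minor slips of detail: in the \R{$\to$e} case the gluing adds a single forwarding equation $\Capply_q()=\Capply_{q_s}()$ (the three extra equations arise in \R{add}), and the \R{ax} case produces one pair of forwarding equations per interface port of the type rather than a single one.
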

\proof
  Recall that $\kw{CpsDefun}$ first translates the derivation of
  $\SeqTm{\Gamma}{t}{X}$ to a labelled derivation of the
  CPS-translated term 
  $\SeqTm{\Gamma[\gamma^-, \gamma^+]}{\cps t}{X[x^-,x^+]}$,
  which is then mapped to the program
  $(x^-\gamma^+, D(\cps t), x^+\gamma^-)$.

  Here we show how to translate the derivation of
  $\SeqTm{\Gamma[\gamma^-, \gamma^+]}{\cps t}{X[x^-,x^+]}$
  to a set of equations $I(t)$, such that 
  the assertion of the proposition is satisfied when we choose
  $\semc{\SeqTm{\Gamma}{t}{X}} := (x^-\gamma^+, I(t), x^+\gamma^-)$.
  The definition of $I(t)$ is given by induction on the original 
  derivation by the following clauses:
  \begin{itemize}
    \item Rule \textsc{ax}. 
      \begin{prooftree}
        \AxiomC{\phantom{X}}
        \UnaryInfC{$ \SeqTm{\I x {\overline{\unit \cdot X}[q_1\dots q_n,a_1\dots a_n]}}{\eta(\cps x, \overline X)}{\overline X[q'_1\dots q'_n,a'_1\dots a'_n]} $}
        \bottomAlignProof
      \end{prooftree}
      We define 
      $I(x) := 
      \{\Capply_{q'_i}(x)=\Capply_{q_i}(<>,x) \mid i=1,\dots,n\}
      \cup
      \{\Capply_{a_i}(<>,x)=\Capply_{a'_i}(x) \mid i=1,\dots,n\} $

    \item
      Rule \textsc{num}.
      \begin{prooftree}
       \AxiomC{}
       \UnaryInfC{$ \SeqTm{}{\cps n}{\overline \NN[q,a]} $}
     \end{prooftree}
     Define $I(n) := \{\textit{apply}_q() = \textit{apply}_a(n)\}$.

   \item Rule \textsc{$\to$i}.
      \begin{prooftree}
       \AxiomC{$ \SeqTm{\overline \Gamma,\, \I x {\overline {A\cdot X}[x^-, x^+]}}{\cps t}{\overline {Y}[q_tz, y^+]} $}
       \UnaryInfC{$ \SeqTm{\overline \Gamma}{\cps{\tlam x t}}{\overline {(A\cdot X \lollipop Y)}[q z x^-, y^+ x^+]} $}
     \end{prooftree}
   Define $I(\tlam x t) := I(t) \cup \{ \textit{apply}_q() = \textit{apply}_{q_t}() \}$.

   \item Rule \textsc{$\to$e}.
     \begin{prooftree}
       \AxiomC{$ \SeqTm{\overline \Gamma}{\cps s}{\overline{(A\cdot X\lollipop Y)}[q_s z x^+, y^+ x^-]} $}
       \AxiomC{$ \SeqTm{\overline \Delta[\delta^-,\delta^+]}{\cps t}{\overline X[x^-, x^+]} $}
       \BinaryInfC{$ \SeqTm{\overline{\Gamma,\,A\cdot \Delta}}{\cps {s\ t}}{\overline Y[q z, y^+]} $}
     \end{prooftree}
     In the Int-interpretation we must account for the
     isomorphisms ${(A\cdot \Delta)^-} \iso A\cdot \Delta^-$ and
     $(A\cdot \Delta)^+ \iso A\cdot \Delta^+$.
     The set of equations $A\cdot I(t)$ gives rise to a program of type 
     $(A\cdot X^-) (A\cdot \Delta^+) \to (A\cdot X^+) (A\cdot \Delta^-)$.
     It is easy to define from it a program of type
     $(A\cdot X^-) (A\cdot \Delta)^+ \to (A\cdot X^+) (A\cdot \Delta)^-$:
     Each definition $\Capply_{q}(u, x) = e$  for $q\in \delta^-$ is 
     replaced by $\Capply_q(<u,v>, y) = e[<v,y>/x]$; and
     each call $\Capply_a(u, x)$ for $a\in \delta^+$ is replaced by
     $\Capply_a(<u, \kw{fst}(x)>, \kw{snd}(x))$, where $\kw{fst}$ and
     $\kw{snd}$ are the evident projections.
     Write $I(A\cdot t)$ for the program obtained in this way.

     With this notation, we can conclude this case by defining
     \[
        I(s\ t):= I(s) \cup I(A\cdot t) \cup\{ \textit{apply}_q() = \textit{apply}_{q_s}() \}.
     \]
   \item Rule \textsc{add}.
     \begin{prooftree}
       \AxiomC{$ \SeqTm{\overline \Gamma}{\cps s}{\overline \NN[q_s,a_s]} $}
       \AxiomC{$ \SeqTm{\overline \Delta}{\cps t}{\overline \NN[q_t,a_t]} $}
       \BinaryInfC{$ \SeqTm{\overline{\Gamma,\, \VN\cdot \Delta}}{\cps {s+t}}{\overline \NN[q,a]} $}
     \end{prooftree}
     We use the notation $I(\VN\cdot t)$, which is as in the case for
     \R{$\to$e} above, and define:
     \[
       I(s+t) := 
         I(s) \cup I(\VN\cdot t) \cup
       \{
       \begin{aligned}[t]
          &\textit{apply}_q() = \textit{apply}_{q_s}(), \\
          &\textit{apply}_{a_s}(x) = \textit{apply}_{q_t}(x,<>), \\
          &\textit{apply}_{a_t}(x,y) = \textit{apply}_{a}(x + y)\}
       \end{aligned}
     \]

   \item Rule \textsc{if}.
     \begin{prooftree}
       \AxiomC{$ \SeqTm{\overline \Gamma}{\cps s}{\overline \NN[q_s,a_s]} $}
       \AxiomC{$ \SeqTm{\overline{\Delta_1}}{\cps {t_1}}{\overline \NN[q_1,a_1]} $}
       \AxiomC{$ \SeqTm{\overline{\Delta_2}}{\cps {t_2}}{\overline \NN[q_2,a_2]} $}
       \TrinaryInfC{$ \SeqTm{\overline{\Gamma,\, \Delta_1,\, \Delta_2}}{\cps {\tif s {t_1} {t_2}}}{\overline \NN[q,a]} $}
     \end{prooftree}
     Let $I(\tif s {t_1} {t_2})$ be
     \[
       I_{\cps s} \cup I_{\cps {t_1}} \cup I_{\cps {t_2}} \cup
       \{\begin{aligned}[t]
           &\textit{apply}_q() = \textit{apply}_{q_s}(), \\
           &\textit{apply}_{a_s}(x) = \tmcase {\kw{iszero}(x)} y {\textit{apply}_{q_1}(y)} z {\Capply_{q_2}(z),} \\
           &\textit{apply}_{a_1}(x) = \textit{apply}_{a}(x),\\
           &\textit{apply}_{a_2}(x) = \textit{apply}_{a}(x)\}.
           \rlap{\hbox to 178 pt{\hfill\qEd}}
       \end{aligned}
    \]
  \end{itemize}

\noindent The proposition establishes a simple connection between the general shape
of the programs.

Let us now compare the values that are being passed around during program execution.
Consider closed terms of type~$\NN$. 
It follows by soundness of each of the two translations that 
the program obtained by defunctionalization and 
that for the Int-interpretation will return the same number as their end result.
If we consider the programs with the same skeleton constructed above,
then we can say more, however. We can show that during the computation
the two programs jump to the same labels in the same order. 
The argument values of these jumps are not exactly the
same, however. One may consider the values 
appearing in the program obtained by 
Int-interpretation as simplifications of the values
appearing at the same time in the traces of the program obtained by
defunctionalization. 
The following example illustrates the correspondence informally.

\begin{example}
Consider the source term $((\lambda x.\, 1 + x)\ 42)$ of type~$\NN$.
The result of CPS-translation and labelling is the term
\[
  \lambda^{l_0} k.\, \tappl{\cps t} {l_1} {<\lambda^{l_5} k''.\, \tappl {k''}
    {l_4} {42}, k>}
  \colon \neg_{l_0} \neg_{l_6} \NN \enspace,
\]
where $\cps t$ is spelled out in Example~\ref{ex:defun}. 
Defunctionalization gives us the following definitions.
\begin{align*}
  \Capply_{l_0}(<>, k) &=\  \Capply_{l_1}(<>,<<>,k>),
  &
  \Capply_{l_1}(<>, <x,k>) &=\  \Capply_{l_2}(<>,<x,k>) ,
  \\
  \Capply_{l_2}(<>, k') &=\  \Capply_{l_3}(k', 1) ,
  &
  \Capply_{l_3}(<x,k>, u) &=\  \Capply_{l_5}(x, <k,u>) ,
  \\
  \Capply_{l_4}(<k,u>, n) &=\ \Capply_{l_6}(k, u+n) ,
  &
  \Capply_{l_5}(<>, k'') &=\  \Capply_{l_4}(k'',42).
\end{align*}
The program of the same skeleton obtained by Int-interpretation is:
\begin{align*}
  \Capply_{l_0}() &=\  \Capply_{l_1}(),
  &
  \Capply_{l_1}() &=\  \Capply_{l_2}() ,
  \\
  \Capply_{l_2}() &=\  \Capply_{l_3}(1) ,
  &
  \Capply_{l_3}(m) &=\  \Capply_{l_5}(m) ,
  \\
  \Capply_{l_4}(m, n) &=\ \Capply_{l_6}(m+n),
  &
  \Capply_{l_5}(m) &=\  \Capply_{l_4}(m,42) .
\end{align*}
Both programs have entry label $l_0$ and exit label $l_6$.

Let us now compare how these programs compute their result.
A call trace of the first program, in which a closed continuation
represented by~$<>$ is given as argument, is:
\begin{align*}
&\Capply_{l_0}(<>,<>)\  
\Capply_{l_1}(<>,<<>,<>>) \ 
\Capply_{l_2}(<>,<<>,<>>)\ 
\Capply_{l_3}(<<>,<>>, 1)\ 
\Capply_{l_5}(<>, <<>, 1>)\ 
\\
&
\Capply_{l_4}(<<>,1>, 42)\ 
\Capply_{l_6}(<>, 43)
\intertext{%
The call trace of the second program is:}
&\Capply_{l_0}()\ 
\Capply_{l_1}() \
\Capply_{l_2}()\
\Capply_{l_3}(1)\
\Capply_{l_5}(1)\
\Capply_{l_4}(1, 42)\
\Capply_{l_6}(43)
\end{align*}
The point is that the traces are the same, up to simplification of values by
removing unneeded $<>$-values.
\end{example}

In the rest of this section we study the relation of the traces of the
programs obtained by the two translations. The example illustrates that the
traces of both programs jump to the same labels in the same order.
The main issue is to compare the argument values of each such jump.
We compare not the argument values themselves (keeping track of the technical
details appears to be non-trivial), but only what needs to be stored in order
to encode these values, i.e.~what a compiler needs to store in machine code.

For any target value~$v$, we define a multiset $\mathcal{V}(v)$ of the 
numbers it contains as follows: 
if $v=n$ then $\mathcal{V}(v) = \{n\}$,
if $v=<v_1,v_2>$ then
$\mathcal{V}(v) = \mathcal{V}(v_1) \cup \mathcal{V}(v_2)$,
and $\mathcal{V}(v) = \emptyset$ otherwise (values of
recursive types or sum types cannot appear).
The definition of~$\mathcal{V}(v)$ is motivated by considering how the
value~$v$ would eventually be encoded on a machine. A good compiler back-end
would need to store in memory only the values in~$\mathcal{V}(v)$, as the
rest of the information in~$v$ is given statically by the type.
We say that a value $v$ \emph{simplifies} a value $w$ if
$\mathcal{V}(v) \subseteq \mathcal{V}(w)$.
For example, the value $<2,<3,3>>$ simplifies $<1,<<2,<>>,<3,<2,3>>>>$,
but not $<2,3>$.
We say that a call trace $f_1(v_1)\dots f_n(v_n)$ \emph{simplifies} the
call trace $g(w_1)\dots g_n(w_n)$ if, for any $i\in\{1,\dots,n\}$,
$f_i=g_i$ and~$v_i$ simplifies~$w_i$.

With this terminology, we can 
express that the Int-interpretation of any term simplifies
its CPS-translation and defunctionalization in the sense that 
it differs only in that unused function arguments are removed and
function arguments are rearranged.

We shall analyse the behaviour of the program $\semc{\SeqTm{\Gamma}{t}{X}}$.
We use the notation $\semc{A\cdot(\SeqTm{\Gamma}{t}{X})}$
for the program obtained from $A\cdot \semc{\SeqTm{\Gamma}{t}{X}}$ 
by inserting the isomorphisms $(A\cdot \Gamma)^+ \to A \cdot
\Gamma^+$ and $A\cdot \Gamma^- \to (A\cdot \Gamma)^-$, as described
in the proof of Proposition~\ref{prop:skeleton} above
(case \R{$\to$e}).

\begin{theorem}
  \label{prop:trace}
  Let\/ $\SeqTm{}{t}{\NN}$, 
  let $(q, D_{\cps t}, a) := \kw{CpsDefun}(\SeqTm{\ }{t}{\NN})$
  and let\/ $\semc{\SeqTm{\ }{t}{\NN}}$ be the program from
  Proposition~\ref{prop:skeleton}.
  Then, any call-trace of\/ $\semc{\SeqTm{\ }{t}{\NN}}$ beginning with $\Capply_q()$ 
  simplifies the call-trace of\/~$\kw{CpsDefun}(\SeqTm{\ }{t}{X})$ of the same length
  that begins with $\Capply_q(<>, <>)$.
\end{theorem}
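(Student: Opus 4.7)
\medskip

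\noindent\textbf{Proof proposal.} The plan is to proceed by induction on the typing derivation of $\SeqTm{\Gamma}{t}{X}$, after suitably strengthening the statement to an invariant about arbitrary (not necessarily closed, not necessarily of base type) derivations. Proposition~\ref{prop:skeleton} already guarantees that $\semc{\SeqTm{\Gamma}{t}{X}}$ and $\kw{CpsDefun}(\SeqTm{\Gamma}{t}{X})$ share a skeleton; every $\Capply_f$-definition in one corresponds to a $\Capply_f$-definition in the other with the same right-hand label structure. Consequently, the two programs must produce call-traces of the same length that visit the same sequence of labels, provided the driving input on corresponding entry labels already stands in the simplification relation. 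The whole content of the theorem therefore lies in tracking how the \emph{argument values} transported along these jumps differ, and showing that the Int version always carries a value $v$ with $\mathcal V(v)\subseteq \mathcal V(w)$ for the corresponding CPS/defunctionalized value~$w$.

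First I would fix a precise invariant. For a derivation of $\SeqTm{\Gamma}{t}{X}$, write the interfaces of $\semc{\cdot}$ as $X^-\Gamma^+ \to X^+\Gamma^-$ and those of $\kw{CpsDefun}(\cdot)$ as $x^-\gamma^+ \to x^+\gamma^-$ where $x^\pm,\gamma^\pm$ are the labellings from Lemma~\ref{lem:cps}. The invariant I would carry is: for any external label $f$ in the common skeleton and any pair of starting values $v,w$ with $\mathcal V(v)\subseteq \mathcal V(w)$, the unique maximal $\semc{\cdot}$-trace starting with $\Capply_f(v)$ and the unique maximal $\kw{CpsDefun}(\cdot)$-trace starting with $\Capply_f(w)$ have the same length and the same label sequence, and at each position the value carried by the former simplifies that of the latter. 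This generalises the theorem (take $f=q$, $v=w=\langle\rangle$, and note that closed-term traces start at the entry label $q$) and, critically, it is closed under composition of subprograms.

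Next I would verify the invariant rule by rule using the explicit equations for $I(t)$ given in the proof of Proposition~\ref{prop:skeleton}. The leaf cases \R{ax} and \R{num} are immediate; the \R{ax} case is the place where the $\eta$-expansion in the CPS-translation pays off, because it produces exactly the identity forwarders $\Capply_{q'_i}/\Capply_{a_i}$ which the Int version matches modulo tuples of $\langle\rangle$. For \R{$\to$i} the glue $\Capply_q() = \Capply_{q_t}()$ is identical on both sides, so the invariant transfers directly from the sub-derivation. For \R{add} and \R{if} the three glue equations are syntactically the same modulo whether pairs are formed or projected, and again $\mathcal V(\langle x,\langle\rangle\rangle)=\{x\}=\mathcal V(x)$, so simplification is preserved. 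The subtle work is concentrated in \R{$\to$e}.

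The main obstacle, as expected, is the application case, where the Int interpretation silently inserts the isomorphisms $(A\cdot\Delta)^\pm\iso A\cdot\Delta^\pm$, while the CPS/defunctionalized program stores the subexponential data by pair-nesting inside the continuation closures manufactured by Lemma~\ref{lem:cps}. Here I would argue as follows: by induction hypothesis the Int program for the sub-derivation of $s$ simplifies its CPS counterpart, and likewise the program $\semc{A\cdot(\SeqTm{\Delta}{t}{X})}$ for the argument sub-derivation, after the explicit repairing construction $\Capply_q(\langle u,v\rangle,y)=e[\langle v,y\rangle/x]$ and $\Capply_a(\langle u,\kw{fst}(x)\rangle,\kw{snd}(x))$ described in the proof of Proposition~\ref{prop:skeleton}, simplifies the corresponding fragment $A\cdot D(\underline t)$ produced by defunctionalization. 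The repairing construction only rebrackets pairs, so it preserves $\mathcal V$; the defunctionalized version on the other hand carries additional $\langle\rangle$'s coming from closure representations of function values (recall that $(\lambda^l x.\,t)^*=\langle x_1,\dots,x_n\rangle$ records only the free variables, and a closed $\lambda$-abstraction yields $\langle\rangle$), and $\mathcal V(\langle\rangle)=\emptyset$ is absorbed in any superset. Composing these two uses of the induction hypothesis with the three glue equations of $I(s\ t)$ yields the invariant for $s\ t$. Once this case is handled, the theorem follows by specialising to $\Gamma$ empty, $X=\NN$, $f=q$, and $v=w=\langle\rangle$.
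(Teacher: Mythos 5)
Your overall plan differs from the paper's: the paper does \emph{not} prove the theorem by a compositional induction on typing derivations. It proves it by induction on the \emph{size} of closed terms of type $\NN$, writing an application as $(\lambda x.\,s)\ t_1\,\dots\,t_n$ and reducing to the shorter term $s[t_1/x]\ t_2\,\dots\,t_n$ via the two substitution lemmas (Lemmas~\ref{lem:substcpsdefun} and~\ref{lem:subst}), together with Lemma~\ref{lem:trace}, whose second part says the computation of a closed term of type $\NN$ is \emph{parametric} in the continuation $k$ passed at the entry label. That parametricity is precisely what replaces the value-level compositional invariant you are trying to set up, and the paper states explicitly that comparing the argument values compositionally ``appears to be non-trivial'' --- which is where your proposal has a genuine gap.

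The gap is in your strengthened invariant. You quantify over arbitrary entry values $v,w$ related only by $\mathcal{V}(v)\subseteq\mathcal{V}(w)$ and claim the two maximal traces then visit the same label sequence with pointwise simplification. Multiset inclusion of the numeric contents records nothing about \emph{where} a number sits inside a value, so it cannot force the $\kw{iszero}$ case distinctions (rule \R{if}, and the case equations arising from \R{contr}-style coercions once one goes beyond \linexp) to branch the same way in both programs: for an open judgement such as $\I x \NN \vdash \tif x {t_1} {t_2} : \NN$, an entry value $v=0$ for the Int program and $w=<c,5>$ for the defunctionalized program with $0\in\mathcal{V}(c)$ satisfy your hypothesis but produce different label sequences. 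Sharing a skeleton does not save you here, since for case-distinction equations the jump target is exactly what depends on the values. The same weakness breaks the composition step in your \R{$\to$e} case: when you glue the traces of $I(s)$ and $I(A\cdot t)$ at the interface labels $x^-,x^+$, you must re-apply the induction hypothesis at each re-entry, and the values handed across the interface (subexponential data on the Int side, continuation closures containing numbers on the CPS side) are related only by $\mathcal{V}$-inclusion, which is not enough to conclude that both composites re-enter at the same labels the same number of times. To make your route work you would need a finer, type-indexed (positional) relation between the Int values and the CPS closures --- essentially the bookkeeping the paper deliberately avoids --- or else fall back on the paper's combination of the substitution lemmas and the continuation-parametricity of Lemma~\ref{lem:trace}.
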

This theorem allows us to consider the Int-interpretation as a
simplification of the program obtained by defunctionalization.
This simplification seems quite similar to other optimisations of
defunctionalization, in particular lightweight
defunctionalization~\cite{banerjee}. However, we do not know 
any variant of defunctionalization in the
literature that gives exactly the same result.
One may consider the Int-interpretation
as a new approach to optimising the defunctionalization of
programs in continuation passing style.

To prove the theorem we use a few lemmas. The first two are
substitution lemmas.

\begin{lemma}
  \label{lem:substcpsdefun}
  If\/ $\SeqTm{\Gamma,\, \I x X}{s}{Y}$ and\/
  $\SeqTm{\ }{t}{X}$ are derivable in \textsc{lin},
  then so is\/
  $\SeqTm{\Gamma}{s[t/x]}{Y}$.
  Moreover,
  there exist a set of labels $E\subseteq \LL$
  and a bijective renaming $\rho\colon \LL \to \LL$,
  such that:
  If $\Capply_{l_1}(v_1)\dots \Capply_{l_n}(v_n)$
  is a trace of 
  $\kw{CpsDefun}(\SeqTm{\Gamma,\, \I x X}{s}{Y})\cup \kw{CpsDefun}(\SeqTm{\ }{t}{X})$,
  then 
  $c_1\dots c_n$
  is a trace of\/
  $\kw{CpsDefun}(\SeqTm{\Gamma}{s[t/x]}{Y})$, where
  \[
    c_i = 
    \begin{cases}
      \Capply_{\rho(l_i)}(v_i) & \text{if $l_i \notin E$,} \\
      \varepsilon & \text{otherwise.}
    \end{cases}
  \]
  Furthermore, all traces of\/ $\kw{CpsDefun}(\SeqTm{\Gamma}{s[t/x]}{Y})$
  arise in this way.
\end{lemma}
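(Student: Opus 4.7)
The plan is to dispatch the derivability claim by a routine induction on the derivation of $\SeqTm{\Gamma,\I x X}{s}{Y}$, with weakening handling the case where $x$ does not actually occur in~$s$. Because the fragment under consideration is linear, $x$ appears at most once in $s$, which will keep the bookkeeping for the second part of the lemma manageable: there is no duplication of $\kw{CpsDefun}(t)$ to account for.

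For the trace correspondence the driving observation is the interplay between the compositionality of $\kw{CpsDefun}$ and the $\eta$-expansion inserted by the CPS-translation at each variable occurrence. Each use of $x$ in $s$ is CPS-translated as $\eta(x,\overline X)$, which in turn defunctionalizes to a block of pure forwarding equations of the shape $\Capply_{q'_i}(v) = \Capply_{q_i}(x,v)$ and $\Capply_{a_i}(x,v) = \Capply_{a'_i}(v)$ connecting the interface labels of $x$ to the fresh labels that $\cps s$ uses in its surrounding context. In the union $\kw{CpsDefun}(s)\cup \kw{CpsDefun}(t)$ this block of forwarding equations sits between $s$'s use site and the entry/exit labels of $\kw{CpsDefun}(t)$, whereas in $\kw{CpsDefun}(s[t/x])$ the CPS-translation of $t$ is inlined at the position formerly occupied by $x$, so that the forwarding layer disappears.

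Accordingly, I would take $E$ to consist of exactly the forwarding labels introduced by the $\eta$-expansion of~$x$, and define $\rho$ as the identity outside a finite patch, with its non-trivial behaviour identifying each entry/exit label of $\kw{CpsDefun}(t)$ with the corresponding interface label of~$x$ as used in the defunctionalization of $s[t/x]$. The trace bijection is then proved by induction on the derivation of $s$; the only non-trivial case is \R{ax} applied to~$x$, where one sets up the local pairing between the two label blocks, while all other rules (including \R{ax} for other variables, constants, $+$, conditional, abstraction, and application) propagate $E$ and $\rho$ upward essentially unchanged because they do not touch the $\eta$-block attached to~$x$.

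The main obstacle will be pinning down $\rho$ precisely enough that the erasure really yields a contiguous trace: the forwarding block may be traversed in either direction (when the environment queries $t$ and when $t$ replies), and one must verify that each maximal run of calls in $E$ is sandwiched between two matching calls outside $E$ whose labels are related by $\rho$. Once this is checked by inspecting the target clauses for each operator, extending the partial renaming to a global bijection on~$\LL$ is automatic since~$\LL$ is infinite, and the converse direction (every trace of $\kw{CpsDefun}(s[t/x])$ lifts) follows by inserting the forwarding steps back in at each crossing of the inlined copy of $\kw{CpsDefun}(t)$.
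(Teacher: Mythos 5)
Your proposal follows essentially the same route as the paper's own proof outline: induction on the derivation of $\SeqTm{\Gamma,\,\I x X}{s}{Y}$, with the only interesting case being \R{ax} at $x$, where the union program differs from $\kw{CpsDefun}(\SeqTm{\Gamma}{s[t/x]}{Y})$ precisely by the forwarding equations produced by defunctionalizing the $\eta$-expansion of $x$; one then takes $E$ to be the labels of these indirection calls and $\rho$ a renaming absorbing the different label choices. Your extra remarks on contiguity of the erased trace and on extending $\rho$ to a bijection on the infinite set $\LL$ are sound elaborations of the same argument.
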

\begin{proof}[Proof outline]
This lemma is proved by induction on the derivation of 
$\SeqTm{\Gamma,\, \I x X}{s}{Y}$. 
The only interesting case is that where the
last rule is~\R{ax} and~$s$ is~$x$.
In this case the definitions in
$\kw{CpsDefun}(\SeqTm{\Gamma}{s[t/x]}{Y})$ and
$\kw{CpsDefun}(\SeqTm{\Gamma,\, \I x X}{s}{Y})\cup \kw{CpsDefun}(\SeqTm{\ }{t}{X})$
differ only in that the latter contains equations of the form
$\Capply_{l}(x) = \Capply_{l'}(x)$ that come from the $\eta$-expansion
of~$x$. The traces of the two programs thus differ only up to 
removal of these indirections. For the set~$E$ we choose the labels of
the calls that must be removed. A renaming~$\rho$ may be necessary to
deal with different choices of names in $\kw{CpsDefun}$.
\end{proof}

\begin{lemma}
  \label{lem:subst}
  If\/
  $\SeqTm{\Gamma,\, \J x A X}{s}{Y}$ and\/
  $\SeqTm{\ }{t}{X}$
  are derivable in \linexp, then so is\/
  $\SeqTm{\Gamma}{s[t/x]}{Y}$.
  Moreover,
  the set of labels $E\subseteq \LL$ and
  the bijective renaming $\rho\colon \LL \to \LL$
  from Lemma~\ref{lem:substcpsdefun} have 
  the following property:
  If $\Capply_{l_1}(v_1)\dots \Capply_{l_n}(v_n)$
  is a trace of\/ 
  $\semc{\SeqTm{\Gamma,\, \I x X}{s}{Y}}\cup \semc{A\cdot (\SeqTm{\ }{t}{X})}$,
  then\/ $\semc{\SeqTm{\Gamma}{s[t/x]}{Y}}$ has a trace 
  of the form $c_1\dots c_n$, where
  \[
    c_i = 
    \begin{cases}
      \Capply_{\rho(l_i)}(w_i) & \text{if $l_i \notin E$,} \\
      \varepsilon & \text{otherwise.}
    \end{cases}
  \]
  and\/ $\mathcal{V}(v_i) = \mathcal{V}(w_i)$.
  Furthermore, all traces of\/ $\semc{\SeqTm{\Gamma}{s[t/x]}{Y}}$
  arise in this way.
\end{lemma}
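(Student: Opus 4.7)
The plan is to proceed by induction on the derivation of $\SeqTm{\Gamma,\, \J x A X}{s}{Y}$, closely mirroring the induction used for Lemma~\ref{lem:substcpsdefun}. Both $\kw{CpsDefun}$ and the assignment $s \mapsto I(s)$ used to define $\semc{-}$ in the proof of Proposition~\ref{prop:skeleton} recurse over the same derivation tree and produce programs of the same skeleton; the difference is only that the equations in $I(-)$ carry simplified values. We may therefore take the same set~$E$ and the same renaming~$\rho$ as in Lemma~\ref{lem:substcpsdefun}, and it suffices to check that composing $\semc{\SeqTm{\Gamma,\, \I x X}{s}{Y}}$ with $\semc{A \cdot (\SeqTm{\ }{t}{X})}$ and then erasing the labels in~$E$ produces, under~$\rho$, a trace whose positions match label by label those of $\semc{\SeqTm{\Gamma}{s[t/x]}{Y}}$ and whose argument values have the same multiset of numeric components.

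First I would dispatch the base case \R{ax}, where $s = x$. Here $I(x)$ consists only of bridging equations of the form $\Capply_{q'_i}(y) = \Capply_{q_i}(<>,y)$ and $\Capply_{a_i}(<>,y) = \Capply_{a'_i}(y)$, which parallel the indirections produced by the CPS-translation of the $\eta$-expansion $\eta(\cps x, \overline X)$. The set~$E$ from Lemma~\ref{lem:substcpsdefun} is precisely the set of labels of these bridging calls, and erasing the corresponding calls in the $\semc{-}$-trace mimics exactly the erasure of the CPS $\eta$-indirections in the $\kw{CpsDefun}$-trace. Since the bridges only wrap and unwrap unit components $<>$, the renaming~$\rho$ acts identically on the surviving labels and $\mathcal{V}$ is preserved at each surviving position.

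For the inductive cases I would use the recipes given in the proof of Proposition~\ref{prop:skeleton}: the new equations introduced at the outermost step (such as $\Capply_q() = \Capply_{q_s}()$ in \R{$\to$i}, the wiring in \R{add}, or the case analysis in \R{if}) are identical between $\kw{CpsDefun}$ and $\semc{-}$ up to argument erasure, so they neither introduce fresh labels into~$E$ nor new name mismatches beyond~$\rho$. The bulk of each case then reduces to applying the induction hypothesis to the sub-derivations and concatenating the resulting traces; values introduced at the outer step are forwarded unchanged or combined by $+$ or $\kw{iszero}$, in lockstep with the $\kw{CpsDefun}$-side, so $\mathcal{V}$ is preserved.

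The main obstacle I anticipate lies in the \R{$\to$e} and \R{add} cases, where $\semc{-}$ invokes $I(A\cdot t)$: this operator replaces $\Capply_q(u,x) = e$ by $\Capply_q(<u,v>,y) = e[<v,y>/x]$ and rewrites call sites using $\kw{fst}$ and $\kw{snd}$ so as to realise the isomorphism $(A \cdot \Delta)^{\pm} \iso A \cdot \Delta^{\pm}$. Along a trace this re-associates the components of a value but neither duplicates nor discards any numeric component. I would discharge this by introducing, once and for all, a structural bijection on target values that implements the isomorphism, verifying by induction on target types that it preserves $\mathcal{V}$, and then threading it through the inductive step so that the witnesses $v_i$ and $w_i$ in the statement differ only by repeated application of this bijection at matching positions of the trace.
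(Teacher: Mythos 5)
Your proposal is correct and follows essentially the same route as the paper: an induction on the derivation of $\SeqTm{\Gamma,\, \J x A X}{s}{Y}$ mirroring Lemma~\ref{lem:substcpsdefun}, with the observation that the $\semc{-}$-traces only agree up to structural isomorphisms (unit wrapping in the \R{ax} case, the $(A\cdot\Delta)^{\pm}\iso A\cdot\Delta^{\pm}$ reshuffling), which is exactly why only $\mathcal{V}(v_i)=\mathcal{V}(w_i)$ is claimed. Your explicit treatment of the $I(A\cdot t)$ re-association via a $\mathcal{V}$-preserving bijection is a sound elaboration of the paper's brief remark about the isomorphism $(\unit\times A)\iso A$.
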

This lemma is again proved by induction on the derivation of
the first sequent. The statement is slightly weaker, as 
the traces of the two sets of equations may differ also
up to applications of the isomorphism $(\unit \times A) \iso A$,
as can be seen by considering the case where the last rule deriving 
$\SeqTm{\Gamma,\, \I x X}{s}{Y}$ is \R{ax} and $s$ is~$x$.
Thus, we only get $\mathcal{V}(v_i) = \mathcal{V}(w_i)$.

The next lemma says that any closed program of type~$\NN$ will
indeed eventually give an answer, as would already follow from soundness, and
moreover, the continuation that accepts the final answer is just passed 
along in the course of the computation; the computation itself
does not depend on the continuation.
\begin{lemma}
  \label{lem:trace}
  Let\/ $\SeqTm{}{t}{\NN}$ and let~$\cps t$ be labelled such that  
  $\SeqTm{}{\cps t}{\overline \NN[q,a]}$ is derivable. 
  Then the following are true.
  \begin{enumerate}
    \item
      Any call trace of $D(\cps t)$ beginning with 
      a call of the form $\Capply_q(<>, k)$, for some~$k$,
      can be extended to end with a call 
      $\Capply_a(k, v)$ for some value~$v$.
    \item
      If 
      $\Capply_q(<>, k_1) \dots \Capply_l(v_1, v_2)$
      and
      $\Capply_q(<>, k_2) \dots \Capply_{l'}(v'_1, v'_2)$
      are two call traces
      of $D(\cps t)$ of the same length,
      then $l = l'$ and there exist expressions $e_1$ and
      $e_2$ and variables $x_1$ and $x_2$, such that 
      $v_1 = e_1[k_1/x_1]$ and 
      $v_2 = e_2[k_2/x_2]$ holds.
  \end{enumerate}
\end{lemma}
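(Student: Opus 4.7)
The plan is to prove both parts simultaneously by induction on the derivation of $\SeqTm{}{t}{\NN}$, after strengthening the statement to accommodate subderivations at higher type and in arbitrary contexts. The two parts are intertwined: part~(2) says the computation is parametric in the top-level continuation $k$, while part~(1) says this parametric computation actually terminates in a call that delivers a value to~$k$.

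The strengthened invariant I would prove is the following. Given a CPS-derivation $\SeqTm{\overline\Gamma[\gamma^-,\gamma^+]}{\cps s}{\overline X[x^-,x^+]}$ arising from a source derivation $\SeqTm{\Gamma}{s}{X}$ in \textsc{lin}, every call-trace of $D(\cps s)$ starting from an entry call $\Capply_\ell(\vec v)$ at a label $\ell$ of the interface has all subsequent calls of the form $\Capply_{\ell'}(\vec e\,[\vec v/\vec u])$, where $\vec e$ is a tuple of expressions in fresh variables $\vec u$ that depends only on the derivation and on the label sequence seen so far, not on the particular values~$\vec v$. Part~(2) then follows by instantiating this invariant twice, with $\vec v=(\langle\rangle,k_1)$ and $\vec v=(\langle\rangle,k_2)$: both runs visit the same labels in the same order, and the final argument values differ precisely by substituting $k_i$ for the template variable that tracks the continuation slot.

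The inductive step follows the structure of the CPS-translation from Figure~\ref{fig:cps}. The cases \textsc{ax} and \textsc{num} produce simple forwarding equations that trivially satisfy the invariant. For \textsc{$\to$i} the new equation $\Capply_q(\vec u,\langle x,k\rangle) = \Capply_{q_t}(\langle\vec u,x\rangle,k)$ merely permutes arguments and carries $k$ into a continuation slot of the subderivation. For \textsc{add} the new equations repeatedly repackage $k$ into $\langle k\rangle$, $\langle k,x\rangle$ and eventually deliver it via $\Capply_a(k,x+y)$, without ever inspecting it. The case \textsc{if} introduces the first nontrivial case-distinction in a trace, but its scrutinee is $\kw{iszero}(x)$ where $x$ is the number returned by $\cps s$; the continuation template is copied into both branches unchanged. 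The case \textsc{$\to$e} is the most intricate: the continuation supplied to $\cps s$ is the pair $\langle\cps t,k\rangle$, and control may bounce between $\cps s$ and $\cps t$ along shared labels of $\overline X[x^-,x^+]$. The $\eta$-expansion of variables built into the CPS-translation (which underpins Lemma~\ref{lem:cps}) guarantees that these labels line up, so that the templates of the two subderivations glue to a template for the whole.

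The main obstacle is exactly the bookkeeping in the \textsc{$\to$e} case, where one must verify that the templates from the two premises compose coherently across the shared interface and that the continuation slot is preserved through every such transition. Termination in part~(1) finally follows from the fact that \textsc{lin} is a simply typed $\lambda$-calculus without recursion, so every closed term of type $\NN$ normalizes to a numeral; by soundness of the CPS-translation and of defunctionalization, the trace cannot loop forever and must end at the unique exit label~$a$, and the parametric invariant then forces the first component of that final call to be exactly~$k$.
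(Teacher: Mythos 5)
The central gap is part~(1). You discharge termination by appealing to normalization of \textsc{lin} together with ``soundness of the CPS-translation and of defunctionalization'', but no such soundness or simulation theorem is established in the paper, and proving one---i.e.\ that the call-trace of $D(\cps t)$ tracks source evaluation and therefore reaches a call $\Capply_a(k,v)$---is essentially the content of the lemma itself. Strong normalization of the source calculus says nothing directly about the first-order tail-recursive program $D(\cps t)$; the transfer is exactly the hard part. The paper closes this gap with a reducibility (logical-relations) argument: it defines $R(\NN)$ as the set of closed terms satisfying the assertion of the lemma, $R(X\to Y)$ as the closed terms $s$ with $s\ t\in R(Y)$ whenever $t\in R(X)$, and proves by induction on the typing derivation that every $\SeqTm{\Gamma}{t}{X}$ satisfies $\forall\sigma\in R(\Gamma).\,t\sigma\in R(X)$, using the substitution Lemma~\ref{lem:substcpsdefun} in the abstraction case. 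That type-indexed strengthening is precisely what your invariant is missing.

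Relatedly, the template invariant is too weak even for part~(2), and it breaks exactly at the \R{$\to$e} case you flag as the ``main obstacle''. First, since your templates are allowed to depend on ``the label sequence seen so far'', the invariant does not by itself imply that the runs started with $k_1$ and $k_2$ follow the same label sequence; for that you need the additional property that no branch condition (the $\kw{iszero}$ scrutinees) depends on the continuation slot, i.e.\ genuine parametricity in $k$, and this has to be part of the induction hypothesis rather than a remark made case by case. Second, at \R{$\to$e} the induction hypotheses describe traces of $D(\cps s)$ and $D(\cps t)$ in isolation, whereas the composite trace interleaves segments of the two along the shared interface labels of $\overline X[x^-,x^+]$, with $\cps t^*$ and the continuation packed into closure values that $\cps s$ later unpacks and re-enters; an invariant stated per derivation at a fixed interface does not compose across these re-entries unless it is strengthened at higher types---which is what the paper's relation $R(X\to Y)$ accomplishes, and what also delivers the extension property of part~(1). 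Asserting that ``the templates of the two subderivations glue'' names this obstacle but does not overcome it.
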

\noindent
Note that the second point implies that if~$f(w)$ is a call in a call
trace beginning with $\Capply_q(<>, k)$, then~$k$ must simplify~$w$.
\begin{proof}
  In the proof we do not need subexponential annotations, so 
  we formulate it for \textsc{lin}.

  For each type~$X$ we define a set $R(X)$ of closed terms as follows: 
  $R(\NN)$ consists of all closed terms~$t$ that satisfy the assertion of
  the lemma;
  $R(X\to Y)$ consists of all closed terms~$s$ of type $X\to Y$ such that
  $t\in R(X)$ implies $s\ t \in R(Y)$.
  For any \textsc{lin}-context~$\Gamma$, we define $R(\Gamma)$ to be the
  set of all substitutions~$\sigma$ that map each variable declared in~$\Gamma$
  to a closed term, such that $\I x X \in \Gamma$ implies $\sigma(x)\in R(X)$. 

  The proof of the lemma then goes by showing by induction on the 
  derivation that each derivable $\SeqTm{\Gamma}{t}{X}$ 
  has the property $\forall \sigma \in R(\Gamma).\, t\sigma \in R(X)$. 
  The case for $\lambda$-abstraction follows using
  Lemma~\ref{lem:substcpsdefun}.
\end{proof}

\begin{proof}[Proof of Theorem~\ref{prop:trace}]
  The proof goes by induction on the size of the term~$t$.
  We continue by case distinction and consider representative cases.
  To simplify the notation, we just write $\semc{t}$ instead of
  $\semc{\SeqTm{\Gamma}{t}{X}}$.
  \begin{itemize}
    \item
      $t$ is $s_1+s_2$, i.e.~the derivation of $\SeqTm{}{t}{\NN}$ 
      ends with rule \R{add}.

      We observe that a labelling of the term $\cps{(s_1+s_2)}$ must have the
      following form 
      \[
      \SeqTm{}
      {\lambda^q k.\, \tappl {\cps{s_1}} {q_1} 
      {(\lambda^{a_1} x.\, \tappl {\cps{s_2}} {q_2} {(\lambda^{a_2} y.\, \tappl k a {(x+y)})})}}
      {\overline \NN[q,a]},
      \]
      where~$q$ and~$a$ are fresh and where $\cps{s_1}$ and $\cps{s_2}$
      are labelled such that 
      $\SeqTm{}{\cps{s_1}}{\overline{ \NN}[q_1, a_1]}$ and
      $\SeqTm{}{\cps{s_2}}{\overline{ \NN}[q_2, a_2]}$ are derivable.

      The program $D(\cps{s_1+s_2})$ consists of the set of equations
      \[
        D(\cps{s_1}) \cup D(\cps{s_2}) \cup \{
          \begin{aligned}[t]
            &\Capply_q(<>, k)=\Capply_{q_1}(<>, <k>),\\
            &\Capply_{a_1}(<k>, x)=\Capply_{q_2}(<>, <k,x>),\\
            &\Capply_{a_2}(<k,x>, y)=\Capply_{a}(k, x+y)\}.
          \end{aligned}
      \]

      On the other hand, the program $\semc{s_1+s_2}$ consists of
      the equations
      \[
        \semc{s_1} \cup \semc{\VN\cdot s_2} \cup \{
          \begin{aligned}[t]
           &\Capply_q()=\Capply_{q_1}(),\\
           &\Capply_{a_1}(x,<>)=\Capply_{q_2}(x),\\
           &\Capply_{a_2}(x,y)=\Capply_{a}(x+y)\}.
          \end{aligned}
      \]

      By the above Lemma~\ref{lem:trace}, we know that the call-trace of~$D(\cps t)$ 
      beginning with $\Capply_q(<>, <>)$ must have the form
      \[
      \Capply_q(<>, <>)\ \tau_1\ \tau_2\ \Capply_a(<>, x+y),
      \]
      where $\tau_1$ and $\tau_2$ must have the following forms:
      \begin{align*}
      \tau_1 &=\Capply_{q_1}(<>, <<>>)\dots \Capply_{a_1}(<>,x)\\
      \tau_2 &= \Capply_{q_2}(<>,<<>,x>) \dots \Capply_{a_2}(<<>,x>, y)
      \end{align*}
      Applying the induction hypothesis for~$s_1$ shows that
      the trace of~$\semc{s_1}$ starting with 
      $\Capply_{q_1}()$ simplifies 
      the trace of~$D(\cps t)$ starting with
      $\Capply_{q_1}(<>, <>)$.
      Using Lemma~\ref{lem:trace}, we get the desired property
      for~$\tau_1$.
      Similarly, the induction hypothesis for~$s_2$ shows that
      the trace of~$\semc{s_2}$ starting with 
      $\Capply_{q_2}()$ simplifies the trace of~$D(\cps t)$ starting with
      $\Capply_{q_2}(<>, <>)$.
      By Lemma~\ref{lem:trace}, the trace from 
      $\Capply_{q_2}(<>,<<>,x>)$ differs only in that it replaces
      $<>$ with $<<>,x>$ in each call and at
      least one position. But this shows
      that the trace of $\semc{\VN \cdot s_2}$
      starting with $\Capply_{q_2}(x)$ simplifies~$\tau_2$.
      Together this shows the desired property of the whole trace.
    \item
      $t$ cannot be a $\lambda$-abstraction, as its type is~$\NN$.
    \item
      $t$ is an application. In this case,~$t$ must have the 
      form $(\lambda x. s)\ t_1\ \dots\ t_n$, as it is a closed term.
      Notice that the term $s[t_1/x]\ t_2\ \dots\ t_n$ is shorter and
      $\SeqTm{}{s[t_1/x]\ t_2\ \dots\ t_n}{\NN}$ is still
      derivable.
      Hence, we can apply the induction hypothesis to it.

      It follows from Lemmas~\ref{lem:substcpsdefun}
      and~\ref{lem:subst} and the definition of the translations
      of $\lambda$-abstraction and application that 
      the desired result for $(\lambda x. s)\ t_1\ \dots\ t_n$
      follows from the result for $s[t_1/x]\ t_2\ \dots\ t_n$
      obtained by induction hypothesis.\qedhere
  \end{itemize}
\end{proof}

\section{Simple Types}
\label{sect:stl}

In this section, we strengthen the source language by adding
contraction, explain how the Int-interpretation can be extended 
and how it relates CPS-translation and defunctionalization.
With increasing 
expressiveness of the source language,
the syntactic details of defunctionalization become harder to
manage. For defunctionalization we now need a more expressive control
flow analysis, and the translation uses the recursive types in the 
target language.
We shall argue that a type system with subexponential
annotations, adapted from IntML, offers a simple
and conceptually clear way of 
managing such details.
We concentrate in this section only on the relationship between
program interfaces and skeletons.

We consider the source fragment \textsc{stl} of the simply-typed
$\lambda$-calculus with
the following syntax and the typing rules from
Figures~\ref{fig:sourcecore}--\ref{fig:sourcecontr}.
\begin{align*}
  \text{Types: } && 
  X, Y &\ ::=\  1  \Mid X \to Y \Mid \NN  \\
  \text{Terms: } && s, t  &\ ::=\   * 
  \Mid \tlami x X t \Mid s\ t 
  \Mid n \Mid s+t 
  \Mid \tif s {t_1} {t_2}
\end{align*}

\subsection{CPS-translation and Defunctionalization}

The CPS-translation defined in Section~\ref{sect:cps}
restricts to \textsc{stl}.
The defunctionalization procedure described
in Section~\ref{sect:defun}, however, is too simple to handle
contraction.
The control-flow annotations therein are not sufficient;
they need to be extended so that applications can be annotated with more than one label.

Banerjee et al.~\cite{banerjee} use a calculus with control flow
annotations, in which applications are annotated with sets of labels 
instead of just a single label. 
Thus, $\tappl s {\{l_1,\dots,l_n\}} t$ means that~$s$ is 
a term whose evaluation may have any of the functions with
label $l_1,\dots, l_n$ as a result.
Such an application is defunctionalized into 
a case distinction on the function that actually appears for~$s$ during
evaluation:
\[
(\tappl s {\{l_1,\dots,l_n\}} t)^* = \fcase {s^*} 
{l_1(\vec x)} {\Capply_{l_1}(l_1(\vec x), t^*) ;  \dots} 
{l_n(\vec y)} {\Capply_{l_n}(l_n(\vec y), t^*) }.
\]
Note that such a case distinction is possible only if labels are
actually passed as values.
To encode labels, one typically uses algebraic
data types whose constructors correspond to the function labels. 
To handle the full simply-types $\lambda$-calculus, one must allow for recursive
algebraic data types. An example is given in Example~\ref{ex:rec} on page \pageref{ex:rec}.

We define a variant of the labelled $\lambda$-calculus of 
Banerjee et al.~\cite{banerjee}, which is suitable
for the target language considered here (the target language in~\cite{banerjee} has
union types, while we use disjoint sums here).

\newcommand{\Coercl}[2]{\kw{coercl}_{#1}(#2)}
\newcommand{\Coercr}[2]{\kw{coercr}_{#1}(#2)}
\newcommand{\coercl}[2]{\kw{coercl}(#2)}
\newcommand{\coercr}[2]{\kw{coercr}(#2)}

Instead of sets of labels, we annotate applications with \emph{label terms}
formed by the following grammar, in which $l$ ranges over all the
labels from $\LL$.
\[
  L_1, L_2 \ ::=\  l \Mid L_1 + L_2
\]
Write $\LT$ for the set of all label terms.

In the labelled version of \textsc{stl} with product types, each abstraction
$\tlaml x X l t$
is still annotated with a unique label $l\in \LL$. 
Applications $\tappl s L t$, however, are now annotated with a label term.
Function types are also annotated with a
label term instead of just a single label.
Moreover, we extend the type system with 
explicit coercion terms $\Coercl{L}{t}$ and
$\Coercr{L}{t}$. 
The syntax of the labelled \textsc{stl} with products is therefore given
as follows.
\begin{align*}
  \text{Types: } && 
  X, Y &\ ::=\  1  \Mid X \xto L Y \Mid X\times Y\Mid \NN \Mid \bot \\
  \text{Terms: } && s, t  &\ ::=\  
  \begin{aligned}[t]
  * 
  &\Mid \tlaml x X l t \Mid \tappl s L t 
  \Mid <s, t> \Mid \tlet {s} {<x, y>} {t} 
  \\
  &\Mid n \Mid s+t 
  \Mid \tif s {t_1} {t_2} 
  \\
  &\Mid \Coercl{L}{t} \Mid \Coercr{L}{t}
\end{aligned}
\end{align*}
The typing rules for the new and modified terms are: 
 \begin{prooftree}
   \AxiomC{$ \SeqTm{\Gamma,\, \I x X}{t}{Y} $}
   \centerAlignProof
   \UnaryInfC{$ \SeqTm{\Gamma}{\tlaml x X l t}{X \xto{l} Y} $}
   \AxiomC{$ \SeqTm{\Gamma}{s}{X\xto{L} Y} $}
   \AxiomC{$ \SeqTm{\Delta}{t}{X} $}
   \centerAlignProof
   \BinaryInfC{\raise.5em\hbox{$ \SeqTm{\Gamma,\, \Delta}{\tappl s L t}{Y} $}}
   \alwaysNoLine
   \BinaryInfC{}
 \end{prooftree}
 \begin{prooftree}
   \AxiomC{$ \SeqTm{\Gamma}{t}{X \xto{L_1} Y} $}
   \centerAlignProof
   \UnaryInfC{$ \SeqTm{\Gamma}{\Coercl{L_1 + L_2}{t}}{X \xto{L_1 + L_2} Y} $}
   \AxiomC{$ \SeqTm{\Gamma}{t}{X \xto{L_2} Y} $}
   \centerAlignProof
   \UnaryInfC{$ \SeqTm{\Gamma}{\Coercr{L_1 + L_2}{t}}{X \xto{L_1 + L_2} Y} $}
   \alwaysNoLine
   \BinaryInfC{}
 \end{prooftree}
The type $X \xto{(l_1 + l_2) + l_3} Y$ thus is the type of functions with label
$l_1$, $l_2$ or $l_3$. Annotating this type with the term $(l_1 + l_2) + l_3$,
as opposed to the set $\{l_1, l_2, l_3\}$, is convenient for technical
reasons, as our target
language has disjoint sum types and not union types, which means 
that we cannot assume associativity.

We shall often omit the subscript~$L$ in the terms $\Coercl{L}{t}$ and
$\Coercr{L}{t}$, when it can be reconstructed from type information.

With these changes to the label annotations, we can extend the
defunctionalization procedure to cover the whole source language.
The new terms are defunctionalized as follows (the notation
$l(x_1,\dots, x_n)$ is explained below).
\begin{align*}
  (\tappl s L t)^* &= \Capply_L(s^*, t^*)\\
  (\tlaml x X l t)^* &= l(x_1,\dots, x_n) \text{ where $\mathrm{FV}(\tlaml x X l t)=\{x_1,\dots, x_n\}$} \\
  (\Coercl{L}{t})^* &= \kw{inl}(t^*) \\
  (\Coercr{L}{t})^* &= \kw{inr}(t^*) \\
\intertext{Definitions:}
  D(\tappl s L t) &= D(s) \cup D(t) \cup D(L)\\
  D(\tlaml x X l t) &= D(t) \cup \{ \Capply_l(l(x_1,\dots, x_n), x) = t^* \} \\
  D(\Coercl{L}{t}) &= D(t) \cup D(L)\\
  D(\Coercr{L}{t}) &= D(t) \cup D(L)
\end{align*}
where
\begin{align*}
  D(l) &= \emptyset \\
  D(L_1 + L_2) &= D(L_1) \cup D(L_2) \cup 
   \{\Capply_{L_1+L_2}(f,x) =
   \tmcase f {f_1} {\Capply_{L_1}(f_1, x)} {f_2} {\Capply_{L_2}(f_2, x)\}}
\end{align*}
The term $l(x_1,\dots, x_n)$ plays the role of a constructor in a 
functional language. For each label~$l \in \LL$ we assume a 
data type~$\tau_l$ with a single constructor called~$l$ with
arguments of appropriate type to make the above definition type correct. 
In ML-notation one would write
\[
  \tau_{l} = \kw{datatype}\ l\ \kw{of}\ A
\]
for a suitable type~$A$.
We extend the definition of $\tau_l$ to label terms by letting 
$\tau_{L_1+L_2} = \tau_{L_1} + \tau_{L_2}$.
The definition of these types is 
such that in a definition of $\Capply_L(f, x)$, 
the variable~$f$ will have type~$\tau_{L}$.

If one writes out all the data type definitions for a given term,
then one may obtain a set of mutually recursive data type definitions,
as the type~$\tau_l$ may appear in the argument type~$A$ of
its constructor (see Example~\ref{ex:rec}).
In cases where the definitions are not actually recursive, it would be
possible to remove the constructors and work just with tuples instead. 

\begin{example}
  Let us illustrate the modified defunctionalization
  by considering the CPS-translation of $\tlami x \NN {x + x}$.
  For this example, $\eta$-expansion is not important, so we 
  omit it for simplicity.
  The CPS-translated and simplified term
  may be annotated with label terms as follows: 
  \[
    \lambda^{l_1} <x,k>.\, 
      \tappl x {l_4} 
         {\coercl{l_2+l_3}{\lambda^{l_2} m.\, \tappl x {l_4} {\coercr{l_2+l_3}{\lambda^{l_3} n.\, \tappl k {l_5}{(m+n)}}}}}
  \]
  Its type is
  $(((\NN \xto{l_2 + l_3} \bot) \xto{l_4} \bot) \times (\NN \xto{l_5} \bot)) \xto{l_1} \bot$. 
  As a concrete argument one may think of 
  $<\lambda^{l_4} k.\, \tappl k {l_2 + l_3} 42,\, \lambda^{l_5} n.\, \texttt{print\_int}(n)>$.
  Note the use of the label term~$l_2 + l_3$. The two applications
  of~$x$ could not be typed using the simple labelled $\lambda$-calculus 
  from Section~\ref{sect:defun}.

  Defunctionalization turns the term into the following definitions.
  \begin{align*}
    \Capply_{l_1}(l_1(), <x,k>) &= \Capply_{l_4}(x, \kw{inl}({l_2}(x,k)))
    \\
    \Capply_{l_2}(l_2(x,k), m) &= \Capply_{l_4}(x, \kw{inr}(l_3(m,k)))
    \\
    \Capply_{l_3}(l_3(m,k),n) &= \Capply_{l_5}(k, m+n) 
    \\
    \Capply_{l_4}(l_4(), k) 
    &=  \Capply_{l_2 + l_3}(k, 42)\\
    \Capply_{l_2+l_3}(k, n)  
    &=
    \tmcase k {f_1} {\Capply_{l_2}(f_1, n)} 
    {f_2} {\Capply_{l_3}(f_2, n)} 
    \\
    \Capply_{l_5}(l_5(), n) &= \texttt{print\_int}(n)
  \end{align*}
The types of the constructors are: 
\begin{align*}
  \tau_{l_1} &= \kw{datatype}\ l_1\ \kw{of}\ \unit 
  &
  \tau_{l_2} &= \kw{datatype}\ l_2\ \kw{of}\ \tau_{l_4} \times \tau_{l_5} \\
  \tau_{l_3} &= \kw{datatype}\ l_2\ \kw{of}\ \VN \times \tau_{l_5} 
  &
  \tau_{l_4} &= \kw{datatype}\ l_4\ \kw{of}\ \unit \\
  \tau_{l_5} &= \kw{datatype}\ l_5\ \kw{of}\ \unit 
\end{align*}
In this example, these types are not actually recursive, so 
we could remove the constructors, replacing $l_2(x,k)$ just by
the tuple $<x,k>$, etc.
\end{example}

Next we show how any CPS-translated \textsc{stl}-term can be
suitably annotated with labels, 
so that defunctionalization can be applied. 

We carry over the notation $\overline X[x^-, x^+]$ from
Section~\ref{sect:linear}, but now allow $x^-$ and $x^+$ range over
$\LT^*$ instead of $\LL^*$.

In order to label the CPS-translated terms, we now have to deal with the new case
for contraction. 
Recall the CPS-translation of contraction from Figure~\ref{fig:cps}:
\begin{center}
\begin{tabular}{ccc}
  \AxiomC{$ \SeqTm{\Gamma,\,\I y X,\, \I z X}{t}{Y} $}
  \UnaryInfC{$ \SeqTm{\Gamma,\,\I x X}{t[x/y, x/z]}{Y} $}
  \DisplayProof
  &\ $\Longrightarrow$ &
  \AxiomC{$ \SeqTm{\overline\Gamma,\,\I y {\overline X},\, \I z {\overline X}}{\underline t}{\overline Y} $}
  \UnaryInfC{$ \SeqTm{\overline\Gamma,\,\I x {\overline X}}{
  {\underline t[\eta(x,\overline{X})/y, \eta(x, \overline{X})/z]}}{\overline Y} $}
  \DisplayProof
\end{tabular}
\end{center}
The use of~$\eta$ expansions
allows us to label the
CPS-translated terms in a compositional way,
much like in Lemma~\ref{lem:cps}.
For the sake of illustration, consider the case where~$\Gamma$ is empty and
where~$X$ is~$\NN$, so that $\overline X$ is $\neg \neg\NN$.
Suppose we have already labelled the premise of the CPS-translation, say as in
\[
\SeqTm{\I {y} {\neg_{q_1} \neg_{a_1} \NN},\, \I {z} {\neg_{q_2} \neg_{a_2}
    \NN}}{\cps t}{\overline Y[y^-,y^+]}.
\]
Note that the types of the variables~$y$ and~$z$ will in general be annotated with different
labels. This means that these variables have different types and we cannot use contraction to make them
into a single variable~$x$.
However, we can annotate the $\eta$-expansion of~$x$, i.e.~the term $\eta(x,
\neg\neg\NN)$, in the following two ways,
in which
$a'_1$ and $a'_2$ are fresh labels and~$q$ is any label term.
\begin{align*}
\SeqTm{\I x {\neg_q \neg_{a'_1+a'_2} \NN}}
{
  \lambda^{q_1} k.\, 
  \tappl x {q} 
  {\coercl {} {\lambda^{a_1'} y.\, \tappl k {a_1} y}}
}{\neg_{q_1} \neg_{a_1} \NN }
\\
\SeqTm{\I x {\neg_q \neg_{a'_1+a'_2} \NN}}
{
\lambda^{q_2} k.\, \tappl x {q} {\coercr {} {\lambda^{a_2'} y.\, \tappl k {a_2} y}}
}{\neg_{q_2} \neg_{a_2} \NN }
\end{align*}
If we substitute the first term for~$y$ and the second term for~$z$, then
the resulting term is a labelled version of
$\SeqTm{\I {x} {\neg_{q} \neg_{a'_1+a'_2} \NN}}{\cps t[\eta(x,\overline{X})/y,
  \eta(x, \overline{X})/z]}{\overline Y[y^-,y^+]}$,
i.e.~the conclusion of the CPS-translation of contraction.
This outlines how we can substitute $\eta$-expansions of~$x$, whereas
we could not just substitute~$x$ for both~$y$ and~$z$,

Of course, it remains to be shown that it is possible to find a labelling
of the whole term even when the type of~$x$ contains label terms, such as
${a'_1+a'_2}$ instead of just fresh labels.
Note that in the type of~$x$ we could not have put a single label in place of 
 ${a'_1+a'_2}$, as $a'_1$ and~$a'_2$ are the unique labels of  
two different abstractions.

The next two lemmas show that it is indeed always possible to label CPS-translated 
terms appropriately.
The following lemma first generalises the above labelling of $\eta$-expansion
from~$\NN$ to an arbitrary type~$X$. In the subsequent lemma, this is then used 
to deal with the case of contraction as outlined above.

For any term~$t$ of labelled \textsc{stl}, we write $|t|$ for
the \textsc{stl}-term obtained by removing all label annotations
and deleting all coercions. Likewise, we write $|X|$ for the 
type obtained by removing all label annotations.
In the following lemma, we also extend the operation~$+$ 
to sequences of label terms: If $r\in \LT^*$ is $L_1\dots L_n$
and $r'\in \LT^*$ is $L_1'\dots L_n'$, then we write
$r+r'$ for $(L_1 + L'_1) \dots (L_n + L'_n)$.
\begin{lemma}
  \label{lem:eta}
  For any \textsc{stl}-type~$X$ there exists labels $a'_1, a'_2, q_1, q_2 \in \LL^*$   
  and labelled terms~$t_1$ and~$t_2$, such that
  $|t_1| = |t_2| = \eta(x,\overline X)$ and 
  such that the judgements
  \[
    \SeqTm{\I x {\overline X[q, a'_1 + a'_2]}}{t_1}{\overline X[q_1, a_1]}
      \qquad\text{and}\qquad
      \SeqTm{\I x {\overline X[q, a'_1 + a'_2]}}{t_2}{\overline X[q_2, a_2]}
  \]
  are derivable for all label terms $q, a_1, a_2\in \LT^*$ for which the types
  ${\overline X[q,  a'_1 + a'_2]}$,
  ${\overline X[q_1, a_1]}$ and ${\overline X[q_2, a_2]}$ are defined.   
\end{lemma}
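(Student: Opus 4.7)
The plan is to prove the lemma by induction on the \textsc{stl}-type~$X$. The core technical idea is that the new coercion forms $\Coercl{L}{\cdot}$ and $\Coercr{L}{\cdot}$ are exactly the device needed to reconcile a single fresh label on a $\lambda$-abstraction (the rule \R{$\to$i} only allows one $l\in\LL$ on a $\lambda$) with the sum annotation $a'_1+a'_2$ that must appear on $x$'s type. Coercions will be inserted only at the innermost continuations created by $\eta$-expansion; the rest of the labelled term shape mirrors $\eta(x,\overline X)$ exactly.

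For the base cases $X\in\{1,\NN\}$, every port sequence is a singleton and $\eta(x,\overline X) = \lambda k.\, x\,(\lambda u.\, k\, u)$. We pick fresh $q_1,q_2,a'_1,a'_2\in\LL$ and set
\begin{align*}
t_1 &= \lambda^{q_1} k.\,\tappl{x}{q}{\Coercl{a'_1+a'_2}{\lambda^{a'_1} u.\,\tappl{k}{a_1}{u}}},\\
t_2 &= \lambda^{q_2} k.\,\tappl{x}{q}{\Coercr{a'_1+a'_2}{\lambda^{a'_2} u.\,\tappl{k}{a_2}{u}}}.
\end{align*}
A direct check against the typing rules shows that $t_1,t_2$ inhabit the required types independently of how $q,a_1,a_2$ are chosen. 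For the inductive case $X = Y\to Z$, I write $\overline{Y\to Z}[q,a'_1+a'_2] = \neg_{q_0}(\overline Y[\vec y^-,\vec y^+]\times Z'')$, where $q_0$ is the head of $q$, the sequences $\vec y^\pm$ are read off from the port-sequence split, and $Z''$ is what remains of $\overline Z$ after removing the outer $\neg$. The $\eta$-expansion then takes the form $\lambda k.\, x\,(\tlet k {<f,g>} {<\eta(f,\overline Y),\,\eta(g,\cps Z)>})$. For the first component I invoke the induction hypothesis on $Y$ with $f$ playing the role of $x$, obtaining two labellings with identical input type for $f$ and fresh output labels. For the second component I run the analogous induction on $Z$; since $\cps Z$'s alternating $\neg/\times$-structure mirrors that of $\overline Z$ apart from the outer negation, it admits the same flexibility. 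Finally I wrap the assembled body in $\lambda^{q_i} k$ using fresh $q_1,q_2\in\LL$.

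The main obstacle is the book-keeping. One must apportion $a'_1+a'_2$ coherently across all nested subexpressions so that (i)~the type of $x$ coincides in $t_1$ and $t_2$, (ii)~every $\lambda$ in $t_1\cup t_2$ gets a globally unique label from $\LL$, and (iii)~coercions remain confined to the leaves. Because $\eta(x,\overline X)$ intermixes $\overline Y$-shaped and $\cps Z$-shaped continuations at the function-type step, the cleanest presentation is to carry through a twin statement for $\cps X$ alongside the stated one for $\overline X$ by mutual induction; each half then supplies exactly what the other needs at the boundary of rule \R{$\to$i}.
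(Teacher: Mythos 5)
Your base cases are exactly the paper's, and your guiding idea (induction on the type, with coercions reconciling the unique label required on each $\lambda$ with the sum annotation $a'_1+a'_2$) is also the paper's. The gap is in the inductive step: the lemma as stated cannot serve as its own induction hypothesis, and your proposed remedy does not repair this. Unfolding $\overline{Y\to Z}[q,\,a'_1+a'_2] = \neg_{q_0}(\overline Y[y^-_x,y^+_x]\times Z'_x)$ and the target $\overline{Y\to Z}[q_1,a_1] = \neg_{l}(\overline Y[y^-_o,y^+_o]\times Z'_o)$, the $Y$-component of the $\eta$-expansion must turn $f\colon \overline Y[y^-_o,y^+_o]$ into a term of type $\overline Y[y^-_x,y^+_x]$, where $y^+_o$ consists of single fresh labels (they sit inside $q_1$, which you are choosing), $y^-_o$ and $y^+_x$ are arbitrary label terms, and $y^-_x$ consists of the sums coming from $a'_1+a'_2$. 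This is the \emph{mirror image} of the lemma's pattern (sums on the input's positive ports, fresh labels on the output's negative ports), so it is not an instance of the induction hypothesis for $Y$; your phrase ``with $f$ playing the role of $x$, obtaining two labellings with identical input type for $f$ and fresh output labels'' gets the direction wrong, since both types in that sub-judgement are forced by choices already made at the outer level. A ``twin statement for $\cps X$'' only addresses the $\overline{(\cdot)}$-versus-$\cps{(\cdot)}$ syntactic mismatch; it does not say, and would still have to say, how the sum/fresh/arbitrary pattern transforms under the contravariant flip, and you never state the twin precisely, so the induction does not close as written.

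The paper resolves this by proving a single stronger statement: for any pair of labelled types $X_1,X_2$ with $|X_1|=|X_2|$, there is a labelling $t$ of $\eta(x,|X_1|)$ with $\SeqTm{\I x {X_1}}{t}{X_2}$ derivable, provided the list $P(X_1)\,N(X_2)$ of label terms in ``$\lambda$-positions'' (positive positions of the input type together with negative positions of the output type, defined by $N(Y\xto{L}Z)=L\,P(Y)\,N(Z)$ and $P(Y\xto{L}Z)=N(Y)\,P(Z)$) admits pairwise distinct labels occurring as subterms; application-annotation positions are unconstrained. That condition is stable under the swap at function types, because $P(Y_1\xto{L_1}Z_1)\,N(Y_2\xto{L_2}Z_2)=N(Y_1)\,P(Z_1)\,L_2\,P(Y_2)\,N(Z_2)$ contains both $P(Y_2)\,N(Y_1)$ (for the flipped $Y$-direction) and $P(Z_1)\,N(Z_2)$, and the head label of the outer $\lambda$ is recovered as a subterm of $L_2$ followed by coercions. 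The lemma is then the special case where the $\lambda$-position label terms are either fresh labels or sums $a'_1(i)+a'_2(i)$. To fix your proof you would need to formulate and prove a strengthening of essentially this kind; with the statement kept as is, even augmented by an unspecified $\cps X$ analogue, the function-type case fails.
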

\begin{proof}
  For any labelled type~$Y$ we define the list of label terms in 
  positive/negative positions in it:
  $P(\NN) = P(\bot) = N(\NN) = N(\bot) = \varepsilon$ (the empty list),
  $N(Y\xto L Z) = L\, P(Y)\, N(Z)$ and
  $P(Y\xto L Z) = N(Y)\, P(Z)$.

  Informally, the $\eta$-expansion of a variable is such that the
  label terms in positive position appear only as annotations of 
  applications.  There are no
  restrictions on the label terms in an application, so an 
  $\eta$-expansion can be typed for arbitrary label terms in positive 
  position.  For the terms in negative position, there are
  constraints however.
  For each label term in a negative position the term contains a 
  $\lambda$-abstraction. Since each
  abstraction must be annotated with a unique label, this 
  leads to the constraint that the label terms in negative position
  can be obtained by coercion from the unique label of
  the abstraction in the term.

  Formally, this can be expressed as follows:
  Let~$X$ be any \textsc{stl}-type and let $X_1$ and $X_2$ be 
  labelled types with $|X_1|=|X_2|=X$.
  Then there exists a term~$t$ with $|t|=\eta(x,X)$ 
  such that $\SeqTm{\I x {X_1}}{t}{X_2}$ is derivable
  whenever
  the list $L_1\dots L_n := P(X_1)N(X_2)$ has the property
  that there are pairwise distinct 
  labels $l_1,\dots, l_n$ such that, for all
  $i\in\{1,\dots,n\}$, the label~$l_i$ is a sub-term of the label term~$L_i$.

  The proof goes by induction on the type~$X$. We spell out the
  case for function types.
  \begin{itemize}
    \item Case $Y\to Z$.
      We have 
      $P(Y_1 \xto {L_1} Z_1)\,N(Y_2 \xto {L_2} Z_2) = 
      N(Y_1)\, P(Z_1)\, L_2\, P(Y_2)\, N(Z_2)$,
      by definition.
      The assumption on this list implies that $P(Y_2)\, N(Y_1)$ has the
      property needed to apply the induction hypothesis to~$Y$.
      Hence, there exists a labelled 
      term~$t_y$ with $|t_y|= \eta(y,Y)$ and $\SeqTm{\I y {Y_2}}{t_y}{Y_1}$.

      Likewise, the list $P(Z_1)\, N(Z_2)$ is such that we can apply
      the induction hypothesis to obtain a term~$t_z$ with $|t_z|= \eta(z,Z)$ 
      and $\SeqTm{\I z {Z_1}}{t_z}{Z_2}$.

      If we define $t' = \lambda^l.\, t_z[\tappl f {L_1} {t_y}/z]$, we
      therefore have $|t'|=\eta(f,Y\to Z)$ and
      $\SeqTm{\I f {Y_1\xto{L_1} Y_1}}{t'}{Y_2\xto{l} Y_2}$.

      By assumption, we know that we can choose~$l$ to be sub-term of~$L_2$, 
      without violating the constraint that each $\lambda$-abstraction
      must be uniquely identified by its label.
      The result therefore follows by applying coercions to~$t'$.
  \end{itemize}
  The assertion now follows as a special case, where only  
  coercions from $a'_1$ to $a'_1+a'_2$ and from $a'_2$ to
   $a'_1+a'_2$ are used.
\end{proof}

With this lemma we can show that each derivation obtained by
CPS-translation can be typed in the labelled variant of~\textsc{stl}. 
\begin{lemma}
  \label{lem:cpsstl}
  If\/ $\SeqTm{\Gamma}{t}{X}$ is derivable in \textsc{stl},
  then there exist label terms~$x^-, \gamma^+ \in \LT^*$, such that,
  for all label terms $x^+, \gamma^- \in \LT^*$  
  for which $\overline \Gamma [\gamma^-, \gamma^+]$ and
  $\overline X [x^-, x^+]$ are defined,
  the sequent
  $\SeqTm{\overline \Gamma [\gamma^-, \gamma^+]}{t'}{\overline X [x^-, x^+]}$
  is derivable for some labelled term $t'$ with $|t'|= \cps t$.
\end{lemma}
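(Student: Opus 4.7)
The proof proceeds by induction on the given derivation of $\SeqTm{\Gamma}{t}{X}$ in \textsc{stl}. It is convenient to strengthen the statement slightly so that, given any finite set of already-used labels to avoid, the chosen $x^-$ and $\gamma^+$ may be required to be disjoint from that set; this threads freshness through the induction, so that each $\lambda$-abstraction introduced along the way carries a unique label, as the labelled calculus requires.

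The structural rules \R{weak} and \R{exch} are immediate, and the cases \R{1i}, \R{num}, \R{add}, \R{if}, \R{$\to$i}, \R{$\to$e} proceed essentially as in Lemma~\ref{lem:cps} and the analogous cases of Section~\ref{sect:base}: each $\lambda$-abstraction introduced by the CPS-translation is given a fresh label, contributing to $x^-$ or $\gamma^+$, while the bound continuation label is left as part of the arbitrary $x^+$ or $\gamma^-$. In the \R{ax} case, where the CPS-translated term is $\eta(x,\overline X)$, I would invoke Lemma~\ref{lem:eta} (or more precisely the general technical statement proved inductively inside its proof, using a single fresh label in each negative position rather than a sum) to obtain the desired labelling.

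The main obstacle is the contraction rule. Applying the induction hypothesis to the premise $\SeqTm{\overline\Gamma,\,\I y{\overline X},\,\I z{\overline X}}{\cps t}{\overline Y[y^-,y^+]}$ yields fixed positive labels for the premise's context; crucially, among these are separate positive labels for $y$ and for $z$, say $p_y$ and $p_z$, which in general are distinct and force $y$ and $z$ to have incompatible types. To build a labelling of the conclusion with a single hypothesis $\I x{\overline X[x^-, x^+]}$ I would invoke Lemma~\ref{lem:eta} on the type $X$ to obtain fresh $a'_1,a'_2,q_1,q_2$ and terms $t_1,t_2$ with $|t_1|=|t_2|=\eta(x,\overline X)$ satisfying $\SeqTm{\I x{\overline X[q,a'_1+a'_2]}}{t_1}{\overline X[q_1,a_1]}$ and $\SeqTm{\I x{\overline X[q,a'_1+a'_2]}}{t_2}{\overline X[q_2,a_2]}$ for all $q,a_1,a_2$. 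I would then instantiate the arbitrary negative labels for $y$ and $z$ in the induction hypothesis to $q_1$ and $q_2$, and the arbitrary $a_1,a_2$ in Lemma~\ref{lem:eta} to $p_y$ and $p_z$. Substituting $t_1$ for $y$ and $t_2$ for $z$ in the labelled premise then yields a labelled term $t'$ with $|t'|=\cps{t[x/y,x/z]}$, in which $x$ has type $\overline X[x^-, a'_1+a'_2]$; declaring $a'_1+a'_2$ to be $x$'s contribution to the $\gamma^+$ of the conclusion (and retaining the IH's choice of $\gamma^+$ on the rest of $\Gamma$) completes the case. This is precisely the step that forces the annotation calculus to carry label-term annotations of the form $L_1+L_2$ together with the coercions $\coercl{}{-}$ and $\coercr{}{-}$ inside $t_1$ and $t_2$: a single label per arrow, as in Sections~\ref{sect:linear}--\ref{sect:base}, cannot merge the two originally distinct labellings of $y$ and $z$ into one.
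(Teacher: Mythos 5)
Your proposal is correct and follows essentially the same route as the paper's proof: induction on the \textsc{stl}-derivation, with the contraction case resolved exactly as in the paper by invoking Lemma~\ref{lem:eta} to label the two $\eta$-expansions of~$x$ with a shared positive annotation $a'_1+a'_2$, instantiating the induction hypothesis's arbitrary negative labels (and the lemma's arbitrary positives) so that substitution goes through, and the \R{ax} case likewise reduced to Lemma~\ref{lem:eta}. Your explicit threading of label freshness through the induction is a harmless bookkeeping strengthening that the paper leaves implicit.
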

\begin{proof}
  The proof goes by induction on the derivation of $\SeqTm{\Gamma}{t}{X}$.
  We consider representative cases.
  To simplify the notation we write just
  $\SeqTm{\overline \Gamma [\gamma^-, \gamma^+]}{\cps t}{\overline X [x^-, x^+]}$
  to express that there exists a labelled term~$t'$ with $|t'|= \cps t$ 
  for which
  $\SeqTm{\overline \Gamma [\gamma^-, \gamma^+]}{t'}{\overline X [x^-, x^+]}$
  is derivable.
  \begin{itemize}
    \item Case \textsc{ax}.
      This case follows directly from Lemma~\ref{lem:eta}.
    \item Case \textsc{$\to$e}.
      By induction hypothesis, there exist label terms $y^-x_1^+,\gamma^+ \in ^*L$ 
      such that
      $
         \SeqTm{\overline\Gamma[\gamma^-,\gamma^+]}{\cps s}{\overline{X \to Y}[y^-x_1^+,y^+x_1^-]}
      $
      is derivable for all label terms $y^+x^-, \gamma^- \in \LT^*$ for which
      all types in the sequent are defined.      
      Also by induction hypothesis, there exists label terms $x_2^-,
      \delta^+\in \LT^*$, such that for all label terms $x_2^+,
      \delta^-$ the sequent
      $\SeqTm{\overline\Delta[\delta^-, \delta^+]}{\cps t}{\overline X[x_2^-, x_2^+]}$
      is derivable.
      In particular, we can choose $x_2^+$ to be $x_1^+$ and 
      $x_1^-$ to be $x_2^+$ and obtain 
      $
      \SeqTm{\overline\Gamma[\gamma^-,\gamma^+],\, \overline\Delta[\delta^-,\delta^+]}{\cps{s\ t}}
      {\overline{Y}[y^-,y^+]}
      $.
      Thus, we have shown that there exist label terms $y^-,\delta^+\gamma^+$, such
      that the sequent is derivable for all label terms $y^+,
      \delta^-\gamma^-$, as was required to show.
    \item Case \textsc{contr}.
      By induction hypothesis, there exist label terms $y^-, \gamma^+, x_1^+, x_2^+$
      such that 
      $\SeqTm{\overline \Gamma[\gamma^-,\gamma^+],\,
        \I {x_1} {\overline X[x_1^-, x_1^+]},\,
      \I {x_2} {\overline X[x_2^-, x_2^+]}}{\cps t}{\overline Y[y^-, y^+]}$
      is derivable for all label terms 
      $y^+, \gamma^-, x_1^-, x_2^-$.
      By Lemma~\ref{lem:eta}, there exist labels $a_1', a_2' \in \LL^*$
      and we can annotate $\eta(x, \overline X)$ to become $t_1$
      and $t_2$ so that
      that $\SeqTm{\I x {\overline X[x^-, a_1' + a_2']}}{t_1}{\overline X[x_1^-, x_1^+]}$
      and
      $\SeqTm{\I x {\overline X[x^-, a_1' + a_2']}}{t_2}{\overline X[x_2^-, x_2^+]}$
      are derivable.

      We annotate the two copies of $\eta(x, \overline X)$
      in the CPS-translation of contraction as $t_1$ and $t_2$.

      Overall we obtain that there exist label terms $y^-, \gamma^+,
      a_1'+a_2'$, such that 
      \[
        \SeqTm{\overline \Gamma[\gamma^-,\gamma^+],\,
        \I {x} {\overline X[x^-, a_1' + a_2']}}
        {\cps{t[x/x_1,x/x_2]}}{\overline Y[y^-, y^+]}
      \]
      is derivable for all $y^+, \gamma^-, x^-$, 
      which shows the assertion.\qedhere
  \end{itemize}
\end{proof}

\noindent This lemma justifies the definition of $\kw{CpsDefun}$ also
for \textsc{stl}: Given a derivation of $\SeqTm{\Gamma}{t}{X}$,
annotate its CPS-translation using the lemma, so that
$\SeqTm{\Gamma[\gamma^-, \gamma^+]}{\cps t}{X[x^-,x^+]}$
becomes derivable, and take
$\kw{CpsDefun}(\SeqTm{\Gamma}{t}{X}) 
:=(x^-\gamma^+, D(\cps t), x^+\gamma^-)$.

\subsection{Interactive Interpretation.}
We now show how the Int-translation can be extended to
\textsc{stl} and how it relates to defunctionalization.
To this end, we again consider a variant of the type system with 
subexponential annotations.
We extend \linexp to \stlexp 
by adding subexponential annotations to the contraction rule
and by adding a new rule \R{struct} for weakening of subexponential annotations.
Rule \R{struct} makes the type system more well-behaved and also
increases the expressive power of the system. It is needed 
at the end of this Section in the 
proof of Proposition~\ref{prop:stl}.

The new rules of \stlexp are shown in Figure~\ref{fig:stlexp}.
\begin{figure}
\begin{prooftree}
  \AxiomC{$ \SeqU{\Gamma,\, \J y A X,\, \J z B X}{t}{Y} $}
  \LeftLabelSc{contr}
  \UnaryInfC{$ \SeqU{\Gamma,\, \J x {(A+B)} X}{t[x/y,x/z]}{Y} $}
\end{prooftree}      
\begin{prooftree}
  \AxiomC{$ \SeqU{\Gamma,\, \J x A X}{t}{Y} $}
  \RightLabel{$A \sleq B$}
  \LeftLabelSc{struct}
  \UnaryInfC{$ \SeqU{\Gamma,\, \J x B X}{t}{Y} $}
\end{prooftree}
\caption{Additional Rules of \stlexp over \linexp}
\label{fig:stlexp}
\end{figure}
To understand the annotations on \R{contr},
recall the explanation of subexponentials as 
making explicit the environment in which a variable is being used. The
judgement in the premise of \R{contr} tells us that
the variables~$y$ and~$z$
are used in environments with additional values of type~$A$ and~$B$
respectively. The subexponential $A+B$ in the conclusion 
tells us that~$x$ may be used in two ways: first in an environment 
that contains an additional variable of type~$A$ and second in one with an additional 
variable of type~$B$. The coproduct identifies the 
two copies of~$x$.
Rule \R{struct} has a side condition $A\lhd B$, which expresses 
that~$A$ is a retract of~$B$, i.e.~that any value of type~$A$ can be
encoded into one of type~$B$. Formally, $A\lhd B$ holds if and only if there
exist target expressions $\SeqTm{\I x A}{s}{B}$ and $\SeqTm{\I y B}{r}{A}$,
such that $r[s[v/x]/y]\red^* v$ holds for any target value~$v$ of type~$A$.
Notice in particular, that for isomorphic types $A\iso B$, we have
both $A\lhd B$ and $B\lhd A$.

The Int-translation of the new rules of \stlexp 
is shown in Figure~\ref{fig:stlexpint}.
Rule \R{contr} is interpreted by use of the isomorphism
$(A+B)\times C \iso A\times C + B \times C$, which is 
implemented using case distinction. 
A message of type $(A+B)\times X^+$ has the form $<\inl(a),x>$ 
or $<\inr(b),x>$. Depending on the case, the message is forwarded
to the occurrence of either~$y$ or~$z$.
In the interpretation of rule \R{struct}, one chooses~$s$ and~$r$ 
to witness $A\lhd B$ as defined above.
The interpretation will be sound for any such choice of~$s$ and~$r$,
see.
\begin{figure}
\begin{center}
  \includegraphics{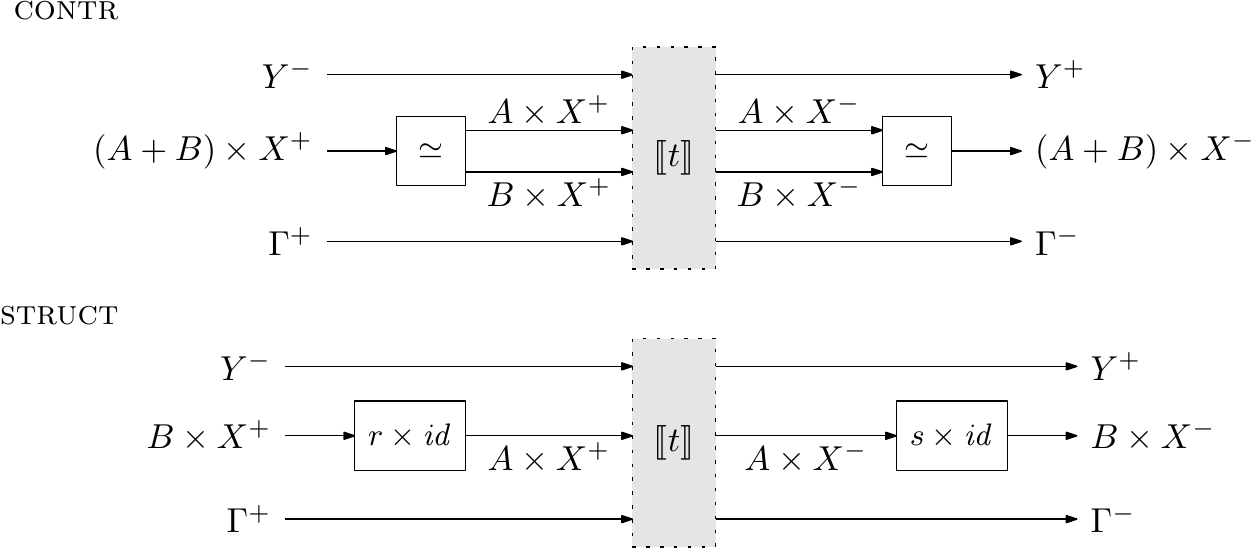}
\end{center}
\caption{Int-Interpretation of new rules in \stlexp}
\label{fig:stlexpint}
\end{figure}

\subsection{Relating the Translations}
We have now defined two translations from \stlexp to the target language.
To relate them, we begin by spelling out a simple example to illustrate that both
translations treat contraction in the same way. 

\begin{example}
Consider again the source term 
$\tlami x \NN {x + x}$.
Its Int-interpretation may be depicted as follows.
\begin{center}
  \includegraphics{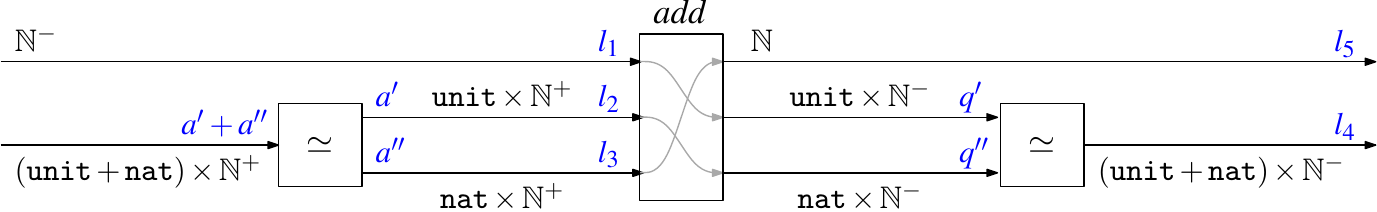}
\end{center}
The box labelled $\mathit{add}$ is defined as in the Introduction, up to
uses of the isomorphism $\unit \times \NN^+ \iso \NN^+$.
The interpretation of rule \R{contr} inserts the two boxes labelled~$\simeq$,
which denote the canonical isomorphism of their type.

This program implements the term $\tlami x \NN {x + x}$ as follows: 
To compute the result of the function when applied to the actual argument 42, one
connects the output of type $(\unit+\VN)\times \NN^-$ to the input of type
$(\unit+\VN)\times \NN^+$ such that when the value $<k',<>>$ arrives at
the output port, then the value $<k',42>$ is fed back to the input
port. 

Consider now the CPS-translation of the term
$\tlami x \NN {x + x}$.
If we omit the $\eta$-expansions in the translation of 
variables for simplicity, then
we obtain the term
\begin{equation}
  \label{eq:1}
  \lambda^{l_1} <x,k>.\, \tappl {t_1} {q'} {(\lambda^{l_2} m.\, \tappl {t_2} {q''} {(\lambda^{l_3} n.\, \tappl k {l_5} {(m+n)})})}
\end{equation}
of type 
$ (((\NN \xto{a'+a''} \bot) \xto{q'} \bot) 
  \times (\NN \xto{l_5} \bot))
  \xto{l_1} \bot$,
wherein
$t_1$ and $t_2$ are
the $\eta$-expansions
$(\lambda^{q'} k.\, \tappl x {l_4} {\coercl{a'+a''}{\lambda^{a'} n.\, \tappl k {l_2} n)}}$
and
$(\lambda^{q''} k.\, \tappl x {l_4} {\coercr{a'+a''}{\lambda^{a''} n.\, \tappl k {l_3} n)}}$
respectively.
These two $\eta$-expansions come from the CPS-translation of contraction.
Defunctionalization of the term in (\ref{eq:1}) leads to the equations
\begin{align*}
  \Capply_{l_1}(<>, <x,k>) &= \Capply_{q'}(q'(x), {l_2}(x,k))
  \\
  \Capply_{l_2}(l_2(x,k), m) &= \Capply_{q''}(q''(x), l_3(m,k))
  \\
  \Capply_{l_3}(l_3(m,k),n) &= \Capply_{l_5}(k, m+n),
\end{align*}
and the subterms~$t_1$ and~$t_2$ add the following equations:
\begin{equation}
\begin{aligned}
  \label{eq:eta}
  \Capply_{q'}(q'(x), k) &= \Capply_{l_4}(x, \kw{inl}(a'(k)))
  &
  \Capply_{a'}(a'(k), n) &= \Capply_{l_2}(k, n)
  \\
  \Capply_{q''}(q''(x), k) &= \Capply_{l_4}(x, \kw{inr}(a''(k)))
  &
  \Capply_{a''}(a''(k), n) &= \Capply_{l_3}(k, n)
  \\
  \Capply_{a'+a''}(k, n)  
  &=
  \tmcase k {f_1} {\Capply_{a'}(f_1, n)} 
            {f_2} {\Capply_{a''}(f_2, n)} 
\end{aligned}
\end{equation}
In order to understand how this program works, it is perhaps again
useful to apply the above term to the argument
$<\lambda^{l_4} k.\, \tappl k {a' + a''} 42,\, \lambda^{l_5} n.\, \texttt{print\_int}(n)>$.
Defunctionalization then yields two additional equations.
\begin{align*}
  \Capply_{l_4}(<>, k) 
  &=  \Capply_{a' + a''}(k, 42)
  \\
  \Capply_{l_5}(<>, n) &= \texttt{print\_int}(n)
\end{align*}
Note how this program computes the result in the same way as 
the one obtained by Int-interpretation above.
Both programs have the same skeleton.
The points corresponding to the $\Capply$-equations are
labelled in the Int-interpretation above.
Notice in particular how the equations~(\ref{eq:eta}) that come from the 
$\eta$-expansion in the CPS-translation of contraction correspond
to the isomorphisms added by the interpretation of 
rule \R{contr} in the Int-interpretation.
\end{example}

\begin{proposition}
  \label{prop:skeletonstl}
  For any\/ $\SeqTm{\Gamma}{t}{X}$ derivable in \stlexp,
  there exists a target program\/ $\semc{\SeqTm{\Gamma}{t}{X}}$ that is a representative 
  of the Int-interpretation of the derivation of the sequent and
  that has the same skeleton as\/ $\kw{CpsDefun}(\SeqTm{\Gamma}{t}{X})$.
\end{proposition}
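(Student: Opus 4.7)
The plan is to extend the inductive definition of $I(t)$ from the proof of Proposition~\ref{prop:skeleton} to cover the two new rules \R{contr} and \R{struct}, so that $\semc{\SeqTm{\Gamma}{t}{X}} := (x^-\gamma^+, I(t), x^+\gamma^-)$ is a representative of the Int-interpretation and has the same skeleton as $\kw{CpsDefun}(\SeqTm{\Gamma}{t}{X})$. The labelling of $\cps t$ on which $\kw{CpsDefun}$ depends is supplied by Lemma~\ref{lem:cpsstl}, so for each new case we only have to produce equations whose first-level structure (labels, entry/exit dispatch, case-vs.-plain form) mirrors what defunctionalization produces on the CPS-translated side.

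For \R{contr} the CPS-translation introduces two copies of $\eta(x,\overline X)$ substituted for $y$ and $z$, which by Lemma~\ref{lem:eta} can be labelled uniformly so that the type annotation of~$x$ becomes $\overline X[q, a'_1+a'_2]$. Defunctionalizing these $\eta$-expansions yields exactly the same collection of equations as in the example just above Proposition~\ref{prop:skeletonstl}: two families of equations $\Capply_{a'_i}(\ldots)=\ldots$ that forward to the~$y$- resp.~$z$-side, and one dispatch equation $\Capply_{a'_1+a'_2}(k,n)=\kw{case}\ k\ \kw{of}\ f_1\Rightarrow \Capply_{a'_1}(f_1,n);\,f_2\Rightarrow \Capply_{a'_2}(f_2,n)$. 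On the Int side, the interpretation of \R{contr} in Figure~\ref{fig:stlexpint} implements the isomorphism $(A+B)\times C \iso A\times C+B\times C$ by a case analysis on the first component of the incoming message. I will define $I(t[x/y,x/z])$ to consist of $I(t)$ (with the premise's labels for~$y$ and~$z$ renamed to fit the new interface) together with exactly the dispatch equations produced by the isomorphism. This gives the same skeleton as the defunctionalized term, because the $\eta$-expansion equations on the CPS side and the isomorphism equations on the Int side have the same labels, same entry/exit labels, and the same plain/case shape, differing only in the expressions carried in the arguments.

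For \R{struct} there is no counterpart in the CPS-translation, so $\kw{CpsDefun}$ of the conclusion is literally $\kw{CpsDefun}$ of the premise. On the Int side, the interpretation inserts target-expression witnesses $s$ and $r$ of $A\sleq B$ on the wires carrying the subexponential. Rather than add fresh function labels for these coercions (which would break the skeleton), I will inline $s$ and $r$ into the existing equations of $I(t)$ from the induction hypothesis: in each definition $f(x)=g(e)$ whose argument type mentions the modified subexponential, replace the relevant sub-expression by $r$ applied to it, and dually replace the corresponding sub-expression of $e$ by applying $s$. Because the skeleton notion constrains only labels and the plain/case form, not the expressions themselves, this produces a program with the same skeleton as the premise's, and hence as $\kw{CpsDefun}(\SeqTm{\Gamma}{t}{X})$; soundness of the resulting program as a representative of the Int-interpretation follows from $r\circ s = \id$ on values of type~$A$.

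The remaining cases \R{ax}, \R{num}, \R{$\to$i}, \R{$\to$e}, \R{add}, \R{if}, and the rules for product types go through verbatim as in the proof of Proposition~\ref{prop:skeleton}, once one observes that allowing label \emph{terms} in place of single labels in the interfaces does not alter any of the constructions there. The main obstacle is really the contraction case: one has to pin down the renaming of the premise's labels for~$y$ and~$z$ so that the equations coming from the two $\eta$-expansions on the CPS side line up one-for-one with the two branches of the isomorphism dispatch on the Int side, and so that the resulting labelled $\cps{t[x/y,x/z]}$ is the one produced by Lemma~\ref{lem:cpsstl}. This is a bookkeeping exercise made routine by the uniform labelling provided by Lemma~\ref{lem:eta}.
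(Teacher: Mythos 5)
Your proposal matches the paper's own proof: induction on the \stlexp derivation reusing the construction of $I(t)$ from Proposition~\ref{prop:skeleton}, with contraction as the essential new case, handled by observing that the defunctionalization of the two labelled $\eta$-expansions from Lemma~\ref{lem:eta} (the $\kw{inl}$/$\kw{inr}$-tagged query equations, the case dispatch at $a'_1+a'_2$, and the forwarding equations) has the same skeleton as a suitably chosen representative of the Int-interpretation of \R{contr}. The only difference is that you treat \R{struct} explicitly (inlining the coercion witnesses $s$ and $r$ so that only expressions, not the skeleton, change), which the paper leaves implicit by calling contraction the only new case --- a harmless, indeed slightly more careful, addition.
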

\begin{proof}
  The proof goes by induction on the derivation, just as for
  Proposition~\ref{prop:skeleton}.
  The only new case is that for contraction.
  To handle this case, consider the defunctionalization
  of the two $\eta$-expansions
  $\SeqTm{\I x {\overline X[q, a'_1 + a'_2]}}{t_1}{\overline X[q_1, a_1]}$
  and
  $\SeqTm{\I x {\overline X[q, a'_1 + a'_2]}}{t_2}{\overline X[q_2, a_2]}$
  from Lemma~\ref{lem:eta}.
  Let us write $q_1(i)$ for the $i$-th label term in the list~$q_1$,
  and likewise for the other lists.
  Observe that the defunctionalization of the terms~$t_1$ and~$t_2$
  yield equations of the following shape
  for all possible indices~$i$:
  \begin{align*}
    \Capply_{q_1(i)}(-, -) &= \Capply_{q(i)}(-, \kw{inl}(-))\\
    \Capply_{q_2(i)}(-, -) &= \Capply_{q(i)}(-, \kw{inr}(-))\\
    \Capply_{(a_1' + a_2')(i)}(x, -) &= \tmcase x y {\Capply_{a_1(i)}(y, -)} z
   {\Capply_{a_2(i)}(z, -)}\\
   \Capply_{a_1'(i)}(-, -) &= \Capply_{a_1(i)}(-, -)
  \end{align*}

  These equations have the same skeleton as an appropriate choice of
  equations for the Int-interpretation of rule \R{contr}.
  One can thus choose a representative of the Int-interpretation 
  having the same skeleton as the program obtained from CPS-translation
  and defunctionalization.
\end{proof}

The proposition establishes a relation of the Int-interpretation and
CPS-translation followed by defunctionalization for terms typeable in \stlexp.
An obvious question is how much of \textsc{stl} is covered by this result.
In the rest of this section we show that with rule \R{struct} and 
recursive types in the target language, in fact any \textsc{stl}-term 
can be typed in \stlexp.

We first give an example to show how recursive target types appear in 
the two translations.
When we discussed defunctionalization for \textsc{stl}, we have
already remarked that recursive types are needed in the target
language to treat the full simply-typed $\lambda$-calculus.
The following example (i)~illustrates the use of 
\R{struct} and recursive types in \stlexp; and (ii)~shows that
recursive types may appear even in the defunctionalization of the
simply-typed $\lambda$-calculus.

\begin{example}
  \label{ex:rec}
An example that illustrates why without recursive types in the target 
language not every \textsc{stl}-term would be typeable in \stlexp
is the application $t\ s$, where
$t=\lambda g.\, g\ (\lambda x.\, g\ (\lambda y.x))$
and
$s=\lambda f.\, f\ (f\ (\lambda x.x))$.
The terms~$t$ and~$s$ can be
given types
$(\unit +\alpha)\cdot (\alpha\cdot(\alpha\cdot X \lollipop X) \lollipop X) \lollipop X$
and
$(\unit+\beta)\cdot (\beta\cdot X \lollipop X) \lollipop X$
respectively, for a certain type~$X$.
In these types the subexponential annotations have been simplified
using only the isomorphism $(-)\times \unit \iso (-)$, which can be used by
means of rule \R{struct}.

Without recursive types in the target language, the application $t\ s$
could not be typed, as this would require us to unify
$(\unit +\beta)\cdot (\beta\cdot X \lollipop X) \lollipop X$
with
$\alpha\cdot(\alpha\cdot X \lollipop X) \lollipop X$,
which would require unifying $\beta$ and $\unit +\beta$. 
With recursive types, however, we can simply
let $B := \mu \beta.\,\unit +\beta$ and instantiate the type
variable~$\beta$ to be~$B$.
Since we have $\unit + B \lhd B$, we can 
use rule \R{struct} to give~$s$ the type
$B\cdot (B\cdot X \lollipop X) \lollipop X$
and with this give $t\ s$ the type~$X$ .

It is interesting to note that $\kw{CpsDefun}$ maps $t\ s$ to a 
program that also uses recursive types.
An annotation of the CPS-translation of $t\ s$ in the labelled version
of \textsc{stl} is:
\begin{align*}
    \cps t &= \lambda^{l_1}<g,k>.\, 
      \tappl g {l_4} 
      {<\coercl{}{\lambda^{l_2}<x,k_1>.\, \tappl g {l_4} {<\coercr{}{\lambda^{l_3}<y, k_3>.\, \tappl x {l_5+l_6} {k_3}}, k_1>}}, k>}
  \\
  \cps s &= \lambda^{l_4} <f,k>.\, 
     \tappl f {l_2+l_3} 
     { <\coercl{}{\lambda^{l_5} k_2 .\, \tappl f {l_2+l_3} {<\coercr{}{\lambda^{l_6} <x, k_1>.\, \tappl x {l_8} {k_1}}, k_2>}}, 
           k> }           
  \\
  \cps{t\ s} &= \lambda^{l_7} k.\, \tappl{\cps t} {l_1} {<\cps s, k>}
\end{align*}
The types~$\tau_{(-)}$ that appear in the defunctionalization are:
\begin{align*}
  \tau_{l_1} &= \kw{datatype}\ l_1\ \kw{of}\ \unit 
  &
  \tau_{l_2} &= \kw{datatype}\ l_2\ \kw{of}\ \tau_{l_4} \\
  \tau_{l_3} &= \kw{datatype}\ l_3\ \kw{of}\ (\tau_{l_5} + \tau_{l_6}) 
  &
  \tau_{l_4} &= \kw{datatype}\ l_4\ \kw{of}\ \unit \\
  \tau_{l_5} &= \kw{datatype}\ l_5\ \kw{of}\ (\tau_{l_2} + \tau_{l_3})  
  &
  \tau_{l_6} &= \kw{datatype}\ l_6\ \kw{of}\ \unit
\end{align*}
The types $\tau_{l_3}$ and $\tau_{l_5}$ are mutually recursive.

The reader familiar with Game Semantics may recognise the term~$t$ as 
one of the Kierstead terms of order three that is often used to 
illustrate the need for
justification pointers in Hyland-Ong-games. The other Kierstead term
of order three 
$t' = \lambda g.\, g\ (\lambda x.\, g\ (\lambda y.y))$ can be given type
$(\unit +\alpha)\cdot (\alpha\cdot(\kw{unit}\cdot X \lollipop X) \lollipop X)
\lollipop X$. With this term it \emph{is} possible to give a type to $t'\ s$ 
without recursive types by setting
$\alpha := \kw{unit} + \kw{unit}$ and $\beta := \kw{unit}$.
\end{example}

We end this section by showing that with rule \R{struct} and recursive
types in the target language, \stlexp can indeed type any
\textsc{stl}-term.
Suppose $\SeqTm{\Gamma}{t}{X}$ is a typing judgment of
\stlexp.
Write $|X|$ and $|\Gamma|$ for the type and
context of \textsc{stl} obtained by removing all subexponential 
annotations, i.e.~replacing any $\lolli A Y Z$ with $Y\to Z$ and 
removing subexponentials in the context.
With this notation we have:
\begin{proposition}
  \label{prop:stl}
  If\/ $\SeqTm{\Gamma}{t}{X}$ is derivable in \textsc{stl},
  then there exist $\Delta$ and $Y$ with
  $\Gamma = |\Delta|$ and $X=|Y|$, such that
  $\SeqTm{\Delta}{t}{Y}$ is derivable in \stlexp.
\end{proposition}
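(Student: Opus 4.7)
The plan is to prove the proposition by induction on the derivation of $\SeqTm{\Gamma}{t}{X}$ in \textsc{stl}, producing a \stlexp-derivation in which every subexponential annotation is a single fixed ``universal'' target type $U$. The first step is to choose $U$ so that all the operations required by the \stlexp rules stay within $U$ up to retraction. Take $U := \mu \alpha.\,\unit + \VN + \alpha \times \alpha + \alpha + \alpha$, which is easily seen (using appropriate $\kw{fold}/\kw{unfold}$ witnesses) to satisfy the retractions $\unit \sleq U$, $\VN \sleq U$, $U \times U \sleq U$, and $U + U \sleq U$. Next, for each \textsc{stl}-type $X$ define $X^U$ by recursion, replacing every arrow $X \to Y$ by $\lolli U {X^U} {Y^U}$ and leaving base types unchanged; likewise, define $\Gamma^U$ by annotating every binding in $\Gamma$ with subexponential $U$. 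The claim to be established by induction is that $\SeqU{\Gamma^U}{t}{X^U}$ is derivable in \stlexp whenever $\SeqTm{\Gamma}{t}{X}$ is derivable in \textsc{stl}.

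With this setup, the induction is uniform. In each step I would apply the inductive hypothesis to the premises to obtain $U$-annotated derivations, and then use \R{struct} together with the closure retractions of $U$ to renormalize the conclusion. In \R{$\to$e}, the \stlexp rule produces a context containing $U \cdot \Delta^U$; I would then apply \R{struct} once per binding, using $U \times U \sleq U$, to coerce every entry back to subexponential $U$. In \R{contr}, the rule yields $\J x {U+U} {X^U}$, which is coerced to $\J x U {X^U}$ via $U + U \sleq U$. In \R{add}, the conclusion contains $\VN \cdot \Delta^U$, whose bindings are coerced using $\VN \sleq U$ followed by $U \times U \sleq U$. In \R{ax}, the axiom provides a binding with subexponential $\unit$, which is coerced to $U$ via $\unit \sleq U$. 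The cases \R{weak}, \R{exch}, \R{$\to$i}, \R{num}, and \R{if} either require no coercion or are immediate, since the rule already preserves the $U$-annotation.

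The main obstacle is the choice of $U$ itself. The closure properties $U + U \sleq U$ and $U \times U \sleq U$ cannot be satisfied by any finite target type, since repeated applications of these constructors would grow the type unboundedly; this is precisely the phenomenon that Example~\ref{ex:rec} exhibits for Kierstead-style terms, where unifying an argument subexponential $\alpha$ with an enlarged $\unit + \beta$ forces a recursive solution. The recursive target type in the definition of $U$ is what absorbs these unbounded combinations, and checking that the four retractions are witnessed by concrete target expressions is the one point requiring real care. Once $U$ is chosen and its closure properties are established, the induction produces the desired \stlexp-derivation in a mechanical manner, with \R{struct} as the essential device that keeps every binding at subexponential $U$ after each rule application.
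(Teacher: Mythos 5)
Your proposal is correct, but it takes a genuinely different route from the paper. The paper proves the proposition by first deriving, with \R{struct}, annotation-generalised versions of \R{ax}, \R{$\to$e}, \R{add} and \R{contr} whose context annotations are fresh type variables $\alpha_i$ constrained by side conditions $A \sleq \alpha_i$; it then builds a derivation skeleton for $t$ using these rules and solves the resulting constraint system by setting $\alpha := \mu\alpha.\, A_1 + \dots + A_n$ for all constraints sharing the same variable. That argument is essentially a type-inference procedure (adapted from the one the paper cites for the \textsc{logspace} setting) and produces term-specific, comparatively precise subexponential annotations. You instead fix one universal recursive type $U$ closed up to retraction under $\unit$, $\VN$, $\times$ and $+$, translate every source type uniformly by putting $U$ on every arrow, and renormalise with \R{struct} after each rule; uniformity is what makes the \R{$\to$e} and \R{contr} cases match up without any unification, which is exactly where non-uniform annotations force the recursive-type phenomenon of Example~\ref{ex:rec}. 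This is in effect the ``$\kw{Heap}$'' choice that the paper itself sketches in its conclusion for separate compilation. What each approach buys: the paper's proof doubles as an inference algorithm and yields finer annotations (important for the efficiency motivation behind subexponentials, and reducing to non-recursive solutions when they exist), whereas yours is shorter and more uniform but maximally coarse, boxing all environment data into a single recursive type. Your argument does rely on a few routine facts you should state: that $\sleq$ is transitive and a congruence for $\times$ (to get $\VN \times U \sleq U$ from $\VN \sleq U$ and $U \times U \sleq U$), and the explicit $\kw{fold}/\kw{unfold}$ witnesses for the four retractions of $U$; both are easy, and the \stlexp rules \R{weak}, \R{exch}, \R{$\to$i}, \R{num}, \R{if} indeed go through unchanged as you claim, so there is no gap.
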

\begin{proof}
  Using rule \R{struct}, the following rules are derivable.
    \begin{center}
      \begin{tabular}{cc}
        \AxiomC{\phantom{X}}
        \LeftLabelSc{ax}
        \RightLabel{$\unit \lhd \alpha_1$}
        \UnaryInfC{$ \SeqU{\J x {\alpha_1} X}{x}{X} $}
        \bottomAlignProof
        \DisplayProof
      \end{tabular}
    \end{center}
    \begin{center}
      \begin{tabular}{cc}
        \AxiomC{$ 
        \SeqU{\Gamma}{s}{\lolli A X Y}
        $}
        \AxiomC{$ 
        \SeqU{\J {x_1} {A_1} {X_1},\dots,  \J {x_n} {A_n} {X_n}}{t}{X}
        $}
        \LeftLabelSc{$\lollipop$e}
        \RightLabel{$ 
            \begin{array}{l} 
                A\times A_1\lhd\alpha_1,\dots,\\
                A\times A_n\lhd\alpha_n
             \end{array}$}
        \BinaryInfC{$ \SeqU{\Gamma,\, \J {x_1} {\alpha_1} {X_1},\dots,  \J {x_n} {\alpha_n} {X_n}}{s\ t}{Y} $}
        \bottomAlignProof
        \DisplayProof
      \end{tabular}
    \end{center}
    \begin{center}
      \begin{tabular}{cc}
        \AxiomC{$ \SeqU{\Gamma }{s}{\NN} $}
        \AxiomC{$ \SeqU{\J {x_1} {A_1} {X_1},\dots,  \J {x_n} {A_n} {X_n}}{t}{\NN} $}
        \LeftLabelSc{add}
        \RightLabel{$ 
            \begin{array}{l} 
                \VN\times A_1\lhd\alpha_1,\dots,\\
                \VN\times A_n\lhd\alpha_n
             \end{array}$}
        \BinaryInfC{$ \SeqU{\Gamma,\, \J {x_1} {\alpha_1} {X_1},\dots,  \J {x_n} {\alpha_n} {X_n}}{s+t}{\NN} $}
        \bottomAlignProof
        \DisplayProof
      \end{tabular}
    \end{center}
    \begin{center}
      \begin{tabular}{cc}
        \AxiomC{$ \SeqU{\Gamma,\, \J y A X,\, \J z B X}{t}{Z} $}
        \LeftLabelSc{contr}
        \bottomAlignProof
        \RightLabel{$ (A+B)\lhd \alpha_1 $}
        \UnaryInfC{$ \SeqU{\Gamma,\, \J {x} {\alpha_1} {X}}{t[x/y, x/z]}{Z} $}
        \bottomAlignProof
        \DisplayProof
      \end{tabular}
    \end{center}
    We only need \R{struct} to derive these rules.

  If we use these derived rules with fresh target  type variables for the~$\alpha_i$ 
  and disregard the $\lhd$-side-conditions for
  now, then together with the 
  unchanged rules \R{weak}, \R{exch} \R{$\to$i}, \R{if},
  \R{num}, we can construct a skeleton of a typing derivation
  for~$t$. This exists because~$t$ is typeable in \textsc{stl}.

  To make this into a proper \stlexp type derivation, it just remains to solve
  all the $\lhd$-constraints. 
  The constraints all have the form $A\lhd \alpha$, i.e.~the
  right-hand side of any constraint is a type variable.
  With recursive types, it is easy to solve such constraints:
  Let $A_1\lhd \alpha,\dots,A_n\lhd \alpha$ be all constraints
  with~$\alpha$ on the right-hand side. A solution for it is
  $\alpha := \mu \alpha.\, A_1 + \dots + A_n$. In this way, we can solve the
  constraints for the type variables one after the other and so obtain
  a correct typing derivation.
\end{proof}
We note that the proof provides a simple type inference procedure for
\stlexp. 
It is adapted from the simple type inference algorithm in~\cite{aplas10}.
Since~\cite{aplas10} is concerned with \textsc{logspace}-computation, 
recursive types are not allowed there, and the constraints
are solved by trying to unify $\alpha$ with $ A_1 + \dots + A_n$ instead of
setting $\alpha := \mu \alpha.\, A_1 + \dots + A_n$. 
This simple heuristic does not work for all \textsc{stl} terms and
we need to allow recursive types to prove the above proposition in general.

\section{Recursion}
\label{sect:fix}

We conclude by explaining how the Int-interpretation and the
subexponential annotations can be extended to handle 
the fixed point combinator of PCF.

Subexponential annotations for the fixed-point combinator can be given
by
\begin{prooftree}
  \AxiomC{\phantom{X}}
  \LeftLabelSc{fix}
  \UnaryInfC{
      $\kw{fix}_{X}\colon (\tlist A) \cdot (\lolli A X X) \to X$
    }
\end{prooftree}
where $\tlist A$ abbreviates $\mu \alpha.\, \unit +
A \times \alpha$. 

The Int-interpretation of this term can be defined as follows:
\begin{center}
  \vspace{1em}

  \includegraphics{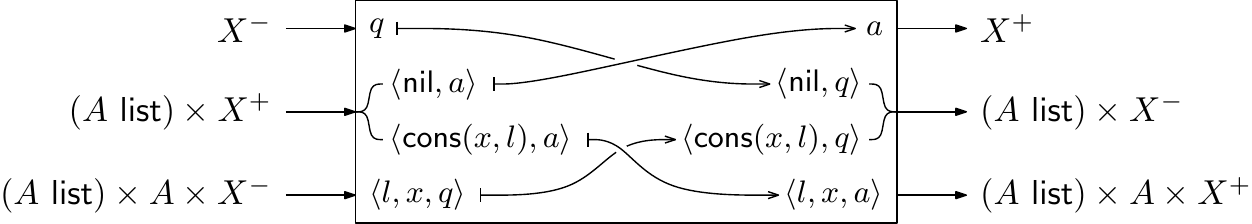}

  \vspace{1em}
\end{center}
Here, lists are used to implement a call stack.
The function that we take the fixed point of has type $A\cdot X \lollipop X$.
This tells us that it needs to store a value of type~$A$ whenever it 
requests its argument. Thus, an activation record should be a stack of values
of type~$A$, which we encode as a list.

The appearance of the type $(\tlist A)$ can also be explained from
the subexponential type system.
Clearly, the fixpoint combinator should have type
$\kw{fix}\colon B \cdot (A\cdot X \lollipop X) \lollipop X$
for some~$B$. It should be defined to satisfy the equation
$\kw{fix}\ f = f\ (\kw{fix}\ f)$.
Consider typing judgements for the two terms in this equation.
For the left-hand term we have
$\SeqU{\J f {B} {(A\cdot X \lollipop X)}}{\kw{fix}\ f}{X}$.
As~$f$ appears twice in the right-hand term, we must use contraction
to type it:
$\SeqU{\J f {(1+A\times B)} {(A\cdot X \lollipop X)}}
{f\ (\kw{fix}\ f)}{X}$.
Notice now that we can give both terms the same type (using
\R{struct}) if we can solve the type equation $B\iso 1+A\times B$.
We are thus naturally lead to choosing $B := \tlist A$.
That case distinction appears in the above implementation of the
fixed-point combinator is due to the duplicated use of the variable~$f$. 

The Int-interpretation implements recursion in a similar way as 
CPS-translation and defunctionalization. The CPS-translation of
the fixed point combinator is
\[
  \lambda^{q_1} <f,k>.\, 
  \kw{fix}_{\overline X}\ (\lambda^{q_4} g.\, \lambda k_1.\,f\ <\lambda^{q_3} k_2.\,g\ k_2,\, \lambda^{q_2} x.\, k_1\ x>)\ k
  \enspace.
\]
A possible defunctionalization (without using control flow information)
of this term is:
\begin{align*}
  \Capply(q_1(), <f,k>) &= \Capply(q_4(f), k) 
  &
  \Capply(q_2(k_1), x) &= apply(k_1, x)
  \\
  \Capply(q_3(g), k_2) &= apply(g, k_2)
  &
  \Capply(q_4(f), k_1) &= \Capply(f, <q_3(q_4(f)), q_2(k_1)>)
\end{align*}
Informally, the first three definitions to correspond to the inputs of
the Int-interpretation above. A call to $\Capply(q_1(), <f,k>)$ starts
the recursion, a call to $\Capply(q_2(k_1), x)$ corresponds to
the step function (that the fixed point is taken of) returning a result,
and a call to $\Capply(q_3(g), k_2)$ corresponds to the step function
requesting its argument. The final equation does not contribute to the
external interface of the program and is used to implement the fixed point.
The call stack, which above is encoded using lists, appears more
implicitly in the continuations here.

\section{Conclusion}

We have observed that the non-standard compilation methods based on
computation by interaction are closely related to CPS-translation and
defunctionalization.  The interpretation in an interactive model may
be regarded as a simple direct description of the combination of
CPS-translation, defunctionalization and a final optimisation of
arguments.  It may be seen as a 
simple nameless formulation of a combined CPS-translation and
defunctionalization and it provides an alternative way of encoding 
continuations.

We have seen in this paper that working out the
technical details of defunctionalization with explicit labels can
become quite technical. The interactive interpretation admits 
a high-level description that abstracts from implementation details.
Interactive model constructions may perhaps be
useful in simplifying uses of defunctionalization.
In the other direction,
being aware that interactive models are related to standard
compilation methods may help to improve non-standard compilation
schemes based on interactive methods, such as~\cite{Ghica07,intml}. We
may hope that some of the many existing techniques for compiler
implementation can be adapted usefully to the non-standard schemes.

Types with subexponential annotations, in this form originally introduced in IntML~\cite{intml}, 
provide a logical account for the issues of managing value
environments that are inherent to defunctionalization.
With subexponential annotations, the type of a higher-order term fully 
specifies the interface of the target program obtained from it.
The type system contains enough information in order to give a fully
compositional definition of the translation to the target language.

The subexponential type system makes explicit issues 
that appear with defunctionalization and separate compilation. 
For example, in order to suppose we want to compile a function $f\colon \lolli A X Y$
separately from the rest of the program. Then one may compile
the main program $\SeqTm{\J f B {(\lolli A X Y)}}{t}{Z}$ 
and $\SeqTm{}{f}{\lolli A X Y}$ separately. A linker can combine the
resulting two programs knowing only their types.
Of course, the problems associated with defunctionalization and separate
compilation do not just disappear. 
Suppose, for example, we only know the term~$f$, but not the
program~$t$ in which it will be used.
Suppose further that $X$ has the form $\lolli C \NN \NN$. Then 
the choice of the subexponential annotation~$C$ will limit 
which arguments~$f$ can be applied to; $f$ can only be applied to arguments
of type $\lolli D \NN \NN$ with $D \lhd C$. Choosing~$C$ without
knowledge of~$t$ is possible, for example, if 
the target language has a type $\kw{Heap}$ 
with $D\lhd \kw{Heap}$ for any type~$D$ (anything can be stored on the
heap). Then one may simply choose $C$ to be $\kw{Heap}$. 
This is not the only possible choice; for performance reasons
one may consider performing a more precise analysis in a linker or similar.
The point is that the subexponential type system allows us to express
such issues at a high level and to apply different possible solutions.
Another example for this point is the explanation of the appearance
of recursive target types in Section~\ref{sect:stl},
which was given in terms of subexponential type annotations.

Subexponentials refine the exponentials in AJM
games~\cite{AbramskyJM00}, where $!X$ is implemented using
$\omega\cdot X$, where~$\omega$ is a type of unbounded natural numbers.
If we had used full exponentials in the Int-interpretation 
above, then we would have obtained a compilation method that encodes
function values as values in~$\omega$, which is akin to storing closures on the
heap. Subexponentials give us more control to avoid such encodings where
unnecessary.
The subexponential type system in this paper has
its origin in Bounded Linear Logic~\cite{girardscedrovscott,Schopp07}.
It is also similar
to the type system for Syntactic Control of Concurrency (SCC)~\cite{DBLP:journals/tcs/GhicaMO06}.
A main
difference appears to be that while SCC controls
the number of program threads,
subexponentials account for both
the threads and their local data.

The observation that there is a connection between 
game models and continuations is not new. It appears, for example,
in Levy's work on a jump-with-argument calculus~\cite{cbpv}
and in Melli{\`e}s work on tensorial logic~\cite{mellies12}.
Connections of game models to compilation have also been made, e.g.~\cite{mellies}.
Furthermore, it is well-known that continuation passing is related to 
message passing, see e.g.~\cite{thieleckePhd}.
However, we 
are not aware of work that makes explicit a connection to defunctionalization.

We believe that the connection between game models and machine languages 
deserves to be better known and studied further.
The call traces in this paper, for example, should 
have the same status as plays in Game Semantics.
This suggests that techniques from
Game Semantics help to analyse the possible traces 
of compiled machine code.
For Java-like languages, there is recent work that connects 
fully abstract trace semantics~\cite{DBLP:conf/esop/JeffreyR05} and
Game Semantic models~\cite{DBLP:conf/popl/MurawskiT14}.
We hope that similar connections can be identified for the traces of 
machine code generated by compilers.

Work on concurrency and process calculus emphasises the interactive nature of 
computation. 
Milner's translation of the $\lambda$-calculus in the 
$\pi$-calculus \cite{DBLP:journals/mscs/Milner92} can be understood
as a CPS-translation~\cite{DBLP:journals/mscs/Sangiorgi99}. 
This connection was made more than once: see
\cite[\S10]{DBLP:journals/mscs/Sangiorgi99} for historical notes.
The interactive model that we have studied in this paper can be seen
as a very static form of communicating processes,
without process mobility or channel reconfiguration.
In Milner's translation, channel names identify continuations and these
are passed around explicitly. We have seen in this paper that 
by taking into account control flow information, it is possible
to avoid passing continuations names, as they can be determined statically.
It is an interesting direction for further work to find out if
Milner's translation relates to CPS-translation and defunctionalization
without control flow information. 
In this direction, it may perhaps be possible to connect to 
the interesting results of Berger~et~al.~\cite{BergerHY01}.

In further work, we should like to understand possible connections 
to Danvy's work on defunctionalized interpreters~\cite{danvy},
or more generally to work on continuations and abstract machines, 
e.g.~\cite{DanosHerbelinRegnier,DBLP:journals/jfp/StreicherR98}. 
A relation is not obvious: Danvy considers the 
defunctionalization of particular implementations of interpreters,
while here we show that the whole compilation itself may be
described extensionally by the Int construction. 

In another direction, an interactive view of CPS-translation and
defunctionalization may also help in identifying mathematical structure
of efficient compilation methods. In particular, capturing call-by-value 
defunctionalizing compilation, perhaps similar to~\cite{DBLP:conf/esop/CejtinJW00},
should be interesting. Other interesting issues are 
efficient separate compilation and polymorphism: 
the interpretation in $\Int\TT$ is compositional
and polymorphism can also be accounted for~\cite{Schopp07}.

Finally, this paper clarifies the definition of IntML~\cite{intml}, which
was introduced to capture \textsc{logspace}. In~\cite{intml} we 
observed that IntML supports control operators, such as~$\kw{callcc}$,
but their status remained somewhat unclear. It now turns out that
the $\kw{callcc}$ combinator of~\cite{intml} 
may be understood as the defunctionalization of a standard
CPS implementation of $\kw{callcc}$. 

\paragraph{Acknowledgements.} 
I would like to thank the anonymous referees for their constructive
feedback and suggestions, which helped to improve the presentation of the results.

\bibliographystyle{plain}
\bibliography{defun}

\begin{thebibliography}{10}

\bibitem{DBLP:journals/mscs/AbramskyHS02}
Samson Abramsky, Esfandiar Haghverdi, and Philip~J. Scott.
\newblock Geometry of interaction and linear combinatory algebras.
\newblock {\em Mathematical Structures in Computer Science}, 12(5):625--665,
  2002.

\bibitem{AbramskyJM00}
Samson Abramsky, Radha Jagadeesan, and Pasquale Malacaria.
\newblock Full abstraction for {PCF}.
\newblock {\em Information and Computation}, 163(2):409--470, 2000.

\bibitem{DBLP:books/cu/Appel1992}
Andrew~W. Appel.
\newblock {\em Compiling with Continuations}.
\newblock Cambridge University Press, 1992.

\bibitem{banerjee}
Anindya Banerjee, Nevin Heintze, and Jon~G. Riecke.
\newblock Design and correctness of program transformations based on
  control-flow analysis.
\newblock In Naoki Kobayashi and Benjamin~C. Pierce, editors, {\em Theoretical
  Aspects of Computer Software, TACS 2001}, volume 2215 of {\em Lecture Notes
  in Computer Science}, pages 420--447. Springer, 2001.

\bibitem{BergerHY01}
Martin Berger, Kohei Honda, and Nobuko Yoshida.
\newblock Sequentiality and the $\pi$-calculus.
\newblock In Samson Abramsky, editor, {\em Typed Lambda Calculi and
  Applications, TLCA 2001}, volume 2044 of {\em Lecture Notes in Computer
  Science}, pages 29--45. Springer, 2001.

\bibitem{DBLP:journals/apal/Blass92}
Andreas Blass.
\newblock A game semantics for linear logic.
\newblock {\em Annals of Pure and Applied Logic}, 56(1-3):183--220, 1992.

\bibitem{DBLP:conf/esop/CejtinJW00}
Henry Cejtin, Suresh Jagannathan, and Stephen Weeks.
\newblock Flow-directed closure conversion for typed languages.
\newblock In Gert Smolka, editor, {\em European Symposium on Programming, ESOP
  2000}, volume 1782 of {\em Lecture Notes in Computer Science}, pages 56--71.
  Springer, 2000.

\bibitem{intml}
Ugo {Dal Lago} and Ulrich Sch{\"o}pp.
\newblock Functional programming in sublinear space.
\newblock In Andrew~D. Gordon, editor, {\em European Symposium on Programming,
  ESOP 2010}, volume 6012 of {\em Lecture Notes in Computer Science}, pages
  205--225. Springer, 2010.

\bibitem{aplas10}
Ugo {Dal Lago} and Ulrich Sch{\"o}pp.
\newblock Type inference for sublinear space functional programming.
\newblock In Kazunori Ueda, editor, {\em Asian Symposium on Programming
  Languages and Systems, APLAS 2010}, volume 6461 of {\em Lecture Notes in
  Computer Science}, pages 376--391. Springer, 2010.

\bibitem{DanosHerbelinRegnier}
Vincent Danos, Hugo Herbelin, and Laurent Regnier.
\newblock Game semantics and abstract machines.
\newblock In {\em Logic in Computer Science, LICS 1996}, pages 394--405. {IEEE}
  Computer Society, 1996.

\bibitem{danvy}
Olivier Danvy.
\newblock Defunctionalized interpreters for programming languages.
\newblock In James Hook and Peter Thiemann, editors, {\em International
  Conference on Functional Programming, ICFP 2008}, pages 131--142. ACM, 2008.

\bibitem{DBLP:conf/tgc/FredrikssonG12}
Olle Fredriksson and Dan~R. Ghica.
\newblock Seamless distributed computing from the geometry of interaction.
\newblock In Catuscia Palamidessi and Mark~Dermot Ryan, editors, {\em
  Trustworthy Global Computing, TGC 2012}, volume 8191 of {\em Lecture Notes in
  Computer Science}, pages 34--48. Springer, 2012.

\bibitem{DBLP:conf/lics/FredrikssonG13}
Olle Fredriksson and Dan~R. Ghica.
\newblock Abstract machines for game semantics, revisited.
\newblock In {\em Logic In Computer Science, LICS 2013}, pages 560--569. IEEE
  Computer Society, 2013.

\bibitem{Ghica07}
Dan~R. Ghica.
\newblock Geometry of synthesis: a structured approach to {VLSI} design.
\newblock In Martin Hofmann and Matthias Felleisen, editors, {\em Principles of
  Programming Languages, POPL 2007}, pages 363--375. ACM, 2007.

\bibitem{DBLP:journals/tcs/GhicaMO06}
Dan~R. Ghica, Andrzej~S. Murawski, and C.-H.~Luke Ong.
\newblock Syntactic control of concurrency.
\newblock {\em Theoretical Computer Science}, 350(2-3):234--251, 2006.

\bibitem{DBLP:conf/icfp/GhicaSS11}
Dan~R. Ghica, Alex~I. Smith, and Satnam Singh.
\newblock Geometry of synthesis {IV}: compiling affine recursion into static
  hardware.
\newblock In Manuel M.~T. Chakravarty, Zhenjiang Hu, and Olivier Danvy,
  editors, {\em International Conference on Functional Programming, ICFP 2011},
  pages 221--233. ACM, 2011.

\bibitem{Girard89}
Jean-Yves Girard.
\newblock Towards a geometry of interaction.
\newblock In J.~W. Gray and A.~Scedrov, editors, {\em Categories in Computer
  Science and Logic}, pages 69--108. American Mathematical Society, 1989.

\bibitem{girardscedrovscott}
Jean-Yves Girard, Andre Scedrov, and Philip~J. Scott.
\newblock Bounded linear logic: a modular approach to polynomial-time
  computability.
\newblock {\em Theoretical Computer Science}, 97:1--66, 1992.

\bibitem{Hasegawa09}
Masahito Hasegawa.
\newblock On traced monoidal closed categories.
\newblock {\em Mathematical Structures in Computer Science}, 19(2):217--244,
  2009.

\bibitem{DBLP:conf/lics/HasuoH11}
Ichiro Hasuo and Naohiko Hoshino.
\newblock Semantics of higher-order quantum computation via geometry of
  interaction.
\newblock In {\em Logic in Computer Science, LICS 2011}, pages 237--246. IEEE
  Computer Society, 2011.

\bibitem{DBLP:conf/lics/HofmannS97}
Martin Hofmann and Thomas Streicher.
\newblock Continuation models are universal for lambda-mu-calculus.
\newblock In {\em Logic in Computer Science, LICS 1997}, pages 387--395. IEEE
  Computer Society, 1997.

\bibitem{DBLP:conf/fpca/HylandO95}
J.~M.~E. Hyland and C.-H.~Luke Ong.
\newblock Pi-calculus, dialogue games and {PCF}.
\newblock In {\em Functional Programming Languages and Computer Architecture,
  FPCA 1995}, pages 96--107, 1995.

\bibitem{HylandOng}
J.~M.~E. Hyland and C.-H.~Luke Ong.
\newblock On full abstraction for {PCF}: {I}, {II}, and {III}.
\newblock {\em Information and Computation}, 163:285--408, December 2000.

\bibitem{DBLP:conf/esop/JeffreyR05}
Alan Jeffrey and Julian Rathke.
\newblock Java jr: Fully abstract trace semantics for a core {J}ava language.
\newblock In Shmuel Sagiv, editor, {\em European Symposium on Programming, ESOP
  2005}, volume 3444 of {\em Lecture Notes in Computer Science}, pages
  423--438. Springer, 2005.

\bibitem{Joyal96}
Andr{\'e} Joyal, Ross Street, and Dominic Verity.
\newblock Traced monoidal categories.
\newblock {\em Mathematical Proceedings of the Cambridge Philosophical Societ},
  119(3):447--468, 1996.

\bibitem{cbpv}
Paul~Blain Levy.
\newblock {\em Call-By-Push-Value: A Functional/Imperative Synthesis}, volume~2
  of {\em Semantics Structures in Computation}.
\newblock Springer, Berlin, Heidelberg, 2004.

\bibitem{lorenzen1961dk}
Paul Lorenzen.
\newblock {Ein dialogisches Konstruktivitatskriterium}.
\newblock {\em Infinitistic Methods}, 1961.

\bibitem{DBLP:conf/popl/Mackie95}
Ian Mackie.
\newblock The geometry of interaction machine.
\newblock In Ron~K. Cytron and Peter Lee, editors, {\em Principles of
  Programming Languages, POPL 1995}, pages 198--208. ACM, 1995.

\bibitem{mellies12}
Paul-Andr{\'e} Melli{\`e}s.
\newblock Game semantics in string diagrams.
\newblock In {\em Logic in Computer Science, LICS 2012}, pages 481--490. IEEE,
  2012.

\bibitem{mellies}
Paul-Andr{\'e} Melli{\`e}s and Nicolas Tabareau.
\newblock An algebraic account of references in game semantics.
\newblock {\em Electronic Notes in Theoretical Computer Science}, 249:377--405,
  2009.

\bibitem{DBLP:journals/mscs/Milner92}
Robin Milner.
\newblock Functions as processes.
\newblock {\em Mathematical Structures in Computer Science}, 2(2):119--141,
  1992.

\bibitem{DBLP:journals/apal/MurawskiT13}
Andrzej~S. Murawski and Nikos Tzevelekos.
\newblock Full abstraction for {R}educed {ML}.
\newblock {\em Annals of Pure and Applied Logic}, 164(11):1118--1143, 2013.

\bibitem{DBLP:conf/popl/MurawskiT14}
Andrzej~S. Murawski and Nikos Tzevelekos.
\newblock Game semantics for interface middleweight java.
\newblock In Suresh Jagannathan and Peter Sewell, editors, {\em Principles of
  Programming Languages, POPL 2014}, pages 517--528. ACM, 2014.

\bibitem{pierce}
Benjamin~C. Pierce.
\newblock {\em Types and programming languages}.
\newblock MIT Press, 2002.

\bibitem{plotkin}
Gordon~D. Plotkin.
\newblock Call-by-name, call-by-value and the lambda-calculus.
\newblock {\em Theoretical Computer Science}, 1(2):125--159, 1975.

\bibitem{reynolds}
John~C. Reynolds.
\newblock Definitional interpreters for higher-order programming languages.
\newblock In {\em Proceedings of the ACM annual conference - Volume 2}, ACM
  '72, pages 717--740. ACM, 1972.

\bibitem{DBLP:journals/mscs/Sangiorgi99}
Davide Sangiorgi.
\newblock From lambda to pi; or, rediscovering continuations.
\newblock {\em Mathematical Structures in Computer Science}, 9(4):367--401,
  1999.

\bibitem{Schopp07}
Ulrich Sch{\"o}pp.
\newblock Stratified bounded affine logic for logarithmic space.
\newblock In {\em Logic in Computer Science, LICS 2007}, pages 411--420. IEEE,
  2007.

\bibitem{DBLP:journals/toplas/ShaoA00}
Zhong Shao and Andrew~W. Appel.
\newblock Efficient and safe-for-space closure conversion.
\newblock {\em ACM Transactions on Programming Languages and Systems},
  22(1):129--161, 2000.

\bibitem{DBLP:journals/jfp/StreicherR98}
Thomas Streicher and Bernhard Reus.
\newblock Classical logic, continuation semantics and abstract machines.
\newblock {\em Journal of Functional Programming}, 8(6):543--572, 1998.

\bibitem{thieleckePhd}
Hayo Thielecke.
\newblock {\em Categorical Structure of Continuation Passing Style}.
\newblock PhD thesis, The University of Edinburgh, 1997.

\bibitem{DBLP:conf/esop/YoshimizuHFL14}
Akira Yoshimizu, Ichiro Hasuo, Claudia Faggian, and Ugo~Dal Lago.
\newblock Measurements in proof nets as higher-order quantum circuits.
\newblock In Zhong Shao, editor, {\em European Symposium on Programming, ESOP
  2014}, volume 8410 of {\em Lecture Notes in Computer Science}, pages
  371--391. Springer, 2014.

\end{thebibliography}
\end{document}